\theoremstyle{thmstyleone}%
\newtheorem{theorem}{Theorem}
\newcommand{\mbf}{\mathbf}
\newcommand{\m}[1]{\mbf{#1}}
\newcommand{\jac}{\m J}
\newcommand{\beq}{\begin{equation}\begin{aligned}}
\newcommand{\eeq}{\end{aligned}\end{equation}}
\newcommand{\D}{\mbf{\hat D}}
\renewcommand{\u}{\mbf u}
\newcommand{\Db}{\mathbb D}
\newcommand{\linop}{\jac - \mu  \D}
\newcommand{\ak}[1]{#1}
\newtheorem{lemma}{Lemma}
\DeclareMathOperator{\diag}{diag}
\theoremstyle{thmstyletwo}%
\newtheorem{remark}{Remark}%
\theoremstyle{thmstylethree}%
\begin{document}

\title[Designing instabilities of reaction-cross-diffusion systems]{Designing reaction-cross-diffusion systems with Turing and wave instabilities}


\author[1]{\fnm{Edgardo} \sur{Villar-Sep\'ulveda}}\email{edgardo.villar-sepulveda@bristol.ac.uk}

\author[1]{\fnm{Alan R.} \sur{Champneys}}\email{a.r.champneys@bristol.ac.uk}

\author[2]{\fnm{Andrew L.} \sur{Krause}}\email{andrew.krause@durham.ac.uk}


\affil[1]{{Engineering Mathematics}, {University of Bristol}, {{Ada Lovelace Building, Tankard's Cl, University Walk}, {Bristol}, {BS8 1TW}, United Kingdom }}
\affil[2]{Mathematical Sciences Department, Durham University,
Upper Mountjoy Campus, Stockton Rd,
Durham DH1 3LE, United Kingdom }




\abstract{General conditions are established under which reaction-cross-diffusion systems can undergo spatiotemporal pattern-forming instabilities. Recent work has focused on designing systems theoretically and experimentally to exhibit patterns with specific features, but the case of non-diagonal diffusion matrices has yet to be analysed. Here, a framework is presented for the design of general $n$-component reaction-cross-diffusion systems that exhibit Turing and wave instabilities of a given wavelength. For a fixed set of reaction kinetics, it is shown how to choose diffusion matrices that produce each instability; conversely, for a given diffusion tensor, how to choose linearised kinetics. The theory is applied to several examples including a hyperbolic reaction-diffusion system, two different 3-component models, and a spatio-temporal version of the Ross-Macdonald model for the spread of malaria.}

\keywords{Reaction-diffusion; Diffusion-driven instability; Spatiotemporal oscillations; Turing instability; wave instability}


\pacs[MSC Classification]{35K57, 35B36}

\maketitle

\section{Introduction}\label{sec1}
    Spontaneous pattern formation in spatially extended systems has been widely studied following Turing's 1952 paper \cite{Turing1952} and remains an area of substantial current work both theoretically \cite{cross1993pattern, Anotida,FahadWoods,villar-sepulveda,Avitabile,paquin-lefebvre, krause2021modern, al2021unified}, and in a variety of applied contexts \cite{kondo2010reaction, maini2012turing, painter2021systems, ramos2024parsing}. Substantial emphasis has been placed on 2-component reaction-diffusion equations, as these are the first and one of the simplest models of pattern formation \cite{Murray2001}. Extensions to three and more component systems have also been carried out, largely using linear stability theory and direct numerical simulations \cite{Yang2,yochelis-knobloch,anma2012unstable,zheng2016identifying, kuznetsov2022robust}. Importantly, there has been substantially less work on systems with more than three species despite their ubiquity in many areas of application (e.g.~gene regulatory networks responsible for key aspects of embryological development can include thousands of genes \cite{johnson2009functional}); see \cite{satnoianu, marcon2016high, scholes2019comprehensive} for examples of theoretical studies of $n$-component reaction-diffusion systems.
    
    In recent work, the first two authors developed general conditions under which an $n$-component reaction-diffusion system can undergo Turing or wave instabilities \cite{villar-sepulveda}, generalizing ideas from the 3-component case \cite{anma2012unstable}. Given the linearized reaction kinetics of the system, the authors determined if there exists a diffusion tensor such that the system can admit Turing or wave instabilities, and if so provided a procedure for constructing a diagonal diffusion tensor for which these instabilities of specified wavelengths would emerge. These ideas are, in a sense, a generalization of the large-domain approximation used in the two-component case to determine `Turing conditions' on pattern formation which are a function only of kinetic parameters and diffusion rates (see e.g.~\cite[Chapter 2]{Murray2001}).

    However, studying the stability of more general reaction-transport systems will not necessarily lead to diagonal diffusion tensors after linearization. Instead, it may give rise to a range of more complex linear operators. One well-studied class of such systems comprises reaction-cross-diffusion systems involving fluxes of some components influencing the growth rates of others. Such systems are becoming increasingly well-studied, particularly since the pioneering work of Keller and Segel \cite{Keller-Segel} as well as Shigesada, Kawasaki, and Teramoto \cite{shigesada1979spatial}; see \cite{lou, epstein-cross, gambino2012turing, breden2019influence, taylor2020non, ritchie2022turing, gaffney2023spatial} for some recent examples of reaction-cross-diffusion systems. Most of this work focuses again on the two or three-component cases, but there are no general results, to our knowledge, on linear instabilities for $n$-component systems (along the same lines as \cite{villar-sepulveda,satnoianu} for reaction-diffusion equations).

    Part of the reason for this is that reaction-cross-diffusion systems are not as restrictive as classical reaction-diffusion systems in terms of conditions for pattern formation; for example, cross-diffusion terms can induce Turing instabilities in systems with two self-inhibiting components, even if both components' self-diffusion coefficients are equal. Another reason is that reaction-diffusion systems with cross-diffusion can be mathematically ill-posed in several cases, leading to finite-time singularities; see e.g.~\cite{lankeit2020facing}. At the fundamental level of diffusion processes, the thermodynamic foundation for such non-diagonal diffusion tensors is also problematic, see e.g.~\cite{mendez2010reaction, klika2018beyond}. Nevertheless, many reaction-cross-diffusion systems arise as models in applied mathematics, and our goal here is to generalize the notion of Turing conditions for pattern formation to incorporate cross-diffusion. In particular, we want to study when such a system can admit Turing and/or wave instabilities. Within that mathematical modelling setting, we ask how one could design diffusion matrices or linearized reaction kinetics to admit such instabilities. Our objective is to give conditions under which designing unstable systems is possible, alongside particular approaches to doing so.

    Specifically, we shall consider either fixing the linearized kinetics of the system and varying the diffusion tensor or fixing the diffusion tensor and varying the linearized kinetics. Assuming stability to homogeneous perturbations (i.e.~that the system's linearized kinetics are stable in the absence of transport), we then ask how to design the remaining free matrix (the diffusion tensor or the linearized kinetics) so that the resulting reaction-diffusion system becomes unstable around its steady state after the addition of transport; that is so that the system undergoes a Turing or a wave instability. In the end, this can be seen as a matter of designing a reaction-diffusion system, subject to constrained transport or constrained kinetics, so that it generates diffusion-driven instabilities under the presence of cross-diffusion. Through examples, we will also explore situations where one does not have complete freedom over either of these matrices and must design the system more carefully. Such questions of design principles are becoming increasingly important both in synthetic biology \cite{diambra2015cooperativity,karig2018stochastic, santos2019using}, as well as in materials science and other areas \cite{tanaka2023turing,luo2024wrinkled}. Previous theoretical work on such design principles has largely focused on reaction-diffusion systems of 2 components \cite{vittadello2021turing,woolley2021bespoke,leyshon2021design}, so the key novelty in this paper is to consider $n$-component reaction-cross-diffusion systems as well as to allow for designing wave instabilities, in addition to Turing instabilities.

   \subsection{Statement of the main results}
        For simplicity of presentation, we will focus on reaction-cross-diffusion systems on the real line, hence not being concerned with boundary conditions or geometry, as long as the eigenvalues of the Laplacian on the domain are sufficiently well-behaved. We note that this is essentially without loss of generality \ak{in terms of problems posed on Riemannian manifolds with Neumann or periodic boundary conditions}; see \cite{Murray2001, krause2021modern} for discussion of this point in general, and \cite[Remark 1]{villar-sepulveda} in particular. We do note that the use of Neumann or periodic boundary conditions can have important consequences for wave instabilities as these can give rise to different kinds of patterns, \textit{travelling waves} or \textit{standing waves} \cite{villar-sepulveda,Villar-sepulveda-wave,knobloch}, \ak{and that other boundary conditions can induce a variety of behaviours \cite{maini1997boundary, krause2021isolating}}. Given a fixed geometry and boundary conditions, one can simply scale the free matrix discussed below to fit a domain of a particular geometry, in any spatial dimension, assuming the Laplace (or more generally, the Laplace-Beltrami operator) spectrum is known and well-behaved \ak{(i.e.~consisting solely of point eigenvalues)} for the given boundary conditions.
    
        We pose a general reaction-cross-diffusion system as:
        \begin{align}
            \partial_t \u = \mbf f(\u) + \partial_x \left(\m{D}(\u)  \partial_x \u\right), \label{general_nonlinear_system}
        \end{align}
        where $\u \in \mathbb R^n$, $\mbf f: \mathbb R^n \to \mathbb R^n$, and $\m{D}: \mathbb R^n \to \mathbb R^n \times \mathbb R^n$ is the diffusion tensor of the system, which is not necessarily diagonal. We will assume throughout that $n \geq 2$, though note that $n = 2$ does not allow for wave instabilities even with cross-diffusion \cite{ritchie2022turing}. We assume that sufficiently smooth and unique solutions exist, noting that the existence and regularity theory for these systems is much more intricate than for simpler reaction-diffusion models, with blowup and singularities having a significant literature \cite{choi2004existence,le2005regularity, seis-well,choquet-well}; see \cite{lankeit2020facing} for an introductory review to these complexities and their analysis. We will henceforth neglect details regarding well-posedness and tacitly assume that everything at the nonlinear level is well-behaved.
        
        We assume that \cref{general_nonlinear_system} admits a spatially homogeneous steady state $\mbf P$ such that $\mbf f(\mbf P) = \mbf 0$, \ak{and that $\mbf D(\mbf P)$ has strictly positive real eigenvalues, and hence is invertible for all the values of its argument}. After linearizing  about this steady state by writing $\u = \mbf P + \varepsilon \mbf u_1$ and keeping terms up to $\mathcal O(\varepsilon)$, we 
        obtain
        \beq
            \partial_t \mbf{u_1} =  \jac \mbf{u_1} + \D  \partial_{xx} \mbf {u_1}, \label{generalsystem}
        \eeq
        where the constant matrices
        $$
            \D = \mbf  D(\mbf P), \qquad \mbox{and} \quad \jac = \partial_{\mbf u} \mbf f (\mbf P),
        $$
        correspond to diffusion and Jacobian matrices, respectively. Henceforth, we will work with these two constant matrices, dropping the explicit dependence on the steady-state $\mbf P$.

        We proceed to solve \cref{generalsystem} in the usual way using separable solutions composed of Laplacian eigenfunctions and solutions in time of the form $\exp(\lambda t)$. This leads to solvability conditions determining growth rates $\lambda$ as solutions to
        \beq \label{general_solvability_condition}
            \det\left(\linop - \lambda \m{I}\right) = 0,
        \eeq
        where $\mu = k^2$ is an eigenvalue of the negative Laplacian, and $k$ is a wavenumber. Instabilities of the full system, \cref{general_nonlinear_system}, correspond to solutions of \cref{general_solvability_condition} with $\Re(\lambda) > 0$. Henceforth, we will assume that $\jac$ is a stable matrix; that is, all of its eigenvalues have strictly negative real parts.

        In the classical two-species case, we need the Jacobian to have the right sign structure, with two positive and two negative entries \cite{Murray2001}. Here, we only assume that it is a non-scalar matrix (i.e.~not a scalar multiple of the identity matrix).
        We then say that the system given by \cref{generalsystem} exhibits a \emph{Turing instability} if there is at least one purely real value of $\lambda > 0$ satisfying \cref{general_solvability_condition}. Also, we say that the system exhibits a \emph{wave instability} if there is at least one complex conjugate pair of $\lambda$ with positive real part and nontrivial imaginary part satisfying \cref{general_solvability_condition}. In both cases, we can then say that the full nonlinear system \cref{general_nonlinear_system} exhibits these instabilities around the steady state $\mbf P$.

        We now state our main results. We consider $\ell$ to be a non-negative integer in what comes below.
        \begin{theorem} \label{th:main}
            Let $\jac$ be a non-scalar stable matrix. Then, there exists a diagonalizable diffusion matrix $\D$ such that \cref{generalsystem} exhibits a Turing instability. Furthermore, if $n \geq 3$ and $\jac$ has at least $1 \leq \ell \leq 3$ distinct eigenvalues with at least $3 - \ell$ corresponding generalized eigenvectors, then there is also a diagonalizable diffusion matrix $\D$ such that \cref{generalsystem} exhibits a wave instability.
        \end{theorem}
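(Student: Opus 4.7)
The plan is a construct-then-perturb approach for both halves of the statement. For each type of instability I first design a diffusion matrix $\D_0$ so that $\jac - \mu^* \D_0$ has a marginal eigenvalue on the imaginary axis at some critical wavenumber $\mu^* > 0$ (at $\lambda = 0$ for Turing, at $\lambda = \pm i\omega$ with $\omega \neq 0$ for wave), and then show that for $\mu$ slightly larger (or smaller) than $\mu^*$ this marginal eigenvalue has strictly positive real part. Since the wavenumber varies continuously in $[0,\infty)$, this is the required instability. The $\D_0$ will be engineered to be diagonalizable with strictly positive real spectrum; because that is an open condition on matrices, it is also preserved for nearby values of $\mu$.

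For the Turing case, since $\jac$ is non-scalar a short argument shows there exists $\mbf w \in \mathbb R^n$ which is not an eigenvector of $\jac$, so that $\mbf w$ and $\mbf q := (1/\mu^*) \jac \mbf w$ are linearly independent for any chosen $\mu^* > 0$. I extend $\{\mbf w, \mbf q\}$ to a basis of $\mathbb R^n$ and define $\D_0$ block-by-block: on $\mathrm{span}(\mbf w, \mbf q)$ its matrix in the basis $\{\mbf w, \mbf q\}$ is $\bigl(\begin{smallmatrix} 0 & c_1 \\ 1 & c_2 \end{smallmatrix}\bigr)$ with $c_1 < 0$, $c_2 > 0$ and $c_2^2 + 4 c_1 > 0$ (producing two distinct positive eigenvalues), while on the complementary subspace $\D_0$ is positive-definite and diagonal. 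By construction $\D_0 \mbf w = \mbf q$, hence $(\jac - \mu^* \D_0) \mbf w = \mbf 0$, so $\lambda = 0$ satisfies \cref{general_solvability_condition} at $\mu = \mu^*$. The standard first-order perturbation formula gives the slope $\lambda'(\mu^*) = -\tilde{\mbf w}^\top \D_0 \mbf w / \tilde{\mbf w}^\top \mbf w$, where $\tilde{\mbf w}$ is the corresponding left null vector, and by tuning the free scalars $c_1, c_2$ together with the diagonal block on the complement this slope can be made nonzero and of either sign, so that $\lambda > 0$ for $\mu$ on the appropriate side of $\mu^*$.

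For the wave case, the hypothesis on $\ell$ distinct eigenvalues together with $3 - \ell$ further (non-eigenvector) generalized eigenvectors guarantees a three-dimensional $\jac$-invariant subspace $V \subset \mathbb R^n$ on which $\jac|_V$ takes one of three possible real Jordan normal forms: three distinct eigenvalues, or one eigenvalue with a $2 \times 2$ block and a separate simple eigenvector, or one eigenvalue with a single $3 \times 3$ Jordan block. In a basis of $V$ adapted to this form I treat $\D_0|_V$ as a $3 \times 3$ unknown matrix and impose the three coefficient-matching equations needed to force the characteristic polynomial of $\jac|_V - \mu^* \D_0|_V$ to equal $(\lambda^2 + \omega^2)(\lambda + \alpha)$ for prescribed $\omega \neq 0$ and $\alpha > 0$. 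With nine unknowns and three equations there is substantial freedom, and one exhibits case by case a solution $\D_0|_V$ which is diagonalizable with positive spectrum; taking $V$ also to be $\D_0$-invariant by appending a positive-definite diagonal block on a complement produces a global $\D_0$ for which $\jac - \mu^* \D_0$ has eigenvalues $\pm i\omega$. The same $\mu$-perturbation argument then moves this pair into the open right half-plane.

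The main obstacle, I expect, is the solvability step of the wave case: for each fixed real Jordan form of $\jac|_V$ one must verify that the three coefficient-matching equations can be solved \emph{within} the open semi-algebraic set of $3 \times 3$ matrices that are diagonalizable with strictly positive real spectrum, not merely in $\mathbb R^{3 \times 3}$. This will almost certainly require a separate inspection of the three Jordan cases together with judicious use of the free parameters $\omega$ and $\alpha$. The final $\mu$-perturbation step, by contrast, is routine once the marginal configurations exist, since the relevant eigenvalue derivatives at $\mu = \mu^*$ are generically nonzero and diagonalizability with positive spectrum is an open condition.
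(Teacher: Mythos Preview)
Your Turing construction is different from the paper's and, modulo one loose end, essentially correct. The paper proceeds by finding a similarity $\m Q$ so that $\m Q^{-1}\jac\m Q$ has an unstable principal $1\times 1$ submatrix (via an explicit $2\times 2$ matrix $\m K$ in Lemma~2), then takes $\D=\m Q\,\Db\,\m Q^{-1}$ with one entry of $\Db$ sent to $0$; Lemma~1 shows that eigenvalues of $\jac-\mu\D$ tend to that unstable entry as $\mu\to\infty$, and a small $\delta>0$ relaxation recovers a proper dispersion curve. Your local route---forcing $(\jac-\mu^*\D_0)\mbf w=0$ for a non-eigenvector $\mbf w$ and perturbing in $\mu$---avoids both the $\mu\to\infty$ asymptotics and the explicit $\m K$-matrices, at the price of a transversality check. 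On that point note that $\D_0\mbf w=\mbf q$ is fixed once $\mbf w$ and $\mu^*$ are chosen, so $c_1,c_2$ and the complement block enter $\lambda'(\mu^*)=-\tilde{\mbf w}^\top\mbf q/\tilde{\mbf w}^\top\mbf w$ only through the left null vector $\tilde{\mbf w}$; you should verify (at least for $n=2$, where it is a direct computation) that this is enough freedom to make the slope nonzero and to keep $\lambda=0$ simple.

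Your wave argument has a genuine gap. The assertion that the hypothesis guarantees a three-dimensional real $\jac$-invariant subspace $V$ is false. Take $n=4$ with $\jac$ having a single complex-conjugate pair $\alpha\pm i\beta$, each of algebraic multiplicity $2$ with a size-$2$ Jordan block: then $\ell=2$ distinct eigenvalues, each carrying one generalized eigenvector, so the hypothesis $3-\ell=1$ is met, yet the real $\jac$-invariant subspaces have dimension $0$, $2$, or $4$ only. More generally, whenever every eigenvalue of $\jac$ is non-real, all real invariant subspaces are even-dimensional and your trichotomy of $3\times 3$ real Jordan forms cannot apply. The paper confronts exactly this obstruction in Lemma~3: for $n=4$ it exhibits a separate $4\times 4$ matrix $\m K$ (built around the real Jordan form $\m{C_3}$) whose lower-right $2\times 2$ principal submatrix has complex unstable eigenvalues $-\alpha\pm\alpha i$, and proves $\m K$ similar to $\jac$ via an explicit $\m R$. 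Your scheme can presumably be patched the same way---allow $V$ to be four-dimensional when no three-dimensional one exists and solve four coefficient-matching equations in sixteen unknowns---but as written it does not cover the theorem's hypothesis.
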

        \begin{theorem} \label{th:GIORNO-GIOVANNA}
            Let $\D$ be a non-scalar diagonalizable matrix with strictly positive real eigenvalues. Then, there exists a stable matrix $\jac$ such that \cref{generalsystem} exhibits a Turing instability. Furthermore, if $n \geq 3$, then there exists a stable matrix $\jac$ such that \cref{generalsystem} exhibits a wave instability.
        \end{theorem}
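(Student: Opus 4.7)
My plan is to reduce the problem to diagonal diffusion via a similarity transformation, and then carry out block-embedded constructions—a $2 \times 2$ block for Turing and a $3 \times 3$ block for wave—with the remaining components dispatched by a strongly stable scalar block.

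Since $\D$ is diagonalizable with strictly positive real eigenvalues, I would write $\D = \mbf P^{-1} \mbf \Lambda \mbf P$ with $\mbf \Lambda = \diag(d_1, \ldots, d_n)$ and $d_i > 0$. The change of basis $\mbf u_1 \mapsto \mbf P \mbf u_1$ sends \cref{generalsystem} to $\partial_t \mbf v = \tilde\jac\,\mbf v + \mbf \Lambda\,\partial_{xx}\mbf v$, where $\tilde\jac = \mbf P \jac \mbf P^{-1}$ shares the spectrum of $\jac$ (so stability is preserved) and the dispersion polynomial of \cref{general_solvability_condition} is invariant. It therefore suffices to construct a stable $\tilde\jac$ realising the desired instability for the diagonal $\mbf \Lambda$, then recover $\jac = \mbf P^{-1} \tilde\jac \mbf P$. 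Since $\D$ is non-scalar, $\mbf\Lambda$ has at least two distinct diagonal entries.

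For the Turing case, reorder so that $d_1 \neq d_2$. The classical $2 \times 2$ Turing conditions — $\operatorname{tr}\mbf A < 0$, $\det \mbf A > 0$, and $d_2 A_{11} + d_1 A_{22} > 2\sqrt{d_1 d_2 \det\mbf A}$ — admit solutions whenever $d_1 \neq d_2 > 0$ (choose $\mbf A$ in activator–inhibitor form with the inhibitor having the larger diffusivity, then scale $\det\mbf A$ small). Choose such a stable $\mbf A$ and set $\tilde\jac := \mbf A \oplus (-K\mbf I_{n-2})$ with $K > 0$; this $\tilde\jac$ is stable and its block structure factorises the dispersion polynomial as $\det(\mbf A - \mu\diag(d_1,d_2) - \lambda\mbf I)\prod_{i=3}^{n}(\lambda + K + \mu d_i)$, so the $2\times 2$ Turing instability lifts to the full system.

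For the wave case with $n \geq 3$, select three indices whose restriction $\mbf\Lambda'$ of $\mbf\Lambda$ is non-scalar, which is possible since $\mbf\Lambda$ is non-scalar. The $3 \times 3$ characteristic polynomial of $\mbf J' - \mu \mbf\Lambda'$ is a cubic $\lambda^3 + c_2(\mu)\lambda^2 + c_1(\mu)\lambda + c_0(\mu)$, and wave instability for this block amounts to Routh–Hurwitz positivity at $\mu = 0$ together with a sign change $c_1(\mu^*)c_2(\mu^*) - c_0(\mu^*) < 0$ for some $\mu^* > 0$ while $c_2, c_0$ remain positive. The quantity $c_1 c_2 - c_0$ is a polynomial in $\mu$ of degree at most three whose coefficients depend on the nine entries of $\mbf J'$; the available degrees of freedom suffice to enforce stability at $\mu = 0$ and a dip of $c_1 c_2 - c_0$ below zero at an intermediate $\mu^*$. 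Padding with $-K \mbf I_{n-3}$ extends this to a stable $\tilde\jac$ that inherits the wave instability.

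The main obstacle is the $3\times 3$ wave construction in the degenerate case where $\mbf\Lambda'$ has only two distinct eigenvalues (say multiplicities $1$ and $2$), which is unavoidable, for instance, when $\D$ has a single simple eigenvalue differing from an $(n-1)$-fold one. Classical wave-instability examples assume three distinct diffusion rates, so securing both Routh–Hurwitz positivity at $\mu = 0$ and the required sign change in $c_1 c_2 - c_0$ for some $\mu > 0$ in the degenerate case is delicate; a perturbation or continuity argument from a nearby three-distinct-eigenvalue configuration, coupled with an explicit choice of $\mbf J'$, is likely needed, and this is the technical heart of the result.
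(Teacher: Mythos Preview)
Your reduction to diagonal diffusion via similarity and the block-embedding strategy (a $2\times 2$ block for Turing, a $3\times 3$ block for wave, padded by $-K\mbf I$) are exactly what the paper does. The Turing part is essentially identical to the paper's \cref{lemma:thm2turing}, and is complete.

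The wave part, however, has a genuine gap. You correctly identify the degenerate case $\mbf\Lambda' = \diag(\sigma,\sigma,\sigma')$ with $\sigma \neq \sigma'$ as the crux, but you do not actually handle it. Your suggested fix---a perturbation or continuity argument from a nearby configuration with three distinct eigenvalues---is not available here: $\D$ is \emph{given}, so you cannot perturb $\mbf\Lambda'$; you may only choose $\mbf J'$. Continuity in the diffusion eigenvalues would require you to first construct a $\mbf J'$ that works uniformly for all nearby $\mbf\Lambda'$, which is precisely the missing ingredient. The paper's \cref{lemma:thm2wave} resolves this directly: it writes down an explicit one-parameter family
\[
\mbf S^* = \begin{pmatrix}
b & -b & c y - b(2b+c)\\
b & b & bc + y(2b+c) + z\\
1 & -1 & -(2b+c)
\end{pmatrix},
\]
checks Routh--Hurwitz stability at $\mu=0$, and then shows by an asymptotic analysis in the regime $b \gg 1$ that the cubic in $\mu$ governing the Hopf crossing has a positive real root, assuming only $\sigma_\ell < \sigma_{\ell+2}$ (so the equal-pair case is covered). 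The sign computations for the discriminant and the coefficient $a_1$ in that limit are the technical heart you flagged but did not supply. Without an explicit $\mbf J'$ and a verification that the sign change of $c_1 c_2 - c_0$ actually occurs in the degenerate case, your wave argument is a plan rather than a proof.
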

        In all of these cases, we give a constructive proof for finding such matrices to induce specific instabilities and provide insights into how to tune parameters to design them to exhibit specific unstable wavelengths. We remark that having matrices inducing these instabilities then allows one to find bifurcation curves where marginal stability is crossed (i.e.~where $\max_\mu(\Re(\lambda)) = 0$ and a transversality condition is satisfied). One can then make use of amplitude equation formalisms to study properties of the solutions near these bifurcation curves \cite{TOMS,Villar-sepulveda-wave}, and we will explore this in the first example at the end of the paper.


        The key idea of the proof for \cref{th:main} follows the methods introduced in \cite{villar-sepulveda}. That is, by constructing a diagonalizable diffusion tensor, the eigenvalue problem in \cref{general_solvability_condition} is reduced to the analysis of a Jacobian matrix that looks like the ones treated in \cite{villar-sepulveda}, with the eigenvectors of the diffusion matrix allowing us to effectively control the linearization of the system. We then construct an unstable principal submatrix and set some eigenvalues of the diffusion matrix to zero to `pick out' this submatrix of the (non-scalar) Jacobian matrix, which is guaranteed to destabilize the full system in the limit as $\mu \to \infty$. This effectively allows us to get unstable growth rates $\lambda$ with $\Re(\lambda) > 0$ as $\mu \to \infty$, while all other growth rates will tend to $- \infty$ in this limit. In the end, this gives a way of constructing a suitable matrix $\D$ so that the linear system exhibits the desired instability.
        
        Next, one introduces a small positive parameter $\delta$ that `relaxes' these zero diagonal elements to be small positive numbers and, by continuity, arrives at a system with the correct dispersion relation; that is, there is a finite interval of $\mu$ for which $\Re(\lambda) > 0$, ensuring the desired instability. Turing or wave instabilities can be selected based on the imaginary part of these eigenvalues (chosen via the construction of the unstable principal submatrix above), and the range of $\mu$ exhibiting an instability can be tuned arbitrarily (see \cref{lemma:positiveeig} and \cref{app:useful_results}). The proof of \cref{th:GIORNO-GIOVANNA} is qualitatively similar but proceeds more directly. In particular, we design the Jacobian matrix to ensure the existence of a $2$ or $3$-component subsystem corresponding to at least two distinct diffusion rates to generate Turing or wave instabilities.

        \cref{fig:explanation} illustrates the key idea of the constructions for the case of a Turing instability. We construct a principal unstable submatrix that has the largest positive eigenvalue indicated by the green line. When $\delta = 0$, a dispersion curve of the full system (shown in orange) tends asymptotically to this value as $\mu \to \infty$. We then increase $\delta$ to make the diffusion matrix nonsingular. This leads to the blue curves, which have the appropriate form of dispersion relation to ensure a Turing instability with a finite maximally unstable mode. The red curve illustrates the critical value of $\delta > 0$, which gives this marginal stability criterion such that the dispersion curve is tangent to the horizontal axis. This represents the $\delta$-value at which a Turing bifurcation occurs. The blue dispersion curves thus represent a homotopy between the red and orange curves, parametrized by $\delta$. The proofs below will construct this homotopy explicitly, which will be illustrated in the examples.

        \begin{figure}
            \centering
            \includegraphics[width = \linewidth]{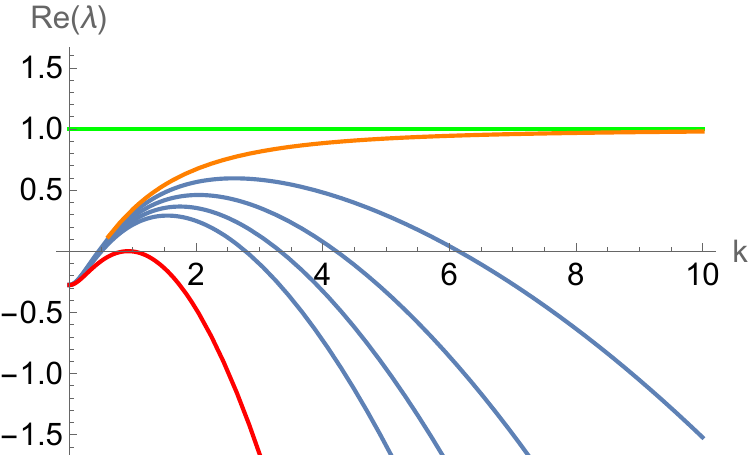}
            \caption{An example set of dispersion curves (i.e.~plots of $\Re\left(\lambda\left(k^2\right)\right) = \Re(\lambda(\mu))$) with different curves parameterized by a small constant diffusion parameter $\delta$. When $\delta = 0$ there exists a branch of the dispersion relation that follows the orange curve which has a prescribed growth rate in the limit as $k \to \infty$ that is given by the green line. For small discrete positive values of $\delta$, the blue curves are obtained and, as $\delta$ becomes larger, one eventually reaches the red curve, marking the boundary of the instability region, precisely where a bifurcation occurs.}
            \label{fig:explanation}
        \end{figure}
        
        Wave instabilities can also be designed similarly. Essentially, we have the same form of dispersion relations as illustrated in \cref{fig:explanation}, but with each curve now corresponding to the real part of a complex conjugate pair of $\lambda$ for each $k$. Transitions between Turing and wave instabilities as $k$ changes can also give rise to more intricate dispersion relations (see, for example, \cref{fig:disprelFHN}).

        Note that the constructions we use for the proofs of \cref{th:main} lead to matrices $\D$ that have strictly positive real-part eigenvalues. However, it should be noted that such matrices may not be symmetric, nor satisfy other functional thermodynamic constraints that are required for a particular application. Similarly, the Jacobian matrix constructed in the proof of \cref{th:GIORNO-GIOVANNA} will be of a particular form, which may not satisfy particular physical constraints on the dependent variables.  Nevertheless, the constructive proofs we develop are minimal in the sense that they provide exactly the core features of the matrices needed for the instabilities. Thus, in practice, one is left with many free parameters that can be chosen to satisfy additional modelling constraints.

    \subsection{Outline}
        The rest of the paper is outlined as follows: \cref{sec:proof1-general} is devoted to proving \cref{th:main} using ideas that largely extend \cite{villar-sepulveda}. \cref{sec:GIORNO-GIOVANNA} proves \cref{th:GIORNO-GIOVANNA} by considering the converse case of freedom in the Jacobian matrix, rather than the diffusion matrix exploited before. 
        We give examples of these theorems, and the particular constructions used to determine the instability-inducing matrices in \cref{sec:examples}, making use of a few additional results on matrix design collected in \cref{app:useful_results}. Finally, \cref{sec:conclusions} closes the article with some general remarks and possible extensions of the ideas presented here.
    
\section{Proof of \cref{th:main}} \label{sec:proof1-general}
    To prove \cref{th:main}, we will build some key lemmas extending those in \cite{villar-sepulveda} to the case of a cross-diffusion matrix (that is, for $\D$ being in general non-diagonal). A key step will be to construct a diagonalizable diffusion matrix $\D$ in order to use previous results applicable only for diagonal diffusion matrices, which we illustrate in the proof of \cref{lemma:positiveeig} below. Throughout, we will use subscripts $i_\ell$ to denote sequences of positive integer indices. We also define a matrix to be \emph{S-stable} if all of its principal submatrices are stable. Throughout the paper, the eigenvalues of $\D$ will be called $\sigma_i$, for $i = 1, \ldots, n$.

    \begin{lemma} \label{lemma:positiveeig}
        Let $\D$ be diagonalizable matrix; specifically, there exists a non-singular matrix $\m{Q}$ such that $\Db := \m{Q}^{- 1} \D \m{Q}$ is a diagonal matrix. Suppose further that $\D$ has $\ell$ eigenvalues $\sigma_{i_1}, \ldots, \sigma_{i_\ell}$ equal to zero, while all others are positive. Then, there exist $\ell$ eigenvalues of $\linop$ that tend to the eigenvalues of the principal submatrix of $\m{Q}^{- 1} \jac \m{Q}$ that comprises its $i_1^{\rm th}, \ldots, i_\ell^{\rm th}$ rows and columns as $\mu \to \infty$. Meanwhile, the real parts of the eigenvalues associated with the principal submatrix of $\m{Q}^{- 1} \jac \m{Q}$ that excludes these rows and columns all tend to $- \infty$ as $\mu \to \infty$.
    \end{lemma}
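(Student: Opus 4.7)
The plan is to reduce the cross-diffusion setting to an already-treated diagonal one by a similarity transformation, and then to isolate the $\ell$ bounded eigenvalues using a Schur complement. Since $\D = \m Q \Db \m Q^{- 1}$, conjugation yields
\begin{equation}
    \m Q^{- 1}(\jac - \mu \D)\m Q = \tilde \jac - \mu \Db, \qquad \tilde \jac := \m Q^{- 1}\jac \m Q,
\end{equation}
and similarity preserves the spectrum, so it suffices to study the eigenvalues of $\tilde \jac - \mu \Db$. A further permutation similarity that sorts the $\ell$ zero eigenvalues of $\Db$ to the top-left brings the matrix to block form
\begin{equation}
    \tilde \jac - \mu \Db = \begin{pmatrix} \mbf A & \mbf B \\ \mbf C & \mbf E - \mu \mbf \Sigma \end{pmatrix},
\end{equation}
where $\mbf A$ is the $\ell \times \ell$ principal submatrix of $\tilde \jac$ indexed by $i_1, \ldots, i_\ell$, $\mbf E$ is the complementary $(n - \ell) \times (n - \ell)$ principal submatrix, and $\mbf \Sigma$ is diagonal with entries equal to the strictly positive eigenvalues of $\D$.

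For the $n - \ell$ ``large'' eigenvalues arising from the bottom-right block, I would note that because $\mbf \Sigma$ is positive definite, the eigenvalues of $\mbf E - \mu \mbf \Sigma$ satisfy $\lambda_k = - \mu \sigma_{j_k} + \mathcal O(1)$ (by a direct Gershgorin estimate or a resolvent expansion), so their real parts tend to $- \infty$. Consequently, for any compact set $K \subset \mathbb C$ there is a $\mu_0$ such that $\mbf E - \mu \mbf \Sigma - \lambda \m I$ is invertible uniformly for $\lambda \in K$ whenever $\mu \geq \mu_0$, with an inverse of norm $\mathcal O(1/\mu)$ by a standard Neumann-series bound.

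With that control in place, the Schur complement identity gives
\begin{equation}
    \det(\linop - \lambda \m I) = \det(\mbf E - \mu \mbf \Sigma - \lambda \m I)\, \det\!\left(\mbf A - \lambda \m I - \mbf B\,(\mbf E - \mu \mbf \Sigma - \lambda \m I)^{- 1}\mbf C\right).
\end{equation}
Since $(\mbf E - \mu \mbf \Sigma - \lambda \m I)^{- 1}$ vanishes uniformly on compact $\lambda$-sets as $\mu \to \infty$, the second factor converges to $\det(\mbf A - \lambda \m I)$, and continuous dependence of polynomial roots on their coefficients forces $\ell$ eigenvalues of $\linop$ to converge to the spectrum of $\mbf A$, which is precisely the principal submatrix of $\tilde \jac$ indexed by $i_1, \ldots, i_\ell$ promised in the statement.

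The main technical point I expect to need care with is the uniform-in-$\lambda$ resolvent estimate above; once that is in hand, the rest reduces to elementary linear algebra and continuity of polynomial roots. A cleaner alternative, which sidesteps the estimate entirely, is to simply invoke the diagonal-diffusion analogue of this lemma from \cite{villar-sepulveda} applied to the pair $(\tilde \jac, \Db)$, since the similarity reduction places us directly within its hypotheses.
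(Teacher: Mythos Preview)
Your proposal is correct and, at its core, uses the same approach as the paper: conjugate by $\m Q$ to reduce $\jac - \mu \D$ to $\tilde\jac - \mu \Db$ with $\Db$ diagonal, and then invoke the diagonal-diffusion result. The paper's proof does exactly and only this --- it performs the similarity reduction you wrote in your first display and then cites \cite[Lemma~5]{villar-sepulveda} for the pair $(\tilde\jac,\Db)$, which is precisely the ``cleaner alternative'' you describe in your final sentence. Your Schur-complement argument is a correct self-contained substitute for that citation; the one imprecision is that the $n-\ell$ large eigenvalues you discuss are those of the full block matrix, not literally of $\mbf E - \mu \mbf\Sigma$, but the rescaling $\lambda = \mu\nu$ (so that $\nu$ is an eigenvalue of $\mu^{-1}\tilde\jac - \Db \to -\Db$) makes the claimed asymptotics $\lambda_k \sim -\mu\sigma_{j_k}$ rigorous.
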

    \begin{proof}
        First, if $\ell = n$, then all the eigenvalues of $\D$ are equal to zero. Therefore, as $\D$ is diagonalizable, this implies it must be equal to the zero matrix, entailing that the eigenvalues of $\linop$ are equal to the eigenvalues of $\jac$ for every $\mu > 0$, which proves the result in this case.
    
        On the other hand, if $1 \leq \ell < n$, we have that the eigenvalues $\lambda_m$ of $\linop$ are determined by:
        \begin{align*}
            \jac \mbf v_m - \mu \D \mbf v_m - \lambda_m \mbf v_m = 0,
        \end{align*}
        where $\mbf v_m$ is a corresponding eigenvector for each $m = 1, \ldots, n$. Now, using that $\D = \m{Q} \Db \m{Q}^{- 1}$, we know that this equation is equivalent to
        \begin{align*}
            \jac \mbf v_m - \mu \m{Q} \Db  \m{Q}^{- 1} \mbf v_m = \lambda_m \mbf v_m,
        \end{align*}
        which implies
        \begin{align}
            \m{Q}^{- 1} \jac \mbf v_m - \mu \Db \m{Q}^{- 1} \mbf v_m = \lambda_m \m{Q}^{- 1} \mbf v_m. \label{eigvalprob}
        \end{align}
        Let $\mbf w_m = \m{Q}^{- 1} \mbf v_m$. Therefore, \cref{eigvalprob} becomes
        \begin{align}
            \m{Q}^{- 1} \jac  \m{Q} \mbf w_m - \mu \Db \mbf w_m = \lambda_m \mbf w_m, \label{eigenvalue_problem}
        \end{align}
        which is equivalent to the linearization of the general system analyzed in \cite{villar-sepulveda}, for $\m{Q}^{- 1} \jac \m{Q}$ instead of $\jac$. This implies that \cite[Lemma 5]{villar-sepulveda} can be used to conclude the proof.
    \end{proof}
    To use the lemma above, we are going to focus on proving the existence of stable matrices $\m{Q}^{- 1} \jac \m{Q}$ with unstable principal submatrices to conclude that the system can exhibit different diffusion-driven instabilities. \cref{lemma:positiveeig} tells us that, to construct the diffusion matrices that allow the appearance of these instabilities, we need to follow two steps: first, we find the matrix $\m{Q}$ of eigenvectors of $\D$ such that $\m{Q}^{- 1} \jac \m{Q}$ has an unstable principal submatrix; second, we change the eigenvalues of $\D$ in order to control the convergence of the eigenvalues of $\linop$ as $\mu \to \infty$ (i.e.~set the corresponding eigenvalues to asymptotically small values to induce the instability). While the case of zero diffusion coefficients corresponding to this principal unstable submatrix will lead to an ill-posed linear system, we can fix this by simply using small positive values instead of zeroes; see \cref{fig:explanation} for an illustration. An important point here is to understand what kind of structure we want to have for the unstable principle submatrix, and then use this structure to design the eigenvectors $\m{Q}$ of the diffusion matrix to ensure that this submatrix appears within $\m{Q}^{- 1} \jac \m{Q}$.
    
    To continue, it is useful to recall a standard result from linear algebra: if two matrices $\m{A}$ and $\m{B}$ have the same Jordan normal form, then $\m{A}$ and $\m{B}$ are similar. That is, there exists a non-singular matrix $\m{R}$ such that $\m{A} = \m{R} \m{B} \m{R}^{- 1}$ \cite{fraleigh}. Furthermore, we recall that if $\m{A}$ is a diagonalizable matrix of size $n \times n$, then $\m{A}$ has only one eigenvalue (with algebraic multiplicity $n$) if and only if $\m{A}$ is a scalar matrix (i.e.~a scalar multiple of the identity). We make use of the following procedure to reduce similarity in general to the question of similarity of principal submatrices arising from Jordan normal forms. We will also use the notation $\m{I_\ell}$ to stand for the $\ell$ by $\ell$ identity matrix, and $\m{0_{i, j}}$ for an $i \times j$ block of zeros. In both cases, these matrices are omitted when any of their sizes is lower than one (i.e.~if $\m{I_\ell}$ appears in an expression, and $\ell \leq 0$, then the matrix is omitted).

    \begin{remark} \label{remark_JNF}
        Let $\jac$ be similar to a matrix with $\ell \geq 1$ square blocks given by
        \begin{align*}
            \m{C} = \mbox{diag}\left(\m{C_1}, \ldots , \m{C_\ell}\right).
        \end{align*}
        Let us assume that there exists $1 \leq j \leq \ell$ such that $\m{C_j}$ is similar to $\m{K}$, with $\m{K}$ being a square matrix of the same size as $\m{C_j}$. This implies that there exists an invertible matrix $\m{R}$ such that $\m{C_j} \m{R} = \m{R} \m{K}$. Therefore, if we define $\m{Q} = \mbox{diag}\left(\m{I_1}, \ldots, \m{I_{j - 1}}, \m{R}, \m{I_{j + 1}}, \ldots, \m{I_\ell}\right)$, then we note that
        \begin{align*}
            \m{C} \m{Q} = \m{Q} \m{M},
        \end{align*}
        where
        \begin{align*}
            \m{M} = \mbox{diag}\left(\m{C_1}, \ldots, \m{C_{j - 1}}, \m{K}, \m{C_{j + 1}}, \ldots, \m{C_\ell}\right),
        \end{align*}
        implying that we can construct similar matrices only by establishing similarities of square blocks in the diagonal of the Jordan matrix of $\jac$.
    \end{remark}
    
    
    We now give a constructive proof for \cref{th:main}. The idea of the proof is to find a matrix $\D$ that ensures that the system exhibits either Turing or wave instabilities. In \cref{lemma:positiveeig}, we show that the existence of an unstable principal submatrix is a sufficient criterion to ensure the appearance of any kind of linear instability. The proof will proceed then, for each kind of instability, by constructing a matrix $\m{K}$ similar to $\jac$ with an unstable principal submatrix. In each case, we proceed by directly writing down a matrix $\m{K}$ with the right unstable submatrix for a particular kind of instability and then showing that this can be made similar to $\jac$ through the correct choice of $\D$. Note throughout that there will be several free parameters at the end of the construction, leaving freedom one could exploit to match any further constraints arising from the modelling context.
    
    \subsection{Turing instabilities} \label{sub:Turing-1}
        We start by providing the following result for 2-component systems.
        \begin{lemma} \label{lemma:n=2}
            Let $n = 2$, and
            $\jac$ be a non-scalar stable matrix. Then, there exists a diffusion matrix $\D$ so that \cref{generalsystem} exhibits a Turing instability.
        \end{lemma}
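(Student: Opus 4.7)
The plan follows the general strategy set out above for \cref{lemma:positiveeig}. I aim to construct the diffusion matrix in the form $\D = \m Q \Db \m Q^{-1}$ with $\Db = \diag(\delta, \sigma_2)$, where $\sigma_2 > 0$ is arbitrary, $\delta \geq 0$ is a small parameter to be selected, and $\m Q$ is chosen so that the $1 \times 1$ principal submatrix $(\m Q^{- 1} \jac \m Q)_{11}$ is strictly positive. Granting such a $\m Q$, \cref{lemma:positiveeig} applied at $\delta = 0$ provides one eigenvalue of $\linop$ tending to this positive number as $\mu \to \infty$, while the other has real part tending to $-\infty$. Turning on a small $\delta > 0$ restores invertibility of $\D$ and, by continuity of the quadratic characteristic polynomial of $\linop$ in $\delta$, bends the first branch back downwards for large $\mu$; as illustrated in \cref{fig:explanation}, the resulting dispersion relation exhibits a finite wavenumber interval on which the real eigenvalue is positive, which is precisely a Turing instability.

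The substantive content is therefore the construction of $\m Q$ with $(\m Q^{- 1} \jac \m Q)_{11} > 0$. My plan is to use \cref{remark_JNF} to reduce this question to a computation on the real Jordan normal form of $\jac$. Since $\jac$ is stable and non-scalar, its real Jordan form falls into three shapes: a real diagonal matrix with two distinct negative eigenvalues, a $2 \times 2$ real block encoding a complex conjugate pair with negative real part, or a non-trivial real Jordan block for a repeated negative eigenvalue. In each case, I would conjugate the Jordan form by an elementary lower-triangular shear with parameter $s$ in position $(2,1)$, using in addition an upper-triangular shear to cover the degenerate subcase in which $\jac$ is already real-diagonal. A short direct computation in each case shows that the $(1,1)$-entry of the resulting matrix is an affine function of $s$ whose slope is non-zero precisely because $\jac$ is non-scalar, so $s$ can be chosen making the entry strictly positive. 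Composing this shear with the similarity that puts $\jac$ into real Jordan form then yields $\m Q$.

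Once such $\m Q$ is available, the final step is the quantitative dispersion analysis. At $\delta = 0$, the two branches of \cref{general_solvability_condition} in the coordinates defined by $\m Q$ are eigenvalues of $\m Q^{- 1} \jac \m Q - \mu \diag(0, \sigma_2)$, and \cref{lemma:positiveeig} guarantees the asymptotes $(\m Q^{- 1} \jac \m Q)_{11} > 0$ and $-\infty$ respectively, so the relevant branch is strictly positive for all sufficiently large $\mu$. For $\delta$ positive but small, a standard continuation argument on the discriminant of the quadratic in $\lambda$ shows the perturbed branch remains real and positive on a finite interval of $\mu$, while eventually becoming negative at large $\mu$. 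The freedoms in $\sigma_2$ and $\delta$ can then be used, as in \cref{app:useful_results}, to tune the location and width of the unstable wavenumber band.

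The main obstacle is confined to the Jordan-form case analysis and, in particular, the subcase where $\jac$ is already real-diagonal: a single lower-triangular shear preserves the $(1,1)$-entry in that case, so one has to combine an upper and lower shear (or equivalently conjugate by a more general $2 \times 2$ non-orthogonal matrix) to achieve a non-constant dependence on the shear parameters. Verifying by direct calculation that the $(1,1)$-entry genuinely sweeps through $\mathbb R$ in each shape is the crux of the argument; the rest follows by routine application of \cref{lemma:positiveeig} and elementary perturbation of the $2 \times 2$ characteristic polynomial.
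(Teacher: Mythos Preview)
Your proposal is correct and follows the same overall strategy as the paper: construct an invertible $\m{Q}$ so that $\m{Q}^{-1}\jac\m{Q}$ has a strictly positive diagonal entry, then invoke \cref{lemma:positiveeig} with the corresponding diagonal entry of $\Db$ set to a small parameter $\delta$. The difference lies only in how $\m{Q}$ is built. The paper avoids any case analysis by writing down a single explicit target matrix
\[
\m{K} = \begin{pmatrix} a+d-1 & ad-bc-a-d+1 \\ -1 & 1 \end{pmatrix},
\]
which shares the trace and determinant of $\jac$ and has $1$ in its $(2,2)$-entry, and then exhibits a two-parameter family of similarities $\m{Q}$ with $\det(\m{Q}) = b\gamma^2 + (a-d)\alpha\gamma - c\alpha^2$, a quadratic form that is non-trivial precisely because $\jac$ is non-scalar. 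Your route via the real Jordan form plus elementary shears reaches the same conclusion but splits into three shapes and, as you correctly flag, requires a two-parameter conjugation in the diagonal subcase. The paper's construction is more uniform and leaves explicit free parameters that are convenient for the later design examples; your approach is a little more conceptual, making it transparent that the only obstruction to obtaining a positive diagonal entry under similarity is $\jac$ being scalar.
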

        \begin{proof}
            Let us denote the elements of $\jac$ by
            \begin{align*}
                \jac = \begin{pmatrix}
                    a & b
                    \\
                    c & d
                \end{pmatrix}.
            \end{align*}
            Consider the following matrix:
            \begin{align*}
                \m{K} &= \begin{pmatrix}
                    a + d - 1 && a \, d - b \, c - a - d + 1
                    \\
                    - 1 && 1
                \end{pmatrix},
            \end{align*}
            which clearly fulfills $\tr(\m{K}) = \tr\left(\jac\right)$ and $\det(\m{K}) = \det\left(\jac\right)$, implying that both matrices have the same eigenvalues.
        
            We now prove that these matrices are similar. In fact, consider
            \begin{align*}
                \m{Q} = \begin{pmatrix}
                    \alpha && (d - 1) \, \alpha - b \, \gamma
                    \\
                    \gamma && (a - 1) \, \gamma - c \, \alpha
                \end{pmatrix},
            \end{align*}
            for $\alpha, \gamma\in \mathbb R$. It is easy to see that
            \begin{align*}
                \jac \m{Q} &= \m{Q} \m{K},
            \end{align*}
            and
            \begin{align*}
                \det(\m{Q}) = b \, \gamma^2 + (a - d) \, \alpha \, \gamma - c \, \alpha^2,
            \end{align*}
            which is either a quadratic or a linear function on $\alpha$ or $\gamma$, as $a\neq d$, $b \neq 0$ or $c\neq 0$. This is true since, as $\jac$ is not a scalar matrix, then the elements in the diagonal must not be equal or the off-diagonal terms must not be zero at the same time. We can then take any pair $(\alpha, \gamma)$ so that $\det(\m{Q}) \neq 0$.
        
            This proves that $\jac$ is similar to $\m{K}$, which is a stable matrix with an unstable principal submatrix. Therefore, by choosing a suitable diagonal matrix $\Db$, we can construct the matrix $\D = \m{Q} \Db \m{Q}^{- 1}$ so that the result holds by using \cref{lemma:positiveeig} (i.e.~by choosing a sufficiently small bottom-right diagonal element to ensure a positive $\lambda$ for some $\mu > 0$, thanks to the positive bottom-right element of $\m{K}$).
        \end{proof}
        Using \cref{remark_JNF}, and the fact that any non-scalar matrix $\jac$ will have a $2\times 2$ non-scalar block in its Jordan Normal Form, we can immediately use \cref{lemma:n=2} to construct a suitable matrix $\D$ to induce Turing instabilities. The similarity matrix $\m{Q}$ constructed in the proof of the Lemma is the key to proving the general result. Note that $\m{K}$ was chosen in order to fulfill the conditions provided in \cref{lemma:positiveeig}, but there are infinitely many other matrices that are similar to $\jac$ and have an unstable principal submatrix.

\subsection{Wave instabilities} \label{sub:proof-wave}
    We will now use similar constructions to demonstrate how to achieve wave instabilities with suitable choices of $\D$.
    \begin{lemma} \label{lemma:wave}
        Let $\jac$ be a non-scalar stable matrix and $n = 3$ or $n = 4$. If $\jac$ has $1 \leq \ell \leq 3$ distinct eigenvalues, then there exists a diffusion matrix $\D$ so that \cref{generalsystem} exhibits a wave instability provided $\jac$ has at least $3 - \ell$ generalized eigenvectors.
    \end{lemma}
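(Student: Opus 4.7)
My plan is to mirror the proof of \cref{lemma:n=2}: construct a matrix $\m{K}$ similar to $\jac$ that contains a principal $2 \times 2$ submatrix of the rotation-like form $\begin{pmatrix} s & -r \\ r & s \end{pmatrix}$ with $s, r > 0$, whose eigenvalues $s \pm i r$ form a complex-conjugate pair with positive real part and nonzero imaginary part. Writing $\m{Q}$ for the similarity $\m{K} = \m{Q}^{-1}\jac\m{Q}$, I would then apply \cref{lemma:positiveeig} with a diagonal $\Db$ whose zero entries sit at the two indices of this unstable submatrix and whose remaining entries are strictly positive; two eigenvalues of $\linop$ then tend to $s \pm i r$ as $\mu \to \infty$, producing a wave instability. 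Relaxing those zero diagonal entries of $\Db$ to small positive values preserves the instability on a finite range of $\mu$ as illustrated in \cref{fig:explanation}, and yields an admissible $\D = \m{Q}\Db\m{Q}^{-1}$ with strictly positive eigenvalues.

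The central task is then to construct such an $\m{K}$ in each case allowed by the hypothesis. By \cref{remark_JNF} it suffices to work on the $3 \times 3$ piece of the Jordan normal form of $\jac$ picking up $\ell$ chosen eigenvalues together with $3 - \ell$ of their generalized eigenvectors. For $\ell = 3$ the piece is $\diag(\lambda_1, \lambda_2, \lambda_3)$; I would take an ansatz for $\m{K}$ placing the rotation-like $2 \times 2$ block in the bottom-right corner, fix the top-left diagonal entry to $\lambda_1 + \lambda_2 + \lambda_3 - 2s$ to enforce the trace, and use the four remaining off-diagonal entries of $\m{K}$ to match the two remaining coefficients of the characteristic polynomial. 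For $\ell = 2$ with at least one generalized eigenvector, the piece contains a $2 \times 2$ Jordan block together with an isolated eigenvalue, and the same trace and characteristic-polynomial matching is carried out against this slightly less-free target. For $\ell = 1$ with two generalized eigenvectors, the piece is a single $3 \times 3$ Jordan block, and the same matching argument still leaves off-diagonal parameters free after pinning the trace. The case $n = 4$ is then handled via \cref{remark_JNF}: the additional $1 \times 1$ block of $\jac$'s Jordan form is left untouched in $\m{K}$ and contributes only a strictly positive extra eigenvalue of $\Db$, i.e., strongly stable modes of $\linop$ that do not interfere with the constructed wave instability.

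The step I expect to be the main obstacle is, in each Jordan case, to exhibit an explicit similarity between the ansatz $\m{K}$ and the prescribed Jordan piece while simultaneously leaving enough freedom in $s$, $r$ and the remaining off-diagonal entries to satisfy the matching equations. For $\ell = 3$ the counting is clearly favourable, but for $\ell = 2$ and especially $\ell = 1$ the rigid structure of Jordan blocks may force certain matching equations to be incompatible under naive placements of the unstable block, so the position of the $2 \times 2$ block within $\m{K}$ may need to be chosen carefully. I anticipate that the hypothesis of ``at least $3 - \ell$ generalized eigenvectors'' is precisely what ensures enough non-semisimple structure in the Jordan piece to realise a complex-conjugate pair as eigenvalues of a principal submatrix of a similar matrix: with fewer generalized eigenvectors the piece would either reduce to a scalar block (ruled out by the non-scalar hypothesis on $\jac$) or leave too little off-diagonal room for the construction. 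Once $\m{K}$ and $\m{Q}$ have been written down explicitly, the remaining arguments proceed just as in \cref{lemma:n=2}.
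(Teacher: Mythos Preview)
Your overall strategy matches the paper's: build $\m{K}$ similar to $\jac$ containing a $2\times 2$ principal submatrix with complex unstable eigenvalues, then invoke \cref{lemma:positiveeig}. The paper likewise fixes the bottom-right block as $\begin{pmatrix}1&-1\\1&1\end{pmatrix}$, matches the characteristic polynomial via the top row, and for the repeated-eigenvalue cases $\ell=1,2$ explicitly solves $\m{C}_i\m{R}=\m{R}\m{K}$ and checks $\det\m{R}\neq 0$, which is exactly where the generalized-eigenvector hypothesis enters.

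However, your treatment of $n=4$ has a genuine gap. You assert that one can always split off a $1\times 1$ block of the Jordan form and apply the $3\times 3$ construction to the remainder. This fails when $\jac$ has a \emph{repeated} complex-conjugate pair of eigenvalues: the real Jordan form then consists of $2\times 2$ rotation blocks (possibly with a $4\times 4$ generalized block), and there is no $3\times 3$ real invariant piece to which \cref{remark_JNF} can be applied. This is precisely the case for which the paper introduces a separate $4\times 4$ construction, building an explicit $\m{K}$ with a $2\times 2$ unstable complex block in the bottom-right and matching it to the real Jordan form $\m{C_3}=\begin{pmatrix}\alpha&-\beta&\ell&0\\ \beta&\alpha&0&\ell\\ 0&0&\alpha&-\beta\\ 0&0&\beta&\alpha\end{pmatrix}$ via a similarity matrix whose determinant is nonzero only when $\ell=1$, i.e.\ when the required generalized eigenvectors are present. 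Without this separate argument your proof does not cover all $n=4$ cases permitted by the hypothesis.
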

    \begin{proof}        
        We first consider the case $n = 3$. Let us denote the elements of $\jac$ by
        \begin{align*}
            \jac = \begin{pmatrix}
                a & b & c
                \\
                d & e & f
                \\
                g & h & i
            \end{pmatrix}.
        \end{align*}
        Consider the following matrix
        \begin{align*}
            \m{K} = \begin{pmatrix}
                \ell_{1,1} & \ell_{1, 2} & \ell_{1, 3}
                \\
                \alpha & 1 & - 1
                \\
                \gamma & 1 & 1
            \end{pmatrix},
        \end{align*}
        where $\alpha, \gamma \in \mathbb R$, and
        \begin{align*}
            \ell_{1, 1} &= a + e + i - 2,
            \\
            \ell_{1, 2} &= \frac{2 \alpha (a + e + i - 1) - 2 \gamma + \left(M_{12} + M_{13} + M_{23}\right) (\gamma -\alpha) - \gamma \det(\jac)}{\alpha^2 + \gamma^2},
            \\
            \ell_{1, 3} &= \frac{2 (\gamma (a + e + i - 1) + \alpha) - \left(M_{12} + M_{13} + M_{23}\right) (\alpha + \gamma) + \alpha \det(\jac)}{\alpha^2 + \gamma^2},
        \end{align*}
        with
        \begin{align*}
            M_{12} = a e - b d, \qquad M_{13} = a i - c g, \qquad M_{23} = e i - f h.
        \end{align*}
        Note that $\ell_{1, 2}$ and $\ell_{1, 3}$ are well-defined provided that $\alpha^2 + \gamma^2 \neq 0$.
        
        Now, we consider the case when $\jac$ has three distinct eigenvalues. By construction, we have that the characteristic polynomials of $\jac$ and $\m{K}$ are the same, which implies that these matrices are similar (enforcing this similarity is precisely how the $\ell_{1, i}$ were constructed). We highlight that the $2 \times 2$ submatrix at the bottom right corner of $\m{K}$ has eigenvalues $1 \pm i$. Therefore, by arguments similar to those used at the end of the proof of \cref{lemma:n=2}, we can construct a matrix $\D$ to conclude the first part of the proof using the result in \cref{lemma:positiveeig}.

        On the other hand, if $\jac$ has repeated eigenvalues, then all of them must be real (as any complex eigenvalue of a real matrix comes in pairs). This implies that $\jac$ is similar to one of the matrices below:
        \begin{align*}
            \m{C_1} = \begin{pmatrix}
                - y & 0 & 0
                \\
                0 & - z & \ell
                \\
                0 & 0 & - z
            \end{pmatrix}, \quad \m{C_2} = \begin{pmatrix}
                - z & \ell_2 & 0
                \\
                0 & - z & \ell_1
                \\
                0 & 0 & - z
            \end{pmatrix}
        \end{align*}
        where $y, z > 0$ are real numbers so that $y \neq z$, and $\ell, \ell_1, \ell_2 \in \{0, 1\}$. In this case, we can use the same form of the matrix $\m{K}$ given above but need to find a similarity matrix $\m{R}$ to show that $\jac$ is similar to $\m{C_1}$ or $\m{C_2}$.

        Firstly, we set 
        \begin{align*}
            \ell_{1, 1} &= - y - 2 z - 2,
            \\
            \ell_{1, 2} &= - \frac{2 \alpha + 2 \gamma + 2 \alpha y - \gamma y z^2 + 2 \alpha y z - 2 \gamma y z + \alpha z^2 - \gamma z^2 + 4 \alpha z}{\alpha^2 + \gamma^2},
            \\
            \ell_{1, 3} &= - \frac{- 2 \alpha + 2 \gamma + 2 \gamma y + \alpha y z^2 + 2 \alpha y z + 2 \gamma y z + \alpha z^2 + \gamma z^2 + 4 \gamma z}{\alpha^2 + \gamma^2},
        \end{align*}
        which implies that $\jac$ and $\m{C_1}$ have the same eigenvalues. With this, we note that these matrices are similar if there exists an invertible matrix $\m{R} = \left(r_{i, j}\right)_{i, j}$ so that $\m{C_1} \m{R} = \m{R} \m{K}$. We omit the solution of this system here, but we highlight that, if we leave $r_{1, 1}, r_{2, 1}$ and $r_{3, 1}$ as parameters of the solution (solving for the other 6 elements of $\m{R}$), then the determinant of $\m{R}$ will be given by
        \begin{align*}
            - \frac{\ell r_{1, 1} r_{3, 1}^2 (y - z)^2}{\alpha^2 + \gamma^2}.
        \end{align*}
        See \cref{app:R_similarity}, where we give the full form of $\m{R}$, ensuring that it can always be found. With this, we note that $\m{R}$ will be invertible if and only if $\ell \neq 0$ and $r_{1, 1}, r_{3, 1} \neq 0$. This implies that $\ell = 1$ and, as the similarity of matrices is transitive, we have again constructed a matrix $\m{K}$ similar to $\jac$ with an unstable principal submatrix. Hence, we can conclude that a wave instability can be generated when $\jac$ has only two distinct eigenvalues with one generalized eigenvector.
        
        In the case of similarity to $\m{C_2}$, we need to set
        \begin{align*}
            \ell_{1, 1} &= - 3 z - 2,
            \\
            \ell_{1, 2} &= \frac{\gamma \left(z^3 + 3 z^2 - 2\right) - \alpha \left(3 z^2 + 6 z + 2\right)}{\alpha^2 + \gamma^2},
            \\
            \ell_{1, 3} &= - \frac{\gamma \left(3 z^2 + 6 z + 2\right) + \alpha \left(z^3 + 3 z^2 - 2\right)}{\alpha^2 + \gamma^2}\ak{.}
        \end{align*}
        As before, we note that $\m{K}$ will be similar to $\m{C_2}$ if there exists a matrix $\m{R} = \left(r_{i, j}\right)_{i, j}$ such that $\m{C_2} \m{R} = \m{R} \m{K}$. In this case, if we again leave $r_{1, 1}, r_{2, 1}$ and $r_{3, 1}$ as parameters of the solution (solving for the other 6 elements of $\m{R}$), then the determinant of $\m{R}$ will be given by
        \begin{align*}
            - \frac{\ell_1^2 \ell_2 r_{3, 1}^3}{\alpha^2 + \gamma^2},
        \end{align*}
        which is different from zero as long as $\ell_1 \ell_2 r_{3, 1} \neq 0$ and $\alpha^2 + \gamma^2 \neq 0$, which implies that $\ell_1 = \ell_2 = 1$. Again, see \cref{app:R_similarity} where we show that such an $\m{R}$ can always be found. Hence, by the same argument given above, we can conclude that the instability can be created when $\jac$ has only one eigenvalue (with multiplicity three) with two generalized eigenvectors.

        In the case $n = 4$, we can use \cref{remark_JNF} to reduce this to a $3 \times 3$ block of the Jordan normal form of $\jac$, unless $\jac$ has repeated complex conjugate eigenvalues (note that a repeated real root and a complex-conjugate pair can be subsumed in the case of 3 distinct eigenvalues). The only case not covered by the arguments above then, is if there is a repeated complex-conjugate pair of eigenvalues. In this case, we consider
        \begin{align*}
            \m{K} = \begin{pmatrix}
                \ell_{1, 1} & \ell_{1, 2} & \ell_{1, 3} & \ell_{1, 4}
                \\
                \gamma & \alpha & 0 & 0
                \\
                \delta & 0 & - \alpha & \alpha
                \\
                \varepsilon & 0 & - \alpha & - \alpha
            \end{pmatrix}, \qquad \text{and} \quad \qquad \m{C_3} = \begin{pmatrix}
                \alpha & - \beta & \ell & 0
                \\
                \beta & \alpha & 0 & \ell
                \\
                0 & 0 & \alpha & - \beta
                \\
                0 & 0 & \beta & \alpha
            \end{pmatrix},
        \end{align*}
        where $\alpha < 0$, and $\beta > 0$ are real numbers and $\ell \in \{0, 1\}$. In this case, we note (via direct calculation) that the characteristic polynomials of $\m{K}$ and $\m{C_3}$ are equal if
        \begin{align*}
            \ell_{1, 1} &= 5 \alpha,
            \\
            \ell_{1, 2} &= - \frac{\beta^4}{5 \alpha^2 \gamma },
            \\
            \ell_{1,3} &= \frac{\alpha^4 (10 \varepsilon - 55 \delta ) - 10 \alpha^2 \beta^2 (\delta - 2 \varepsilon) + \beta^4 (\delta + 2 \varepsilon)}{5 \alpha ^2 \left(\delta^2 + \varepsilon^2\right)},
            \\
            \ell_{1, 4} &= \frac{- 5 \alpha^4 (2 \delta + 11 \varepsilon) - 10 \alpha^2 \beta^2 (2 \delta + \varepsilon) + \beta^4 (\varepsilon - 2 \delta)}{5 \alpha^2 \left(\delta^2 + \varepsilon^2\right)}.
        \end{align*}
        This implies that we have the constraints $\gamma \neq 0$ and $\delta^2 + \varepsilon^2 \neq 0$ in order to be able to make these matrices similar. So, as before, we note that $\m{K}$ will be similar to $\m{C_3}$ if there exists a matrix $\m{R} = \left(r_{i, j}\right)_{i, j}$ such that $\m{C_3} \m{R} = \m{R} \m{K}$. In this case, if we leave $r_{1, 1}, r_{2, 1}$, $r_{3, 1}$ and $r_{4, 1}$ as parameters of the solution (and solve for the other $12$ elements of $\m{R}$), then the determinant of $\m{R}$ will be given by
        \begin{align*}
            - \frac{4 \beta^4 \ell^2 \left(r_{3, 1}^2 + r_{4, 1}^2\right)^2}{5 \alpha^3 \gamma \left(\delta^2 + \varepsilon ^2\right)},
        \end{align*}
        which is different from zero provided that $\ell = 1$, and $r_{3, 1}^2 + r_{4, 1}^2 \neq 0$. Note that the $2 \times 2$ bottom-right submatrix of $\m{K}$ has eigenvalues $- \alpha \pm \alpha i$, which have a positive real part, as $\alpha < 0$. As before, we can therefore construct a matrix $\D$ to conclude the proof using the result in \cref{lemma:positiveeig}.

        Once again, we can conclude that a wave instability can be generated when $n = 4$ and $\jac$ has only two distinct complex-conjugate eigenvalues as long as it has two generalized eigenvectors.
        
    \end{proof}
    We remark that the case $n = 3$ is minimal except that it does not account for repeated complex-conjugate eigenvalues, hence why we included the case $n = 4$ in the statement of the previous lemma. We also note that other matrices $\m{K}$ can be chosen to arrive at similar results via different constructions. For instance, in the $n = 4$ case above, the entries or the location of the principal submatrix at the bottom right corner of $\m{K}$ does not change the result, and the elements of the second row and column of the $4 \times 4$ matrix $\m{K}$ can also be non-zero and the result will be the same (with a suitable modification of $\m{R}$ and $\D$).
    
    With this, we are ready to prove the first main result of this article.
    \begin{proof}[Proof of \cref{th:main}]
        First, we consider how to induce Turing instabilities. Given that $\jac$ is non-scalar, we know there must exist a non-scalar $2 \times 2$ sub-block of its Jordan normal form which is also non-scalar. We can apply \cref{lemma:n=2} to find a matrix $\m{K}$ similar to this sub-block, and by construction, this $\m{K}$ will have a principle unstable submatrix that induces a Turing instability. We then use the construction given in \cref{remark_JNF} to embed this similarity inside of the matrix $\m{M}$ in that remark, and then apply \cref{lemma:positiveeig} on the matrix $\m{Q}^{- 1} \jac \m{Q}$, which now has a principal unstable submatrix by construction. This Lemma allows us to set one of the eigenvalues of the diffusion matrix (i.e.~an element of $\Db$) to zero in order to excite the corresponding unstable principal submatrix and induce a Turing instability in the full system. Note that we can take all other elements of $\Db$ as any arbitrary positive numbers in order to use \cref{lemma:positiveeig}.
    
        The proof for inducing wave instabilities proceeds in an analogous way as above, except that we will make use of \cref{lemma:wave} to construct a matrix $\m{K}$ similar to $\jac$, instead. By assumption, $\jac$ has $1 \leq \ell \leq 3$ distinct eigenvalues with at least $3 - \ell$ corresponding generalised eigenvectors. Hence, either $\jac$ has a $3 \times 3$ sub-block of its Jordan normal form that exactly fulfills this criterion as well, or it has a $4 \times 4$ sub-block with two repeated complex conjugate eigenvalues. In either of these cases, we can again embed the appropriate matrix $\m{K}$ from the proof of \cref{lemma:wave} into the matrix $\m{M}$ from above in order to construct a suitable similarity matrix $\m{Q}$ such that $\m{Q}^{- 1} \jac \m{Q}$ has a principal unstable submatrix with complex-conjugate unstable eigenvalues. Again, through an analogous use of \cref{lemma:positiveeig}, we can set the appropriate two elements of $\Db$ to zero to induce a wave instability in the full system. This concludes the proof.

    \end{proof}
    We remark that, in practice, one needs the zero elements of $\Db$ to be asymptotically small rather than strictly zero to avoid ill-posedness of the linear problem, but the existence of the instability will persist for some range of `small' diffusion constants. We note that in both the diagonal matrices $\Db$ and in the similarity matrices $\m{Q}$ constructed above, one has, in principle, freedom over elements not affecting the principal unstable submatrix to use more general constructions to gain additional flexibility over the elements we have not exploited here. We will illustrate some aspects of this in \cref{app:useful_results} through alternative lemmas one can use to construct these matrices, and in the examples in \cref{sec:examples}.

\section{Proof of \cref{th:GIORNO-GIOVANNA}} \label{sec:GIORNO-GIOVANNA}
    As in the previous section, we prove \cref{th:GIORNO-GIOVANNA} by dividing it into cases, though as the eigenvalues of $\D$ are all real, the number of sub-cases we need to consider is fewer than in the proof of \cref{th:main}.

    We assume, without loss of generality, that the eigenvalues of $\D$ (i.e.~the elements of $\Db$) are sorted in ascending order so that $0 < \sigma_i \leq \sigma_{i + 1}$ for all $i = 1, \ldots, n - 1$.
    \subsection{Turing instabilities}
        \begin{lemma} \label{lemma:thm2turing}
            Let $n \geq 2$ and $\D$ be a non-scalar diagonalizable matrix with real and strictly positive eigenvalues. Then, there exists a stable matrix $\jac$ such that \cref{generalsystem} exhibits a Turing instability.
        \end{lemma}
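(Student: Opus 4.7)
The plan is to reduce the statement to the classical two-species Turing setting by diagonalising $\D$ and using the complete freedom in $\jac$ to install a standard activator--inhibitor pair in the transformed basis.

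First I would write $\D = \m{Q} \Db \m{Q}^{-1}$ with $\Db = \diag(\sigma_1, \ldots, \sigma_n)$, $\sigma_i > 0$, and note that because $\D$ is non-scalar at least two of the $\sigma_i$ are distinct; after a permutation similarity absorbed into $\m{Q}$ I may assume $\sigma_1 < \sigma_n$. I then look for $\jac$ of the form $\jac = \m{Q} \m{K} \m{Q}^{-1}$ with $\m{K}$ to be designed. Under this change of basis one has $\det(\linop - \lambda \m{I}) = \det(\m{K} - \mu \Db - \lambda \m{I})$, so it is enough to construct a stable $\m{K}$ for which $\m{K} - \mu \Db$ has a real positive eigenvalue for some $\mu > 0$; the corresponding $\jac$ is then automatically stable because it is similar to $\m{K}$.

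Next I would take $\m{K}$ to be block-diagonal, with a $2 \times 2$ block of entries $a, b, c, d$ sitting at positions $(1, n)$, and the remaining $(n-2) \times (n-2)$ diagonal block equal to $-L\,\m{I_{n-2}}$ for some $L > 0$, with all other entries zero. Because of this block structure, stability of $\m{K}$ reduces to $a + d < 0$, $ad - bc > 0$, and $L > 0$, while the eigenvalues of $\m{K} - \mu \Db$ split into a decoupled pair coming from the $2 \times 2$ Turing subsystem with diagonal diffusion $\diag(\sigma_1, \sigma_n)$, together with $n - 2$ eigenvalues of the form $-L - \mu \sigma_j < 0$ which never destabilise.

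I would then invoke classical two-species Turing theory (see, e.g., \cite[Chapter 2]{Murray2001}) to pick $a, b, c, d$ meeting both the above stability conditions and the Turing inequalities $\sigma_n a + \sigma_1 d > 0$ and $(\sigma_n a + \sigma_1 d)^2 > 4 \sigma_1 \sigma_n (ad - bc)$, which are jointly satisfiable precisely because $\sigma_1 < \sigma_n$; the same choice additionally controls the unstable wavenumber band through the resulting $\mu$-interval. Setting $\jac = \m{Q} \m{K} \m{Q}^{-1}$ then completes the construction. The only real subtlety is the opening similarity reduction, which is what converts a possibly non-diagonal $\D$ into an equivalent diagonal diffusion problem to which the textbook two-species recipe directly applies; once that is in place the remainder is routine parameter tuning, and the same template will later accommodate wave instabilities by substituting an appropriately designed $3 \times 3$ block in place of the $2 \times 2$ one.
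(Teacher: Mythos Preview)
Your proposal is correct and follows essentially the same route as the paper: diagonalise $\D$, set $\jac = \m{Q}\m{K}\m{Q}^{-1}$ with $\m{K}$ block-diagonal consisting of a stable scalar block and a $2\times 2$ block coupled to two distinct diffusion eigenvalues, then invoke the classical two-species Turing conditions. The only cosmetic differences are that the paper places its $2\times 2$ block at adjacent indices $\ell,\ell+1$ rather than at $1,n$, and uses a specific parametrisation $\m{S^*} = \begin{pmatrix} b & b(b+c)+q \\ -1 & -(b+c) \end{pmatrix}$ in place of your generic $a,b,c,d$.
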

        \begin{proof}
            As $\D$ is a diagonalizable non-scalar matrix with positive real eigenvalues and the elements of its diagonalization $\mathbb D$ are sorted, then there must exist at least one integer $1 \leq \ell \leq n - 1$ such that $\sigma_\ell < \sigma_{\ell + 1}$.

            Therefore, as the eigenvalues of $\linop$ are the same as the ones of $\m{Q}^{- 1} \jac \m{Q} - \mu \mathbb D$, we consider
            \begin{align*}
                \jac = \m{Q} \m{C} \m{Q}^{- 1},
            \end{align*}
            where
            \begin{align*}
                \m{C} = \begin{pmatrix}
                    - \m{I_{\ell - 1}} & \m{0_{\ell - 1, 2}} & \m{0_{\ell - 1, n - \ell - 1}}
                    \\
                    \m{0_{2, \ell - 1}} & \m{S^*} & \m{0_{2, n - \ell - 1}}
                    \\
                    \m{0_{n - \ell - 1, \ell - 1}} & \m{0_{n - \ell - 1, 2}} & - \m{I_{n - \ell - 1}}
                \end{pmatrix},
            \end{align*}
            with
            \begin{align*}
                \m{S^*} = \begin{pmatrix}
                    b & b (b + c) + q
                    \\
                    - 1 & - (b + c)
                \end{pmatrix},
            \end{align*}
            and $b, c, q > 0$ are real constants to be determined. By checking its trace and determinant, one can see that $\m{S^*}$ is stable. By construction, we have that $n - 2$ eigenvalues of $\m{Q}^{- 1} \jac \m{Q} - \mu \Db$ have the form $- 1 - \sigma_i$, for a corresponding integer $1 \leq i \leq n$ such that $i \neq \ell, \ell + 1$. On the other hand, by considering the two eigenvalues we have singled out via $\m{S^*}$, we see that the equations that determine them are uncoupled from all others and, hence, they form an $n = 2$ linear system which is precisely the classical Turing system, with $\m{S^*}$ playing the role of the $2 \times 2$ Jacobian matrix. Thus, the largest of the other two eigenvalues will be positive for some value of $\mu > 0$ if the system satisfies the classical 2-component Turing conditions (see, e.g., \cite{Murray2001}):
            \begin{align*}
                \sigma_{\ell + 1} b - \sigma_\ell (b + c) &> 0, \quad \text{and}
                \\
                \left(\sigma_{\ell + 1} b - \sigma_\ell (b + c)\right)^2 &> 4 \sigma_\ell \sigma_{\ell + 1} q,
            \end{align*}
            which hold provided that
            \begin{align*}
                b > \frac{\sigma_\ell c}{\sigma_{\ell + 1} - \sigma_\ell} > 0
            \end{align*}
            and
            \begin{align*}
                q < \frac{\left(\sigma_{\ell + 1} b - \sigma_\ell (b + c)\right)^2}{4 \sigma_\ell \sigma_{\ell + 1}}.
            \end{align*}
            Hence, we can always pick values of $b, c,$ and $q$ to satisfy these, given any pair of distinct eigenvalues $\sigma_\ell < \sigma_{\ell + 1}$. Therefore, if we set $\jac$ using this construction, we will induce a Turing instability in the original system given by \cref{generalsystem}.
        \end{proof}

    \subsection{Wave instabilities}

        We next prove an analogous result to the previous lemma for the case of wave instabilities. Due to these requiring a 3-component subsystem, the constructions are a bit more involved.
        \begin{lemma}\label{lemma:thm2wave}
            Let $n \geq 3$ and $\D$ be a non-scalar diagonalizable matrix with real and strictly positive eigenvalues. Then, there exists a stable matrix $\jac$ such that the full system exhibits a wave instability.
        \end{lemma}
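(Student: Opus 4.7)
The plan is to mimic the construction of \cref{lemma:thm2turing}, replacing its embedded $2 \times 2$ Turing sub-block by a $3 \times 3$ wave-inducing one. Since $\D$ is non-scalar with sorted eigenvalues $0 < \sigma_1 \leq \cdots \leq \sigma_n$, there is at least one strict inequality $\sigma_\ell < \sigma_{\ell + 1}$, and since $n \geq 3$ I can select a triple of indices $\{i_1, i_2, i_3\}$ whose associated diffusion rates contain at least two distinct values. With $\m Q$ diagonalizing $\D$ so that $\m Q^{-1} \D \m Q = \Db$, I will set $\jac = \m Q \m C \m Q^{-1}$, where $\m C$ is block-diagonal, equal to $-\m I$ on the $n - 3$ rows and columns outside $\{i_1, i_2, i_3\}$ and to a $3 \times 3$ matrix $\m S^{*}$ on these three rows and columns.

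With this choice, $\m C - \mu \Db$ is itself block-diagonal, so the spectrum of $\linop$ splits into the $n - 3$ trivially stable eigenvalues $-1 - \sigma_j$ together with the three eigenvalues of the reduced operator $\m S^{*} - \mu \diag(\sigma_{i_1}, \sigma_{i_2}, \sigma_{i_3})$. The problem thus reduces to designing a stable $3 \times 3$ matrix $\m S^{*}$ such that this reduced operator admits, for some $\mu^{*} > 0$, a pair of complex conjugate eigenvalues with strictly positive real part and non-zero imaginary part --- precisely the three-component reaction-diffusion wave-instability problem with (possibly degenerate) diagonal diffusion. When all three $\sigma_{i_j}$ are distinct, such an $\m S^{*}$ is provided by the $n = 3$ arguments in \cite{anma2012unstable, villar-sepulveda}; equivalently, one can write down an explicit parametric family and enforce the Routh--Hurwitz conditions at $\mu = 0$ together with a Hopf-type crossing of the characteristic polynomial of $\m S^{*} - \mu \diag(\sigma_{i_1}, \sigma_{i_2}, \sigma_{i_3})$ at $\mu = \mu^{*}$.

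The main obstacle is the degenerate case where only two of $\sigma_{i_1}, \sigma_{i_2}, \sigma_{i_3}$ are distinct, which is forced when $\D$ has exactly two distinct eigenvalues (for instance, one simple and one of multiplicity $n - 1$). Here the standard three-species constructions that rely on three distinct diffusion rates do not carry over directly: the rotation producing the complex unstable pair must be driven by the off-diagonal reaction coupling between the two species sharing a diffusion rate, rather than by separated transport scales. I expect that a direct parametric family for $\m S^{*}$ --- with enough free entries to simultaneously enforce stability via Routh--Hurwitz at $\mu = 0$ and a transversal imaginary-axis crossing of a complex conjugate pair at some finite $\mu^{*}$ --- will suffice, but verifying this explicitly and confirming the transversality condition will be the technical heart of the argument. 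Once $\m S^{*}$ is in hand, \cref{lemma:positiveeig} is not needed; the block-diagonal structure directly isolates the unstable sub-spectrum, yielding a wave instability for the original system \cref{generalsystem}.
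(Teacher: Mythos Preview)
Your block-diagonal reduction to a $3 \times 3$ subproblem is exactly the paper's strategy, so the scaffolding is correct. The gap lies entirely in the construction of $\m S^*$, which is the substantive content of the lemma and which your proposal leaves open in both cases.

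First, a real issue even in the non-degenerate case: the references \cite{anma2012unstable, villar-sepulveda} solve the \emph{converse} problem --- they fix the Jacobian and construct a diagonal diffusion matrix inducing an instability. They do not directly furnish, for a \emph{prescribed} triple $(\sigma_{i_1}, \sigma_{i_2}, \sigma_{i_3})$ of diffusion rates, a stable $3 \times 3$ matrix $\m S^*$ that exhibits a wave instability with that particular diagonal diffusion. So the appeal to those citations does not close even the easy case.

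Second, and more importantly, the degenerate case --- which you correctly flag as the crux --- is not a routine verification; it requires an explicit family. The paper writes down one parametric $\m S^*$ (with free parameters $b, c, y, z > 0$) whose characteristic polynomial satisfies Routh--Hurwitz at $\mu = 0$ by inspection, and then shows via an asymptotic argument in the regime $b \gg 1$ that, for any triple $\sigma_\ell \leq \sigma_{\ell+1} \leq \sigma_{\ell+2}$ with $\sigma_\ell < \sigma_{\ell+2}$, the remaining parameters can be chosen so that the cubic in $\lambda$ has a complex root $\beta + i\omega$ with $\beta, \omega > 0$ at some finite $\mu^* > 0$. The argument uses only $\sigma_{\ell+2} - \sigma_\ell > 0$ (never that all three are distinct), so it treats the degenerate and non-degenerate cases simultaneously. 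Your proposed case split, with the hard case deferred to an unspecified ``direct parametric family'' and transversality check, is therefore both incomplete and structurally suboptimal: the effective route is to seek from the outset a family that is agnostic to the multiplicity pattern of the three diffusion rates.
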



        \begin{proof}
            As $\D$ is a non-scalar diagonalizable matrix with positive real eigenvalues and the elements in $\mathbb D$ are sorted, then there must exist at least one integer $1 \leq \ell \leq n - 2$ such that $\sigma_\ell < \sigma_{\ell + 2}$.

            Let us consider $\jac = \m{Q} \m{C} \m{Q}^{- 1}$, where
            \begin{align*}
                \m{C} = \begin{pmatrix}
                    - \m{I_{\ell - 1}} & \m{0_{\ell - 1, 3}} & \m{0_{\ell - 1, n - \ell - 2}}
                    \\
                    \m{0_{3, \ell - 1}} & \m{S^*} & \m{0_{3, n - \ell - 2}}
                    \\
                    \m{0_{n - \ell - 2, \ell - 1}} & \m{0_{n - \ell - 2, 3}} & - \m{I_{n - \ell - 2}}
                \end{pmatrix},
            \end{align*}
            with
            \begin{align*}
                \m{S^*} = \begin{pmatrix}
                    b & - b & c y - b (2 b + c) 
                    \\
                    b & b & b c + y (2 b + c) + z
                    \\
                    1 & - 1 & - (2 b + c)
                \end{pmatrix},
            \end{align*}
            and $b, c, y, z > 0$ are real parameters to be determined. Note that $\tr{\m{S^*}} = - c < 0$. Furthermore, the eigenvalues of the $2 \times 2$ submatrix at the top left corner of $\m{S^*}$ are given by $b \pm b i$. We also note that $\m{S^*}$ is a stable matrix, as its characteristic polynomial for an eigenvalue $\eta$ is given by
            $$
                \eta^3 + c \eta^2 + \left(2 b y + z\right) \eta + 2 b c y = 0,
            $$
            which satisfies the Routh-Hurwitz stability conditions.
            
            Now, when we consider the full system, we note that (by the construction of $\m{C}$) $n - 3$ eigenvalues of $\m{Q}^{- 1} \jac \m{Q} - \mu \Db$ have the form $- 1 - \sigma_i$, for a corresponding integer $1 \leq i \leq n$ such that $i \neq \ell, \ell + 1, \ell + 2$. Moreover, the other three eigenvalues are solutions to the following equation:
            \begin{align*}
                - \det\left(\m{S^*} - \mu \mbox{diag}\left(\sigma_\ell, \sigma_{\ell + 1}, \sigma_{\ell + 2}\right) - \lambda \m{I}\right) = 0,
            \end{align*}
            which is given by
            \begin{equation} \label{eq_wave_lambda}
                \lambda^3 + p_2 \lambda^2 + p_1 \lambda + p_0 = 0,
            \end{equation}
            where
            \begin{equation*}
                \begin{aligned}
                    p_2 &= c + \mu \left(\sigma_\ell + \sigma_{\ell + 1} + \sigma_{\ell + 2}\right),
                    \\
                    p_1 &= \mu \sigma_{\ell + 1} \left(b + c + \mu \sigma_{\ell + 2}\right) + \mu \sigma_\ell \left(b + c + \mu \left(\sigma_{\ell + 1} + \sigma_{\ell + 2}\right)\right) - 2 b \mu \sigma_{\ell + 2} + 2 b y + z,
                    \\
                    p_0 &= \mu \sigma_\ell \left(- 2 b^2 + \mu \sigma_{\ell + 1} \left(2 b + c + \mu \sigma_{\ell + 2}\right) - b \mu \sigma_{\ell + 2} + 2 b y + c y + z\right)
                    \\
                    & \quad + \left(2 b - \mu \sigma_{\ell + 1}\right) \left(b \mu \sigma_{\ell + 2} + c y\right).
                \end{aligned}
            \end{equation*}
            We will now show that we can ensure the existence of an imaginary root with a positive real part $\tilde \lambda = \beta + \omega i$ for some values of $\beta, \mu, \omega > 0$ through particular choices of the parameters in $\m{S^*}$. In particular, substituting $\lambda$ into \cref{eq_wave_lambda}, we find that $\omega$ needs to satisfy
            \begin{align}
                r_0 + r_2 \omega^2 + i \omega \left(r_i - \omega^2\right) = 0, \label{messy-eq}
            \end{align}
            where
            \begin{align*}
                r_0 &= 2 \mu b^2 \left(\sigma_{\ell + 2} - \sigma_\ell\right) + b \left(2 y (c + \beta) + \mu \sigma_\ell \left(2 \mu \sigma_{\ell + 1} - \mu \sigma_{\ell + 2} + 2 y + \beta\right) - 2 \mu \beta \sigma_{\ell + 2}\right.
                \\
                & \quad \left. + \mu \sigma_{\ell + 1} \left(\beta - \mu \sigma_{\ell + 2}\right)\right) + \mu \sigma_{\ell + 1} \left(- c y + c \beta + \mu \beta \sigma_{\ell + 2} + \beta^2\right)
                \\
                & \quad + \mu \sigma_\ell \left(\mu \sigma_{\ell + 1} \left(c + \mu \sigma_{\ell + 2} + \beta\right) + c (y + \beta) + \mu \beta \sigma_{\ell + 2} + z + \beta^2\right)
                \\
                & \quad  + \beta \left(\beta (c + \beta) + \mu \beta \sigma_{\ell + 2} + z\right),
                \\
                r_2 &= - c - \mu \left(\sigma_\ell + \sigma_{\ell + 1} + \sigma_{\ell + 2}\right) - 3 \beta,
                \\
                r_i &= b \left(2 y + \mu \left(\sigma_\ell + \sigma_{\ell + 1} - 2 \sigma_{\ell + 2}\right)\right) + \mu \sigma_{\ell + 1} \left(c + \mu \sigma_{\ell + 2} + 2 \beta\right)
                \\
                & \quad + \mu \sigma_\ell \left(c + \mu \left(\sigma_{\ell + 1} + \sigma_{\ell + 2}\right) + 2 \beta\right) + 2 c \beta + 2 \mu \beta \sigma_{\ell + 2} + z + 3 \beta^2.
            \end{align*}
            We will proceed to analyze solutions for large $b$ (i.e.~asymptotically in the limit of $b \gg 1$) in order to simplify the analysis. Throughout, we assume that all the other parameters are fixed and do not depend on $b$ in the asymptotics. To ensure that \cref{messy-eq} has a solution for $\omega^2$, we must have $r_i > 0$. We note that $r_i$ has a leading-order coefficient in $b$ given by
            \begin{align*}
                2 y + \mu \left(\sigma_\ell + \sigma_{\ell + 1} - 2 \sigma_{\ell + 2}\right),
            \end{align*}
            which is positive if and only if
            \begin{align}\label{y_bound}
                y > \frac{\mu}{2} \left(2 \sigma_{\ell + 2} - \sigma_\ell - \sigma_{\ell + 1}\right) > 0.
            \end{align}
            Furthermore, to ensure the real part of \cref{messy-eq} vanishes for $\omega^2 > 0$, we need that $r_0 r_2 < 0$. This is automatically true when $b$ is sufficiently large as the leading-order coefficient of $r_0 r_2$ is given by
            \begin{align*}
                - 2 \mu \left(\sigma_{\ell + 2} - \sigma_\ell\right) \left(c + \mu \left(\sigma_\ell + \sigma_{\ell + 1} + \sigma_{\ell + 2}\right) + 3 \beta\right) < 0,
            \end{align*}
            where we have used the ordering of the eigenvalues of $\Db$, i.e.~that $\sigma_\ell < \sigma_{\ell + 2}$.

            Let $q > 0$ be a real number such that
            \begin{align}
                y = \frac{\mu}{2} \left(2 \sigma_{\ell + 2} - \sigma_\ell - \sigma_{\ell + 1}\right) + q.
            \end{align}
            
            Setting the real and imaginary parts of \cref{messy-eq} equal to zero, we see that we need $\omega^2 = r_i = -r_0/r_2$. Writing this out, we arrive at the following equation for $\mu$:
            \begin{align}
                a_3 \mu^3 + a_2 \mu^2 + a_1 \mu + a_0 = 0, \label{mueq}
            \end{align}
            where
            \begin{align*}
                a_3 &= \left(\sigma_\ell + \sigma_{\ell + 1}\right) \left(\sigma_\ell + \sigma_{\ell + 2}\right) \left(\sigma_{\ell + 1} + \sigma_{\ell + 2}\right),
                \\
                a_2 &= b \left(\sigma_\ell - \sigma_{\ell + 1}\right) \left(\sigma_\ell - \sigma_{\ell + 2}\right) + \left(\frac{3 c}{2} + 2 \beta\right) \sigma_\ell^2 + 2 \beta \sigma_{\ell + 2}^2
                \\
                & \quad + \sigma_\ell \left(2 (c + 3 \beta) \sigma_{\ell + 1} + (c + 6 \beta) \sigma_{\ell + 2}\right) + \frac{1}{2} (c + 4 \beta) \sigma_{\ell + 1}^2 + 3 (c + 2 \beta) \sigma_{\ell + 1} \sigma_{\ell + 2},
                \\
                a_1 &= 2 b^2 \left(\sigma_\ell - \sigma_{\ell + 2}\right) + b \left((c + 2 q) \sigma_{\ell + 1} + 2 (q - c) \sigma_{\ell + 2} + c \sigma_\ell\right) + \sigma_{\ell + 2} (4 \beta  (c + 2 \beta) + z)
                \\
                & \quad + \sigma_\ell \left(c^2 - c q + 6 c \beta + 8 \beta^2\right) + \sigma_{\ell + 1} \left(c (c + q) + 6 c \beta + z + 8 \beta^2\right)
                \\
                a_0 &= 4 q \beta b + (c + 2 \beta) (2 \beta (c + 2 \beta) + z).
            \end{align*}
            Here, we note that the coefficient of $b^2$ in $a_1$ is given by $\sigma_\ell - \sigma_{\ell + 2}$, which is negative again due to the ordering of these eigenvalues. Hence, for sufficiently large $b$, the coefficient $a_1$ will become negative, implying (by the Routh-Hurwitz criterion) that there exists a root of \cref{mueq} with a positive real part. We need to ensure that, in the limit of large $b$, there is a positive real root, which is a necessary property for $\mu = k^2$ as an eigenvalue of the negative Laplacian.
            
            The discriminant of \cref{mueq} is given by
            \begin{align*}
                18 a_3 a_2 a_1 a_0 - 4 a_2^3 a_0 + a_2^2 a_1^2 - 4 a_3 a_1^3 - 27 a_3^2 a_0^2,
            \end{align*}
            and if it is positive, then we know that all of the roots of \cref{mueq} are real \cite[Section 10.3]{irving2004integers}. Hence, for sufficiently large $b$, we would have a positive real root. The full expression of the discriminant is unpleasant, so we again simplify things by considering $b \gg 1$ as an asymptotically large parameter. This leads to a lengthy but not overly tedious calculation as, for large $b$, it is relatively straightforward to see that one can work with the leading-order terms in $b$ for each of the coefficients and no terms involving $a_0$ appear in the leading-order expression. We then find a leading-order coefficient of the discriminant in $b$ (specifically, collecting terms of order $b^6)$ given by\footnote{We also used a computer algebra system to compute the coefficient of $b^6$ directly for the full expression of the discriminant to find that it matches, but we do not recommend trying to do this by hand.}
            \begin{align*}
                4 \left(\sigma_{\ell + 2} - \sigma_\ell\right)^3 h
            \end{align*}
            where
            \begin{align*}
                 h &= \left(10 \sigma_{\ell + 1} - \sigma_\ell\right) \sigma_\ell^2 + 7 \sigma_\ell \sigma_{\ell + 1} \left(\sigma_{\ell + 1} +  2 \sigma_{\ell + 2}\right) + 8 \left(\sigma_\ell + \sigma_{\ell + 1}\right) \sigma_{\ell + 2}^2
                 \\
                 & \quad + 9 \left(\sigma_\ell^2 + \sigma_{\ell + 1}^2\right) \sigma_{\ell + 2},
            \end{align*}
            which is positive as we have that $0 < \sigma_\ell \leq \sigma_{\ell + 1} \leq \sigma_{\ell + 2}$ and $\sigma_\ell < \sigma_{\ell + 2}$. This implies that the root with a positive real part we had found for \cref{mueq} is real for a sufficiently large value of $b$, concluding that there exists a positive real value of $\mu$ such that $\lambda(\mu) = \beta + i \omega$, where $\beta$ can be taken to be positive, thus ensuring the existence of a wave instability.
        \end{proof}

        \begin{remark}
            While the above lemma uses a specific form of a Jacobian matrix, in principle one can construct a stable matrix of a different form by putting another unstable submatrix at the top left corner of $\m{S^*}$ by then making sure one ensures the existence of zeroes of \cref{mueq} for some positive values of $\mu$. Even with the form of the Jacobian block used, one can freely vary $c$, $q$, and $z$, needing to only ensure that $b$ is sufficiently large. One can compute an explicit minimum value for $b$, but it is somewhat tedious and easier to check that $b$ is sufficiently large by directly computing the dispersion relation (i.e.~plotting $\Re(\lambda(\mu))$).
        \end{remark}

        \begin{proof}[Proof of \cref{th:GIORNO-GIOVANNA}]
            The proof is analogous to the proof of \cref{th:main}, though simpler as we can directly use \cref{lemma:thm2turing,lemma:thm2wave} to design unstable submatrices of $\jac$ fulfilling the criteria for each kind of instability. 
        \end{proof}
        We remark that, as long as one ensures that $\jac$ is a stable matrix, there is otherwise freedom to design all other nonzero elements of the matrix away from those that ensure these specific unstable submatrices. With more effort, one can modify other elements of $\jac$, but will need to use more general constructions than those proposed here.

\section{Examples} \label{sec:examples}
    We now employ the constructions used in the proofs of the main theorems, as well as some additional results collected in \cref{app:useful_results} to illustrate the design of Turing and wave instabilities in a handful of prototypical reaction-cross-diffusion systems. The examples here demonstrate the key ingredients needed to use the results and how the freedom in the constructions allows one to satisfy additional constraints and still achieve the correct instabilities.

    We denote $\m{S_{i_1 \ldots i_\ell}}$ the principle submatrix of $\jac$ composed of the rows and columns with labels $i_1, \ldots, i_\ell$, where $1 \leq i_1 <\ldots < i_\ell \leq n$ is a set of non-repeated integers. We will plot dispersion relations (i.e.~$\Re(\lambda(\mu))$) with continuous (respectively, dashed) lines representing purely real (respectively, complex non-real) eigenvalues. In other words, we will not plot the imaginary parts of $\lambda(\mu)$ but instead use a dashed line for a given value of $\mu$ to indicate a complex growth rate $\lambda$, showing only its real part.

    We compute direct numerical solutions of the PDE models implemented using finite differences in space with the 3-point (1D) and 5-point (2D) standard Laplace stencil (or a suitable centred-difference approach for the case of a nonlinear diffusion coefficient), and then integrate in time using the Matlab function \texttt{ode15s}, which implements backwards differentiation formulae of orders 1 to 5. We used relative and absolute tolerances of $10^{- 9}$ and also checked numerical convergence in decreasing spatial step sizes for some specific simulations. In 1D (respectively, 2D) simulations, we used $1,000$ (respectively, $200 \times 200$) discrete nodes. Throughout, we use initial data of the form $\m{u} \rvert_{t = 0} = \m{P} + \eta \mathcal{N}(0, 1)$, which are small normally-distributed random perturbations of the initial condition with variance $\eta$. In all simulations, we used $\eta = 0.01$. All of the codes used for time integration can be found on GitHub\footnote{\url{https://github.com/AndrewLKrause/Designing-reaction-cross-diffusion-systems}}, and there we also include four links to interactive web-based versions of the simulations using VisualPDE \cite{walker2023visualpde}.

    \subsection{3-component Schnakenberg system}
        The Schnakenberg system was intended to be a simple 2-component model for glycolysis exhibiting limit cycles \cite{Schnakenberg1979,FahadWoods,villar2023degenerate}, and is well-studied in pattern formation largely because of its simplistic form and presentation in the textbook \cite{Murray2001}. Here, in order to exhibit wave instabilities, we consider a simple 3-component extension obtained after adding a third variable to the classical model as in \cite{xie2021complex}. However, we also include linear cross-diffusion terms not accounted for there. The reaction-diffusion system is given by
        \begin{align}
            \partial_t \u = \mbf f(\u) + \D \partial_{xx} \u, \label{Schnakenberg}
        \end{align}
        where
        \begin{align*}
            \u &= (u, v, w)^\intercal, \quad \mbf f(\u) = \begin{pmatrix}
                a - u + u^2 v
                \\
                b - u^2 v
                \\
                u - w
            \end{pmatrix}, \quad \text{and} \quad  \D = \begin{pmatrix}
                D_{1, 1} & D_{1, 2} & D_{1, 3}
                \\
                D_{2, 1} & D_{2, 2} & D_{2, 3}
                \\
                D_{3, 1} & D_{3, 2} & D_{3, 3}
            \end{pmatrix}.
        \end{align*}
        This system has only one homogeneous steady state given by
        \begin{align*}
            \mbf P = \left(u^*, v^*, w^*\right) = \left(a + b, \frac{b}{(a + b)^2}, a + b\right).
        \end{align*}
        
        In this example, we will assume that we can vary the diffusion matrix to obtain wave instabilities, without any further constraints. Hence, we will be using ideas from the proof of \cref{th:main}, which are presented in \cref{sec:proof1-general}. To be concrete, consider the parameter values $(a, b) = (1, 1/2)$. The Jacobian matrix of the system at $\mbf P$ is then
        \begin{align*}
            \jac = \left(
                \begin{array}{ccc}
                     - \frac{1}{3} & \frac{9}{4} & 0
                     \\[1ex]
                     - \frac{2}{3} & - \frac{9}{4} & 0
                     \\[1ex]
                     1 & 0 & - 1
                \end{array}
                \right),
        \end{align*}
        which has eigenvalues $- 1$ and $\frac{1}{24} \left(- 31 \pm \sqrt{335} i\right)$. \cref{tab:subeigval} shows the eigenvalues of the principal submatrices of $\jac$.
        
        {
        \newcolumntype{L}{>{$}l<{$}} 
        \begin{table}
            \centering
            \begin{tabular}{|L|L|}
                \hline
                \text{Submatrix} & \text{Eigenvalues}
                \\
                \hline
                \m{S_1} & - 1/3
                \\
                \hline
                \m{S_2} & - 9/4
                \\
                \hline
                \m{S_3} & - 1
                \\
                \hline
                \m{S_{1, 2}} & \left(- 31 \pm i \sqrt{335}\right)/24
                \\
                \hline
                \m{S_{1, 3}} & - 1, - 1/3
                \\
                \hline
                \m{S_{2, 3}} & - 1, - 9/4
                \\
                \hline
            \end{tabular}
            \caption{Eigenvalues of the principal submatrices of $\jac$ for the 3-component Schnakenberg system, \cref{Schnakenberg}, when $(a, b) = (1, 0.5)$.}
            \label{tab:subeigval}
        \end{table}
        }
        The Jacobian matrix and all of its principal submatrices are stable. This means that a diagonal diffusion matrix is unable to produce diffusion-driven instabilities \cite{satnoianu,villar-sepulveda}. However, we can find a cross-diffusion matrix so that the system exhibits wave instabilities. As detailed in the proof of \cref{lemma:wave}, we can choose an unstable $2 \times 2$ matrix with two unstable eigenvalues to find a suitable diffusion matrix that destabilizes the system.

        In particular, we note that $\jac$ has three different eigenvalues, so we can find parameters $\alpha, \gamma, \ell_1, \ell_2, \ell_3 \in \mathbb R$ so that $\jac$ is similar to
        \begin{align*}
            \mbf C = \begin{pmatrix}
                197 & \alpha & 1
                \\
                \ell_1 & \ell_2 & \ell_3
                \\
                - 39072 & \gamma & - \frac{2353}{12}
            \end{pmatrix}.
        \end{align*}
        We choose $\alpha = 0$ and $\gamma = \frac{9}{4}$ and note that these two matrices are similar if and only if
        \begin{align*}
            \ell_1 = 38940, \qquad \ell_2 = - \frac{9}{2}, \qquad \ell_3 = \frac{1159}{6}.
        \end{align*}
        With these parameter values, we can solve the system necessary for similarity, i.e.
        \begin{align*}
            \m{Q} \m{C} = \jac \m{Q},
        \end{align*}
        to find that
        \begin{align*}
            \m{Q} = \begin{pmatrix}
                198 & 0 & 1
                \\
                0 & 1 & 1
                \\
                1 & 0 & 0
            \end{pmatrix},
        \end{align*}
        which fulfills $\det(\m{Q}) = - 1 \neq 0$.
        
        Note that the matrix $\m{C}$ has a principal submatrix $\m{S_{1, 3}}$ with eigenvalues $\frac{1}{24} \left(11 \pm \sqrt{255383} i\right)$, which are complex values with a positive real part, and hence our candidate submatrix to induce a wave instability.
        
        Now we pick the diagonal matrix
        \begin{align*}
            \Db = \begin{pmatrix}
                \delta & 0 & 0
                \\
                0 & 1 & 0
                \\
                0 & 0 & \delta
            \end{pmatrix},
        \end{align*}
        to select the unstable principal submatrix of $\m{C}$ we have designed via the parameter $\delta$, arriving at the diffusion matrix
        \begin{align}
            \D = \m{Q} \Db \m{Q}^{- 1} = \left(
                \begin{array}{ccc}
                     \delta & 0 & 0
                     \\
                     \delta - 1 & 1 & 198 (1 - \delta)
                     \\
                     0 & 0 & \delta
                \end{array}
            \right). \label{diffmatrixSchnakenberg}
        \end{align}
        With this, we will be able to exhibit a wave instability, provided that $\delta$ is small enough. In particular, if we let $\delta$ range from 0 to 1 in discrete steps, we obtain the dispersion relations shown in \cref{fig:disprelSchnakenberg}. Note that the value given by the green dashed line corresponds to the asymptotic value of $\Re(\lambda)$ as $\mu \to \infty$ for the degenerate system at $\delta = 0$, which corresponds to the real part of the eigenvalues of the unstable submatrix we designed in $\m{C}$. We plot the corresponding dispersion curve for $\delta = 0$ in orange, noting that it is not an admissible value of $\delta$, as the linear system is ill-posed if $\Re(\lambda) > 0$ for arbitrarily large $\mu$. While this particular value is inadmissible, what we construct in this way is a dispersion relation that has the correct properties for all $\delta > 0$. Note, in particular, that one can look at the curves between the red and the orange one (that is, $\delta \in (0, 0.842742)$), and see that any of these intermediate values of $\delta$ lead to the right form of a dispersion relation for a diffusion-driven instability. This is the key insight of \cref{lemma:positiveeig} on which the constructions are based.

        \begin{figure}
            \centering
            \includegraphics[width = \textwidth]{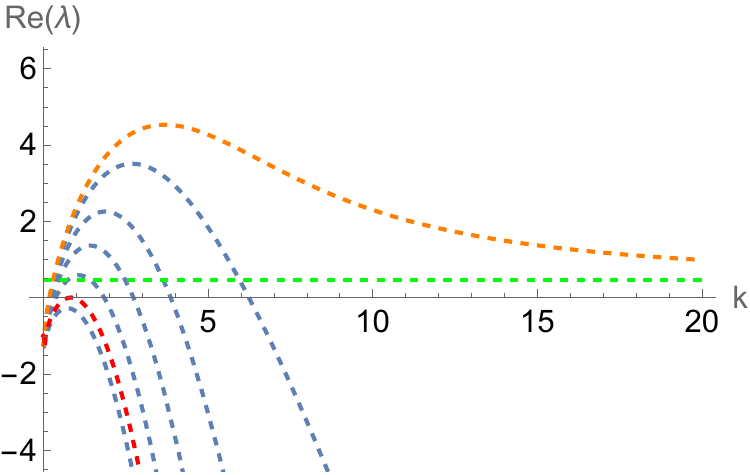}
            \caption{Dispersion relation of the 3-component Schnakenberg system, \cref{Schnakenberg}, for $(a, b) = (1, 0.5)$ and the diffusion matrix given by \cref{diffmatrixSchnakenberg}. Here, $\delta$ ranges from 0 (in orange) to 1 in discrete steps. The top (respectively, bottom) curve corresponds to $\delta = 0$ (respectively, $\delta = 1$). Moreover, the green dashed line corresponds to the horizontal line $\Re(\lambda) = 11/24$, whilst the red dashed line is a wave bifurcation curve found for $\delta \approx 0.842742$ where $\Re(\lambda) = 0$ for some $k > 0$.}
            \label{fig:disprelSchnakenberg}
        \end{figure}
        
        Hence, for sufficiently small $\delta > 0$, we anticipate that the full system exhibits a wave instability. To ensure that the pattern is visible, i.e., that there is a nearby stable patterned state, we compute the criticality of both bifurcations, the traveling wave and the standing wave, using the approach outlined in \cite{Villar-sepulveda-wave}. The bifurcation for both, travelling and standing waves turn\ak{s} out to be supercritical in the set of values $(b, \delta) \in [0, 1] \times [0, 1]$ on an infinite domain, indicating that they should be feasible to obtain via simulations. However, only the traveling wave turns out to be a stable pattern with periodic boundary conditions (see \cite{knobloch}), whilst the standing wave turns out to be stable in the presence of homogeneous Neumann boundary conditions. In particular, for $\delta = 0.8$ we set the length of the domain as $L = 2.38 \pi$ and integrated the system with periodic boundary conditions. The solution we obtained is shown in \cref{fig:Schnakenberg-pattern-tw}, where we see that a travelling wave emerges from a smaller amplitude standing wave transient. We also integrated the system with homogeneous Neumann boundary conditions to observe the solution shown in \cref{fig:Schnakenberg-pattern-sw}. 
        \begin{figure}
             \centering
             \includegraphics[width = \textwidth]{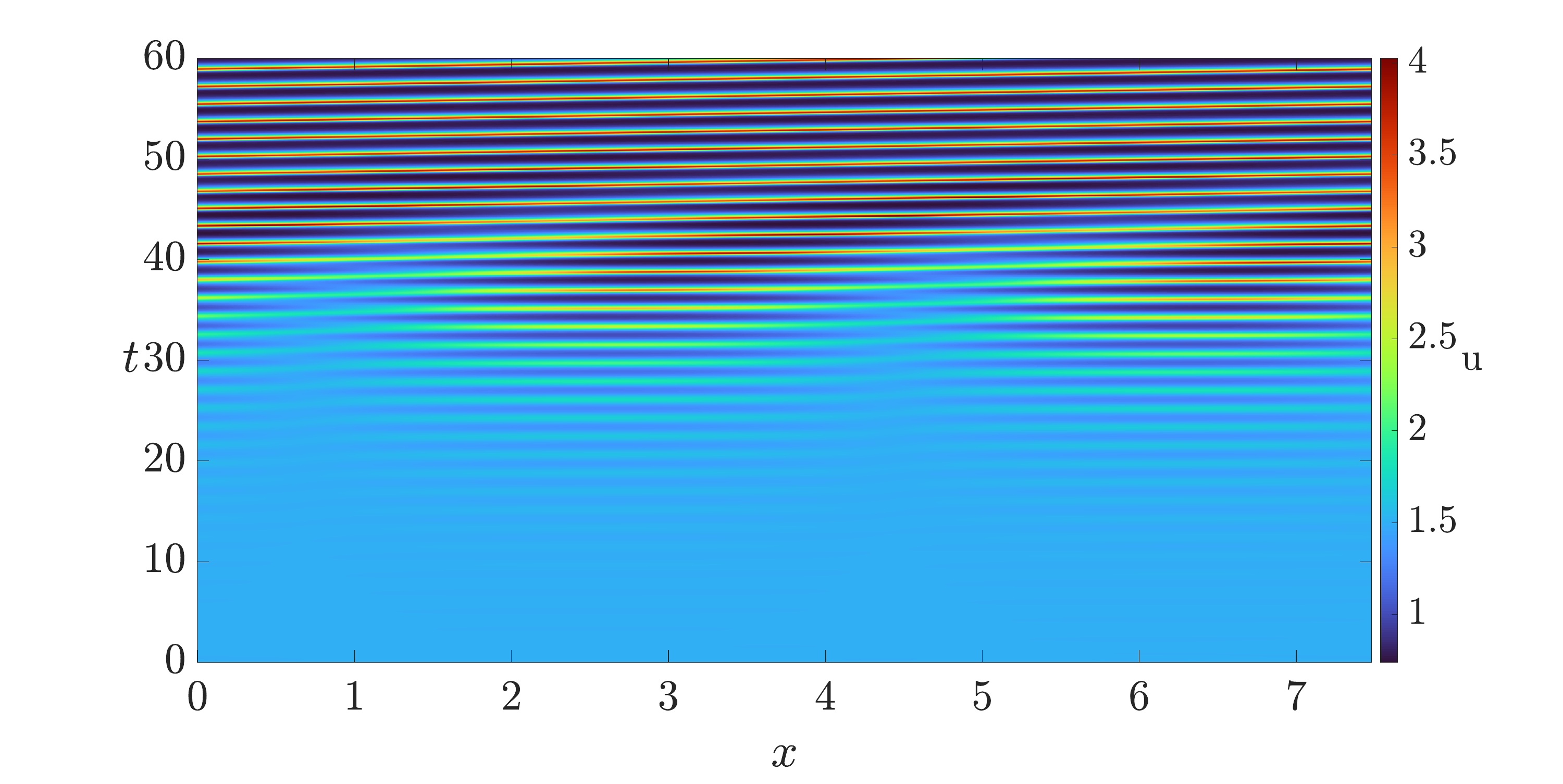}
             \caption{Solution $u$ of the 3-component Schnakenberg system, \cref{Schnakenberg}, with periodic boundary conditions, and with the diffusion matrix given by \cref{diffmatrixSchnakenberg} and $(a, b, \delta) = (1, 0.5, 0.8)$.}
            \label{fig:Schnakenberg-pattern-tw}
        \end{figure}

        \begin{figure}
             \centering
             \includegraphics[width = \textwidth]{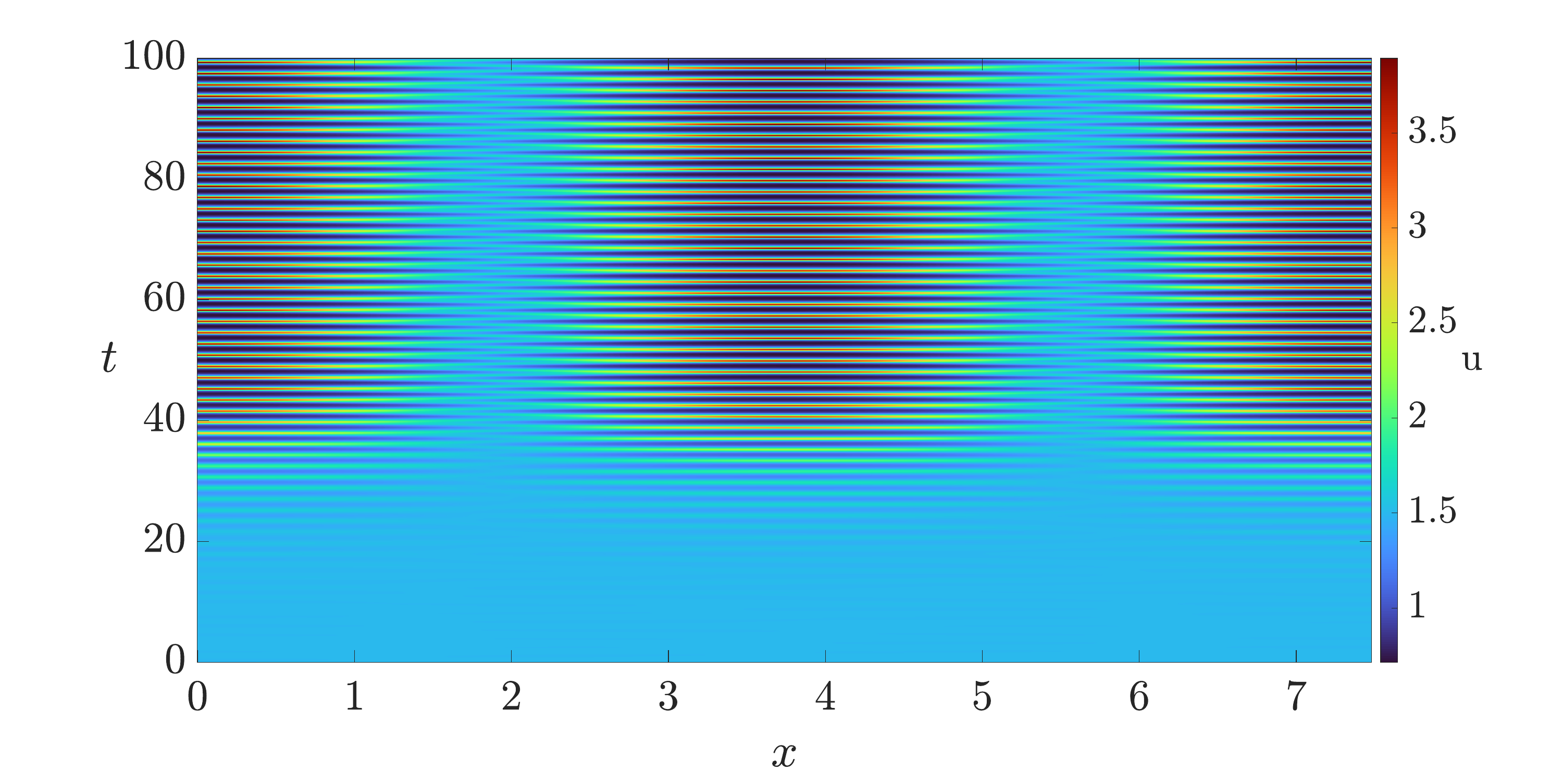}
             \caption{Solution $u$ of the 3-component Schnakenberg system, \cref{Schnakenberg}, with homogeneous Neumann boundary conditions, and with a diffusion matrix given by \cref{diffmatrixSchnakenberg} and $(a, b, \delta) = (1, 0.5, 0.8)$.}
            \label{fig:Schnakenberg-pattern-sw}
        \end{figure}

    \subsection{Hyperbolic reaction-diffusion system} \label{ex:fhn}
        We next provide an example based on a hyperbolic system. In \cite{ritchie2022turing}, the authors study a general class of second order in time cross-diffusion systems. Such systems have particular motivations coming from understanding the microscopic inertia responsible for preventing the infinite speed of propagation that occurs in usual diffusion equations. \ak{See \cite{mendez2010reaction} for an overview of hyperbolic and other formulations of reaction-transport processes aiming to overcome this difficulty, and some of their connections with thermodynamic principles.} They also arise as truncations or linearizations of some nonlocal-in-time systems, e.g.~those due to fixed or distributed time delays arising in gene expression \cite{sargood2022fixed}. As an example of applying our constructions to these systems, we consider the following system from \cite{ritchie2022turing}:
        \begin{align}
            \tau \frac{\partial^2 u}{\partial t^2} + \frac{\partial u}{\partial t} = d_{11} \frac{\partial^2 u}{\partial x^2} + d_{12} \frac{\partial^2 v}{\partial x^2} + f(u, v), \notag
            \\
            \tau \frac{\partial^2 v}{\partial t^2} + \frac{\partial v}{\partial t} = d_{21} \frac{\partial^2 u}{\partial x^2} + d_{22} \frac{\partial^2 v}{\partial x^2} + g(u, v), \label{hyperbolic}
        \end{align}
        with a particular choice of kinetics of FitzHugh-Nagumo type
        \begin{align*}
            f(u, v) &= u - a u^3 + v - b,
            \\
            g(u, v) &= b - u - 3 a v.
        \end{align*}
        This system turns out to be equivalent to a degenerate reaction-diffusion system given by
        \begin{align*}
            \frac{\partial u}{\partial t} &= w,
            \\
            \frac{\partial v}{\partial t} &= z,
            \\
            \frac{\partial w}{\partial t} &= \frac{1}{\tau} \left(u - a u^3 + v - b - w + d_{11} \frac{\partial^2 u}{\partial x^2} + d_{12} \frac{\partial^2 v}{\partial x^2}\right),
            \\
            \frac{\partial z}{\partial t} &= \frac{1}{\tau} \left(b - u - 3 a v - z + d_{21} \frac{\partial^2 u}{\partial x^2} + d_{22} \frac{\partial^2 v}{\partial x^2}\right).
        \end{align*}
        While the analysis in \cite{ritchie2022turing} can be used to determine if a given set of parameters admits Turing or wave instabilities, the approach we use below allows us to constructively build a particular set of diffusion parameters (for a slightly modified system). Importantly, the method here generalizes to systems of $n > 2$ hyperbolic equations, though it is easier to see in this simplified setting.
        
        The diffusion matrix is given by
        \begin{align*}
            \D = \begin{pmatrix}
                0 & 0 & 0 & 0
                \\
                0 & 0 & 0 & 0
                \\
                \frac{d_{11}}{\tau} & \frac{d_{12}}{\tau} & 0 & 0
                \\
                \frac{d_{21}}{\tau} & \frac{d_{22}}{\tau} & 0 & 0
            \end{pmatrix}.
        \end{align*}
        As this system involves non-diffusible elements (see e.g. \cite{korvasova2015investigating}), we cannot immediately apply the constructions developed throughout this article, as all the eigenvalues of $\hat D$ are equal to zero for any diffusion rates. However, if we introduce a small change to this matrix, corresponding to a kind of parabolic perturbation, then we will be able to analyze it as a reaction-cross-diffusion system to produce Turing or wave instabilities using the theory we have developed. In particular, if we consider
        \begin{align}
            \D = \begin{pmatrix}
                \sigma_1 & 0 & 0 & 0
                \\
                0 & \sigma_1 & 0 & 0
                \\
                \frac{d_{11}}{\tau} & \frac{d_{12}}{\tau} & \sigma_2 & 0
                \\
                \frac{d_{21}}{\tau} & \frac{d_{22}}{\tau} & 0 & \sigma_2
            \end{pmatrix}, \label{diffmatrixhyperbolic}
        \end{align}
        then
        \begin{align*}
            \D = \m{Q} \Db \m{Q}^{- 1},
        \end{align*}
        where
        \begin{align*}
            \m{Q} &= \left(
                \begin{array}{cccc}
                 \frac{d_{1 2} \sigma_1 \tau - d_{12} \sigma_2 \tau}{d_{12} d_{21} - d_{11} d_{22}} & \frac{d_{22} \left(\sigma_1 - \sigma_2\right) \tau}{d_{11} d_{22} - d_{12} d_{21}} & 0 & 0
                 \\[1ex]
                 \frac{d_{11} \left(\sigma_1 - \sigma_2\right) \tau}{d_{11} d_{22} - d_{12} d_{21}} & \frac{d_{21} \left(\sigma_1 - \sigma_2\right) \tau}{d_{12} d_{21} - d_{11} d_{22}} & 0 & 0
                 \\[1ex]
                 0 & 1 & 0 & 1
                 \\
                 1 & 0 & 1 & 0
                \end{array}
            \right),
            \\
            \Db &= \begin{pmatrix}
                \sigma_1 & 0 & 0 & 0
                \\
                0 & \sigma_1 & 0 & 0
                \\
                0 & 0 & \sigma_2 & 0
                \\
                0 & 0 & 0 & \sigma_2
            \end{pmatrix}.
        \end{align*}
        Now, if we consider the steady state given by
        \begin{align*}
            \mbf P = \left(u^*, v^*, 0, 0\right),
        \end{align*}
        where
        \begin{align*}
            u^* &= \frac{\sqrt[3]{2} M^2 + 6 a^3 - 2 a^2}{3 \cdot 2^{2/3} a^2 M},
            \\
            v^* &= \frac{-2^{2/3} M^2 - 6 \sqrt[3]{2} a^3 + 2 a^2 \left(3 b M + \sqrt[3]{2}\right)}{18 a^3 M},
            \\
            M &= \left(9 (1 - 3 a) a^4 b + \sqrt{N}\right)^{1/3},
            \\
            N &= (1 - 3 a)^2 a^6 \left(3 a \left(27 a b^2 - 4\right) + 4\right),
        \end{align*}
        and take, for example, the parameter values $(a, b, \tau) = (0.257, 0.98, 0.1)$, then
        \begin{align*}
            \jac = \left(
                \begin{array}{cccc}
                     0 & 0 & 1 & 0
                     \\
                     0 & 0 & 0 & 1
                     \\
                     6.30191 & 10 & -10 & 0
                     \\
                     -10. & -7.71 & 0 & -10
                \end{array}
            \right),
        \end{align*}
        which turns out to be stable. Furthermore, the submatrices $\m{S_{1, 2}}$ and $\m{S_{3, 4}}$ of $\m{Q}^{- 1} \jac \m{Q}$ turn out to be
        \begin{align*}
            \boldsymbol{\alpha_{1, 2}} = \begin{pmatrix}
                \frac{d_{22}}{\tau \left(\sigma_1 - \sigma_2\right)} & \frac{d_{21}}{\tau \left(\sigma_1 - \sigma_2\right)}
                 \\[1ex]
                 \frac{d_{12}}{\tau \left(\sigma_1 - \sigma_2\right)} & \frac{d_{11}}{\tau \left(\sigma_1 - \sigma_2\right)}
            \end{pmatrix}, \quad \text{and} \quad \boldsymbol{\alpha_{3, 4}} := \begin{pmatrix}
                - \frac{d_{22} + \sigma_1 - \sigma_2}{\tau \left(\sigma_1 - \sigma_2\right)} & - \frac{d_{21}}{\tau \left(\sigma_1 - \sigma_2\right)}
                \\
                - \frac{d_{12}}{\tau \left(\sigma_1 - \sigma_2\right)} & -\frac{d_{11} + \sigma_1 - \sigma_2}{\tau \left(\sigma_1 -\sigma_2\right)}
            \end{pmatrix},
        \end{align*}
        respectively. This implies that we can make any of these matrices unstable\ak{,} and that will let us generate diffusion-driven instabilities through $\sigma_1$ or $\sigma_2$, respectively. For concreteness, we set $\sigma_2 = 1$ and leave $\sigma_1$ as a free non-negative parameter.
        
        If we make $\boldsymbol{\alpha_{1, 2}}$ unstable, we can take $\sigma_1 = 0$ to generate a system with an instability and then increase that parameter a small amount (see \cref{lemma:positiveeig}). When setting $\sigma_1 = 0$, we get
        \begin{align*}
            \boldsymbol{\alpha_{1, 2}} = \left(
                \begin{array}{cc}
                 - \frac{d_{22}}{\tau \sigma_2} & - \frac{d_{21}}{\tau \sigma_2}
                 \\[1ex]
                 - \frac{d_{12}}{\tau \sigma_2} & - \frac{d_{11}}{\tau \sigma_2}
                \end{array}
            \right).
        \end{align*}
        The only thing we need to do is to find parameters so that $\boldsymbol{\alpha_{1, 2}}$ is unstable. In particular, if we set
        \begin{align}
            \left(d_{11}, d_{12}, d_{21}, d_{22}\right) = \left(- 1, - 1, 0, - 2\right), \label{diffparvalhyperbolic}
        \end{align}
        then $\boldsymbol{\alpha_{1, 2}}$ becomes
        \begin{align*}
            \begin{pmatrix}
                20 & 0
                \\
                10 & 10
            \end{pmatrix},
        \end{align*}
        which has positive real eigenvalues. We then take $\sigma_1 = \delta$ and vary this parameter to generate the dispersion relation shown in \cref{fig:disprelFHN}, where $\delta$ goes from 0 to 2 in uniform steps of size 0.4. This graph is particularly interesting as it lets us see that, depending on the value of $\delta$, we find an interaction between Turing and wave instabilities (the eigenvalue with the largest real part is complex with a non-zero imaginary part for low values of $k$ but it becomes real for larger wavenumbers). Note that, for each $\delta$ and $k$, we are plotting the two eigenvalues with the largest real parts, which each coincide with the complex-conjugate pairs for small $k$ but bifurcate for larger $k$ into two strictly real branches. As before, the orange curves correspond to the inadmissible value of $\delta = 0$, at which each curve tends to a different asymptotic eigenvalue shown in green lines. Again, we obtain the correct instability for those curves between the orange and red curves.
        
        In particular, if we set homogeneous Neumann boundary conditions, take $\delta = 1.3$ with a domain length $L = 6$, and integrate the system, we obtain the pattern shown in \cref{fig:hyperbolic-pattern}. On the other hand, if we set periodic boundary conditions, with the same value of $\delta$, but a domain length $L = 30$ and integrate the system, we obtain the pattern shown in \cref{fig:hyperbolic-pattern-periodic}.
        \begin{figure}
            \centering
            \includegraphics[width = \textwidth]{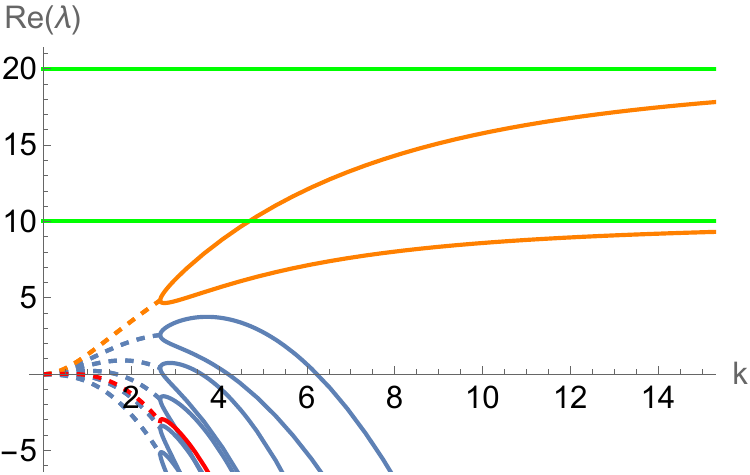}
            \caption{Dispersion relations of the hyperbolic system, \cref{hyperbolic}, when $(a, b, \tau) = (0.257, 0.98, 0.1)$, with a diffusion matrix given by \cref{diffmatrixhyperbolic}, and the diffusion parameters provided in \cref{diffparvalhyperbolic}, together with $\left(\sigma_1, \sigma_2\right) = (\delta, 1)$. Here, $\delta$ ranges from 0 (in orange) to 2 in uniform steps of size 0.4. The top (respectively, bottom) curve corresponds to $\delta = 0$ (respectively, $\delta = 2$). Moreover, the green lines correspond to the lines $\Re(\lambda) = 10$, and $\Re(\lambda) = 20$, whilst the red line is a wave bifurcation curve found for $\delta \approx 1.507568$.}
            \label{fig:disprelFHN}
        \end{figure}
        \begin{figure}
             \centering
             \includegraphics[width = \textwidth]{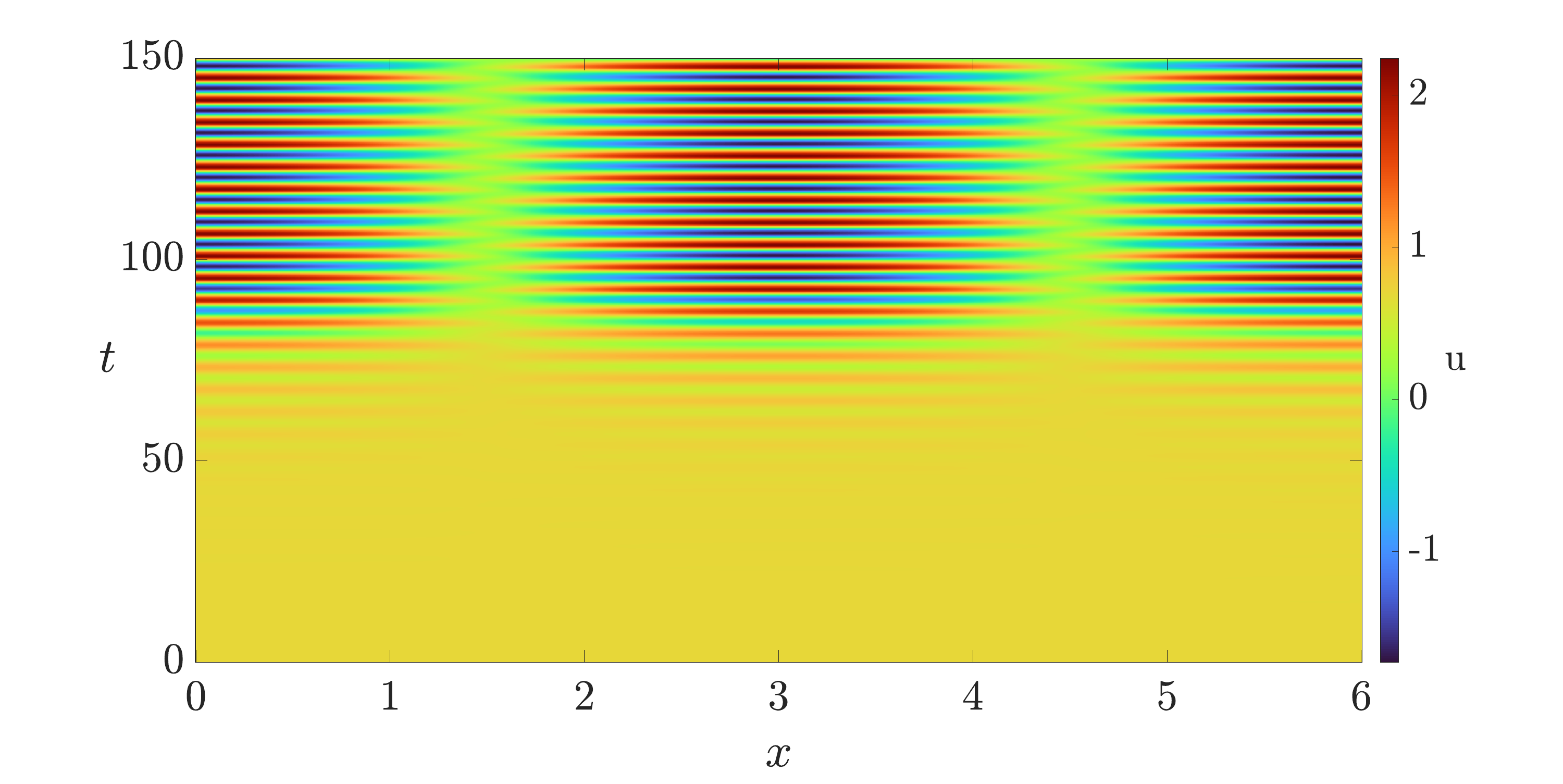}
             \caption{Solution $u$ of the 2-component hyperbolic system, \cref{hyperbolic}, with homogeneous Neumann boundary conditions, a diffusion matrix given by \cref{diffmatrixhyperbolic}, $(a, b, \tau) = (0.257, 0.98, 0.1)$, and the diffusion parameters provided in \cref{diffparvalhyperbolic}, together with $\left(\sigma_1, \sigma_2\right) = (1.3, 1)$.}
             \label{fig:hyperbolic-pattern}
        \end{figure}

        \begin{figure}
             \centering
             \includegraphics[width = \textwidth]{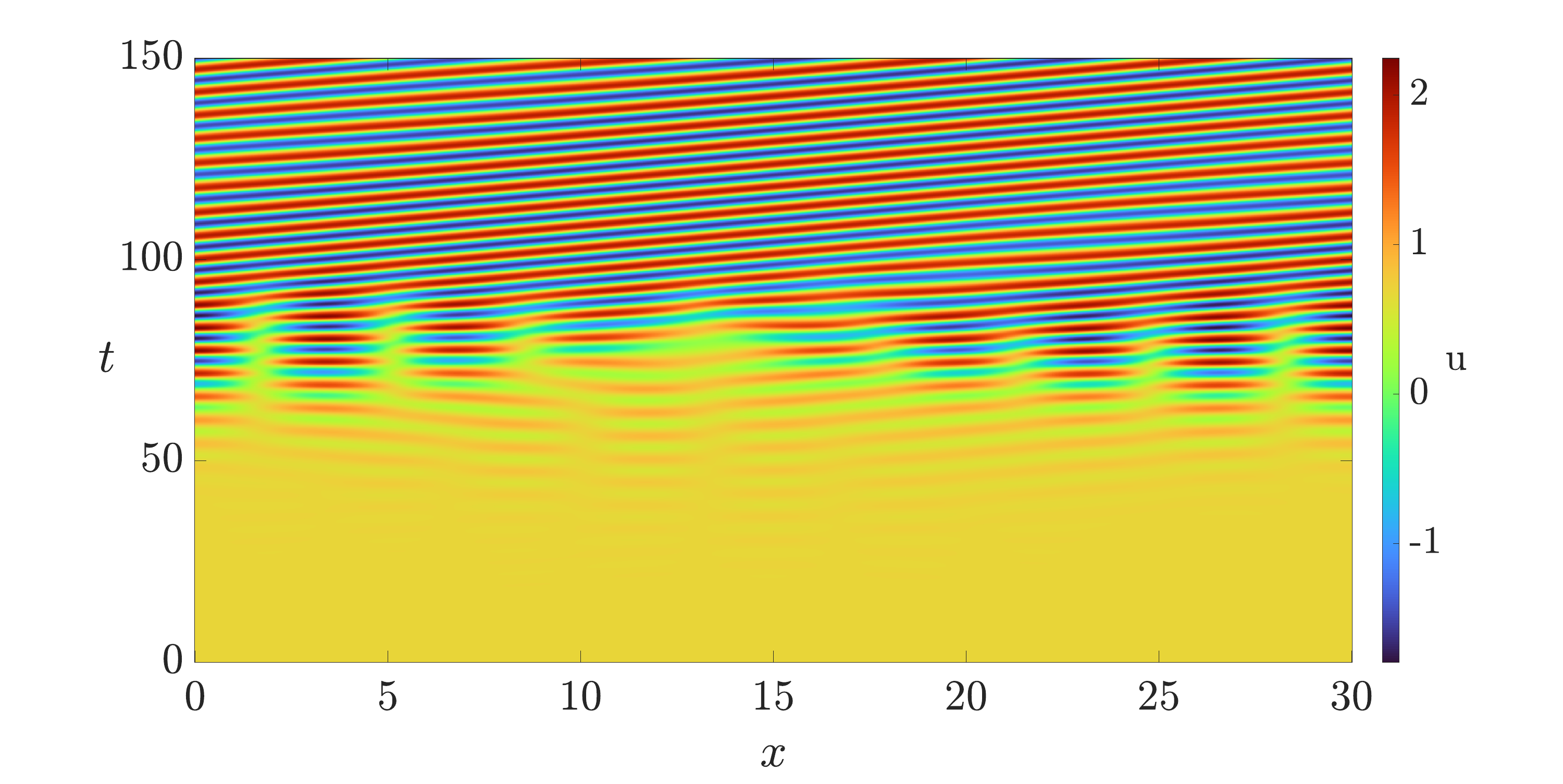}
             \caption{Solution $u$ of  the 2-component hyperbolic system, \cref{hyperbolic}, with periodic boundary conditions, a diffusion matrix given by \cref{diffmatrixhyperbolic}, $(a, b, \tau) = (0.257, 0.98, 0.1)$, and the diffusion parameters provided in \cref{diffparvalhyperbolic}, together with $\left(\sigma_1, \sigma_2\right) = (1.3, 1)$.}
             \label{fig:hyperbolic-pattern-periodic}
        \end{figure}

        The solution $u$ in \cref{fig:hyperbolic-pattern} takes negative values. This is not necessarily unphysical due to the nonlinearities being chosen to match a version of the FitzHugh-Nagumo model, where $u$ represents a membrane voltage, and hence its sign is not constrained. In other cases, however, one has to be careful using both hyperbolic terms or linear cross-diffusion to maintain positive solutions \cite{mendez2010reaction}, as we will see in the next example. 

        In this example, we used \cref{lemma:positiveeig,lemma:n=2} to generate diffusion-driven instabilities in a reaction-diffusion system arising from a hyperbolic equation. This example is full of interesting features as some interactions between Turing and wave instabilities can be found. Some of these interactions can be explained by the use of \cite[Theorem 4]{villar-sepulveda} for the matrix $\m{Q}^{- 1} \jac \m{Q}$ instead of simply $\jac$.

    \subsection{3-component malaria model}
        In \cite{villar-sepulveda}, the authors provided an example of a 3-component malaria model that was shown to be unable to exhibit Turing or wave instabilities for any diagonal diffusion matrix. We now show that we can generate diffusion-driven instabilities in this model if we consider chemotactic fluxes among the populations. In particular, we consider said system as
        \begin{align} \label{malaria}
            \partial_t \u = \mbf f(\u) + \partial_x \left(\m{D}(\u) \partial_x \u\right),
        \end{align}
        where
        \begin{align}
            \u = (H, I, P)^\intercal, \quad \mbf f(\u) = \begin{pmatrix}
                (b_H - d_H) H - c P H + r I
                \\
                - d_H I + c P H - r I
                \\
                - d_M P + b (Q - P) I
            \end{pmatrix}, 
        \end{align}
        and the nonlinear diffusion tensor is given by
        \begin{align}
            \m{D}(\u) = \begin{pmatrix}
                D_{1, 1} & \chi\left(\frac{H}{H^*} \right) D_{1, 2} & \chi\left(\frac{H}{H^*}\right) D_{1, 3}
                \\
                \chi\left(\frac{I}{I^*}\right) D_{2, 1} & D_{2, 2} & 0
                \\
                0 & 0 & D_{3, 3} 
            \end{pmatrix}, \quad \chi(V) = \frac{2 V}{1 + V^2},
        \end{align}
        The variables $H$ and $I$ represent the populations of healthy and infected humans, respectively, whilst $P$ is the population of infected mosquitoes. Moreover, $b_H$ represents the birth rate of healthy humans, $d_H$ the death rate of humans, $c$ the infection rate between humans, $r$ the human recovery rate, $d_M$ the death rate of infected mosquitoes, and $b$ is the infection rate between mosquitoes (see \cite{alonso} for discussion of substantially more complicated compartment-based models of malaria transmission dynamics, from which the present model is a simplified albeit spatially-extended variant). Note that healthy mosquitoes do not appear explicitly in this model, as $Q$ represents the total (fixed) mosquito population, so $Q - P$ is the population of healthy mosquitoes.
        
        We interpret the cross-diffusion terms as chemotactic fluxes \cite{hillen2009user} with saturating sensitivities corresponding to volume-filling effects \cite{bubba2020discrete}. This represents populations moving along signalling gradients of the other populations. We have assumed several zeros in the diffusion matrix as we do not expect mosquitoes' spatial movements to be largely influenced directly by human populations. To fix ideas, we will focus on generating Turing instabilities by varying the diffusion matrix, with a constrained diffusion tensor having the form given by \cref{diffmatrixmalaria}. The variables $H^*$ and $I^*$ represent homogeneous steady-state values of these parameters. Due to the choice of the sensitivity function $\chi$, we have that $\chi(1) = 1$. Hence, after linearizing around such a homogeneous steady state, we have the linear diffusion tensor:
        \begin{align}
            \D = \begin{pmatrix}
                D_{1, 1} & D_{1, 2} &  D_{1, 3}
                \\
                 D_{2, 1} & D_{2, 2} & 0
                \\
                0 & 0 & D_{3, 3} 
            \end{pmatrix}.\label{diffmatrixmalaria}
        \end{align} 
        

        To preserve the physical interpretations of this model, we explicitly state that all the variables and parameters are non-negative, $b_H > d_H > 0$, and $P \leq Q$. We also assume the diffusion matrix to have positive real eigenvalues, so we need $D_{3, 3} > 0$, together with
        \begin{align*}
            D_{1, 1} + D_{2, 2} &> 0,
            \\
            D_{1, 1} D_{2, 2} - D_{1, 2} D_{2, 1} &> 0,
            \\
            \left(D_{1, 1} + D_{2, 2}\right)^2 - 4 \left(D_{1, 1} D_{2, 2} - D_{1, 2} D_{2, 1}\right) &> 0.
        \end{align*} 
        This system has only two equilibria, $\mathbf 0 = (0, 0, 0)$ and $\mbf P = \left(H^*, I^*, P^*\right)$, where
        \begin{align*}
        	H^* &= \frac{d_H d_M \left(d_H + r\right)}{b  \left(d_H \left(c Q + d_H + r\right) - b_H \left(d_H + r\right)\right)},
        	\\
        	I^* &= \frac{d_M \left(b_H - d_H\right) \left(d_H + r\right)}{b \left(d_H \left(c Q + d_H + r\right) - b_H \left(d_H + r\right)\right)},
        	\\
        	P^* &= \frac{\left(b_H - d_H\right) \left(d_H + r\right)}{c d_H}.
        \end{align*}
        We assume $d_H \left(c Q + d_H + r\right) - b_H \left(d_H + r\right) > 0$ to ensure the feasibility of the endemic (positive) steady state. Note that a Turing instability around the origin will not be biologically relevant since the existence of such patterning would imply that some of the variables become negative.
    
        If we set the following parameter values:
        \begin{align}
            \left(b_H, d_H, c, r, d_M, b, Q\right) = (1, 0.1, 0.25, 0.5, 0.3, 0.1, 100), \label{parval_malaria}
        \end{align}
        then the Jacobian matrix of the system at $\mbf P$ in the absence of diffusion becomes
        \begin{align*}
            \jac = \left(
                \begin{array}{ccc}
                     -4.5 & 0.5 & -0.0229592
                     \\
                     5.4 & -0.6 & 0.0229592
                     \\
                     0 & 7.84 & -0.382653
                \end{array}
            \right),
        \end{align*}
        which has eigenvalues $(- 5.14441, - 0.169119 \pm 0.0537504 i)$, meaning that it is stable.
       
        Now, note that the diffusion matrix has the diagonal form
        \begin{align*}
            \D = \m{R} \Db \m{R}^{- 1},
        \end{align*}
        where
        \begin{align*}
            \m{R} &= \left(
                \begin{array}{ccc}
                     R_{1, 1} & R_{1, 2} & R_{1, 3}
                     \\
                     1 & 1 & R_{2, 3}
                     \\
                     0 & 0 & 1
                \end{array}
            \right),
            \\[1.5ex]
            \Db &= \left(
                \begin{array}{ccc}
                     \sigma_1 & 0 & 0
                     \\
                     0 & \sigma_2 & 0
                     \\
                     0 & 0 & \sigma_3
                \end{array}
            \right),
            \\
            R_{1, 1} &= - \frac{\sqrt{\left(D_{1, 1} - D_{2, 2}\right)^2 + 4 D_{1, 2} D_{2, 1}} - D_{1, 1} + D_{2, 2}}{2 D_{2, 1}},
            \\
            R_{1, 2} &= \frac{\sqrt{\left(D_{1, 1} - D_{2, 2}\right)^2 + 4 D_{1, 2} D_{2, 1}} + D_{1, 1} - D_{2, 2}}{2 D_{2, 1}},
            \\
            R_{1, 3} &= \frac{D_{1, 3} \left(D_{3, 3} - D_{2, 2}\right)}{\left(D_{1, 1} - D_{3, 3}\right) \left(D_{2, 2} - D_{3, 3}\right) - D_{1, 2} D_{2, 1}},
            \\
            R_{2, 3} &= \frac{D_{1, 3} D_{2, 1}}{\left(D_{1, 1} - D_{3, 3}\right) \left(D_{2, 2} - D_{3, 3}\right) - D_{1, 2} D_{2, 1}},
            \\
            \sigma_1 &= \frac{1}{2} \left(D_{1, 1} + D_{2, 2} - \sqrt{\left(D_{1, 1} - D_{2, 2}\right)^2 + 4 D_{1, 2} D_{2, 1}}\right),
            \\
            \sigma_2 &= \frac{1}{2} \left(D_{1, 1} + D_{2, 2} + \sqrt{\left(D_{1, 1} - D_{2, 2}\right)^2 + 4 D_{1, 2} D_{2, 1}}\right),
            \\
            \sigma_3 &= D_{3, 3}.
        \end{align*}
        While this lets us compute $\m{R}^{- 1} \jac \m{R}$ explicitly, we omit its full expression as it will not provide any obvious insights. What we do care about is the set of elements along its diagonal. The simplest such element is $M_3 = \left(\m{R}^{- 1} \jac \m{R}\right)_{3, 3}$ which is given by
        \begin{align*}
            M_3 &= \frac{7.84 D_{1, 3} D_{2, 1}}{D_{1, 1} D_{2, 2} -D_{1, 2} D_{2, 1}} - 0.382653,
        \end{align*}
        when $D_{3, 3} = 0$.
        In particular, if we set the diffusion rates as
        \begin{align}
            \left(D_{1, 1}, D_{1, 2}, D_{1, 3}, D_{2, 1}, D_{2, 2}, D_{3, 3}\right) = (1, 0.5, 0.307225, 0.870348, 1, \delta), \label{diffparval_malaria}
        \end{align}
        we find that $M_3 = 3.32887$ when $\delta = 0$, so this can act as a principal unstable submatrix. Moreover, when we let $\delta$ range from 0 to 0.1 in discrete steps, we obtain the dispersion relations shown in \cref{fig:disprelmalaria}.
        \begin{figure}
            \centering
            \includegraphics[width=\textwidth]{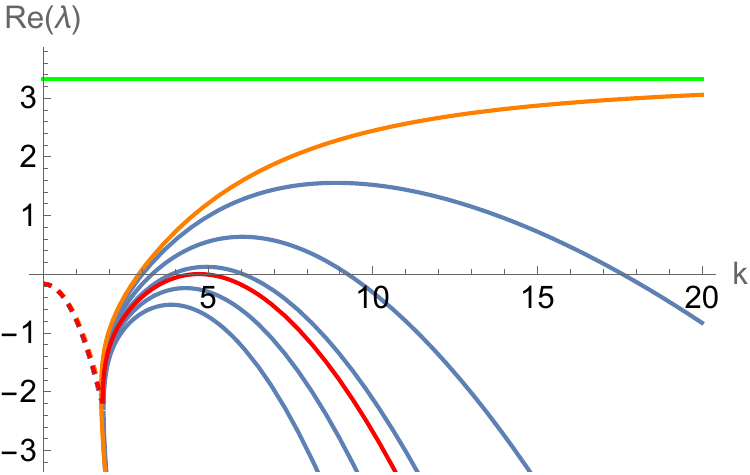}
            \caption{Dispersion relations of \cref{malaria} for the parameter values shown in \cref{parval_malaria,diffparval_malaria}. Here, $\delta$ ranges from 0 (in orange) to 0.1 in discrete steps. The top (respectively, bottom) curve corresponds to $\delta = 0$ (respectively, $\delta = 0.1$). Moreover, the green line corresponds to the horizontal line $\Re(\lambda) = 3.32887$, whilst the red line is a Turing bifurcation curve found for $\delta \approx 0.0561876$.}
            \label{fig:disprelmalaria}
        \end{figure}
        This implies that we have created a diffusion-driven instability in this model for different values of $\delta > 0$.
        
        \begin{figure}
            \centering
            \includegraphics[width = 0.27\textwidth]{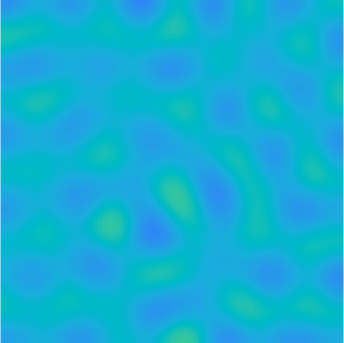} \hspace{0.1cm}     \includegraphics[width = 0.27\textwidth]{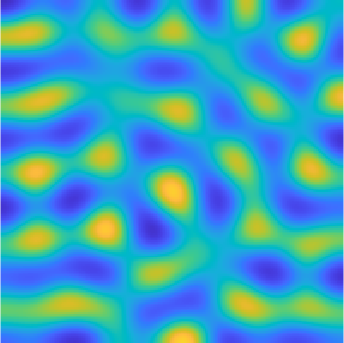} \hspace{0.1cm}
            \includegraphics[width = 0.27\textwidth]{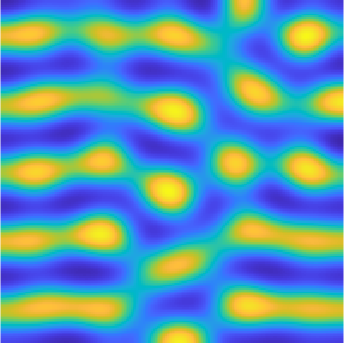} \hspace{0.1cm}
            \includegraphics[width = 0.065\textwidth]{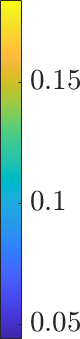}
            \\
            \hspace{-1cm} $t = 20$ \hspace{2cm} $t = 50$ \hspace{2cm} $t = 100$
            \caption{Three time points of the $H$-component of the solution of the 3-component malaria model, \cref{malaria}, solved in a square domain of side length $L = 6$ with homogeneous Neumann boundary conditions. The diffusion matrix is given by \cref{diffmatrixmalaria} with the kinetic parameters and diffusion rates as in \cref{parval_malaria,diffparval_malaria}, respectively. Solutions in $P$ and $I$ are out-of-phase with these patterns, but otherwise qualitatively similar.}
            \label{fig:malariapattern}
        \end{figure}

        We show examples of two-dimensional plots obtained in \cref{fig:malariapattern}, where the small perturbation of the homogeneous equilibrium relaxes to a stationary pattern over time. We remark that the choice of nonlinear diffusion here was motivated in part because linear cross-diffusion terms will not ensure that the variables remain positive, breaking both the boundedness of the solutions due to the quadratic nonlinearities and the physical interpretations of the state variables as populations. Nevertheless, the theory developed here is immediately applicable to suitable parameterized forms of nonlinear diffusion as shown in this example.
        

    \subsection{Keller-Segel model}
        One major motivation for cross-diffusion systems is the well-known Keller-Segel model for chemotaxis (see \cite{horstmann20031970, Keller-Segel}, and references therein). Here, we provide an example showing how we can use \cref{th:GIORNO-GIOVANNA} to generate diffusion-driven instabilities. We consider a three-species formulation (again to allow for wave instabilities) given by
        \begin{align}
            \partial_t \u = \mbf f(\mbf u) + \nabla \cdot (\mbf D \nabla \u), \label{KS}
        \end{align}
        where
        \begin{align}
            \mbf f(\u) = \begin{pmatrix}
                b u (1 - u)
                \\
                c u + e v + f w - \varepsilon v^3
                \\
                g u + h v + j w - \varepsilon w^3
            \end{pmatrix}, \quad \text{and} \quad \mbf D = \begin{pmatrix}
                d & - \frac{2 a u}{1 + u^2} & 0
                \\
                0 & 1 & 0
                \\
                0 & 0 & 1
            \end{pmatrix}, \label{KS-terms}
        \end{align}
        with $b, c, e, f, g, h, j \in \mathbb R$ being parameters of the reaction of the system, $a, d > 0$ being associated with the diffusion of the species in the system, and $0 < \varepsilon \ll 1$. In this example, we will assume that $d$ is fixed and we need to construct $\jac$ to obtain Turing and wave instabilities with a constrained Jacobian matrix having the form given by \cref{KS-diffmatrix}. We remark that the cubic nonlinearities in the equations for $v$ and $w$ are used solely to bound solutions, with linear interactions chosen for simplicity, but one can view the forthcoming linear analysis as applicable to other 3-species variations of these reaction kinetics.
        
        If we focus on the homogeneous steady state given by $\mbf P = (1, 0, 0)^\intercal$, then
        \begin{align}
            \jac = \begin{pmatrix}
                - b & 0 & 0
                \\
                c & e & f
                \\
                g & h & j
            \end{pmatrix}, \quad \text{and} \quad \D = \begin{pmatrix}
                d & - a & 0
                \\
                 0 & 1 & 0
                 \\
                 0 & 0 & 1
            \end{pmatrix}. \label{KS-diffmatrix}
        \end{align}        
        Here, we consider $d \neq 1$ to ensure that $\D$ is diagonalizable. With this, we have
        \begin{align*}
            \D =  \m{Q} \Db \m{Q}^{- 1},
        \end{align*}
        where
        \begin{align*}
            \m{Q} = \begin{pmatrix}
                0 && \frac{a}{d - 1} && 1
                \\[2ex]
                0 && 1 && 0
                \\
                1 && 0 && 0    
            \end{pmatrix}, \qquad \Db = \diag\left\{1, 1, d\right\}.
        \end{align*}        
        Therefore, we have that
        \begin{align*}
            \m{Q}^{- 1} \jac \m{Q} = \begin{pmatrix}
                j && \frac{a g}{d - 1} + h && g
                \\[2ex]
                f && \frac{a c}{d - 1} + e && c
                \\[2ex]
                - \frac{a f}{d - 1} && - \frac{a (a c + b d - b + (d - 1) e)}{(d - 1)^2} && - b - \dfrac{a c}{d - 1}
            \end{pmatrix},
        \end{align*}
        which has the following eigenvalues:
        \begin{align*}
            \lambda_1 &= - b,
            \\
            \lambda_\pm &= \frac{e + j \pm \sqrt{(e + i)^2 - 4 (e i - f h)}}{2}.
        \end{align*}
        This entails that, to make this matrix stable, we need to have $b > 0$, $e + j < 0$, and $e j - f h > 0$. Now, we fix $a = 3$. With this, we have two cases:
        
        \subsubsection*{The case $d < 1$:} In this case, following \cref{lemma:construction}, as there is only one entry in $\Db$ with the lowest value located in the lower right corner, we have that the eigenvalues of $\linop$ will be led by the bottom right entry of $\m{Q}^{- 1} \jac \m{Q}$ when said entry \ak{is} large. For example, if we consider $d = 0.9$, and consider the parameters
        \begin{align}
            (b, c, e, f, g, h, j) = (1, 36.7, - 32, - 1, - 1.5, - 1618, - 64), \label{KS-parvaldl1}
        \end{align}
        we have that the bottom right element of $\m{Q}^{- 1} \jac \m{Q}$ equals $1100 > 0$. This gives rise to the dispersion relation shown in \cref{fig:KS-dl1}. This shows that we have generated a diffusion-driven instability for the parameter values chosen. Furthermore, when $k = 5$, we have that $\lambda_{\max} = 20.4439 \in \mathbb R$, which implies that we are seeing a Turing instability.

        Furthermore, to check our prediction, we integrated the system numerically and obtained the solution shown in \cref{fig:KS-Turing-localized}. The solution initially forms a series of spikes that merge, eventually tending towards peaks commonly seen in  Keller-Segel type models more generally \cite{horstmann20031970,painter2011spatio}.
        \begin{figure}
            \centering
            \includegraphics[width = \textwidth]{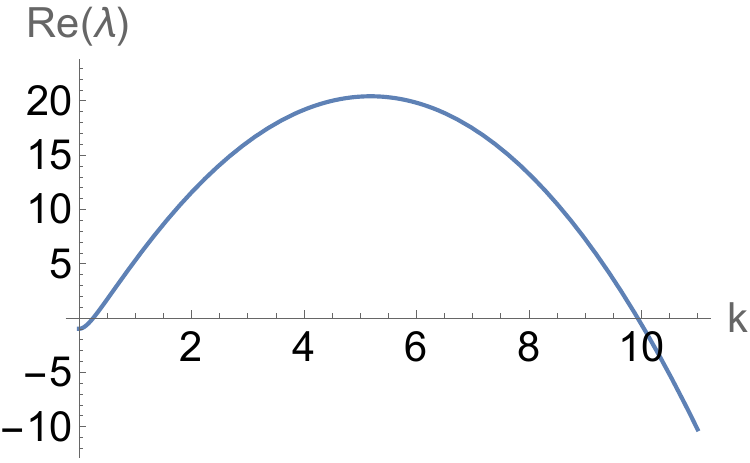}
            \caption{Dispersion relation for the Keller-Segel model, \cref{KS}, with \eqref{KS-terms}, for $a = 3$ and the parameter values as in \cref{KS-parvaldl1}, with $d = 0.9$.}
            \label{fig:KS-dl1}
        \end{figure}
        \begin{figure}
            \centering
            \includegraphics[width = \textwidth]{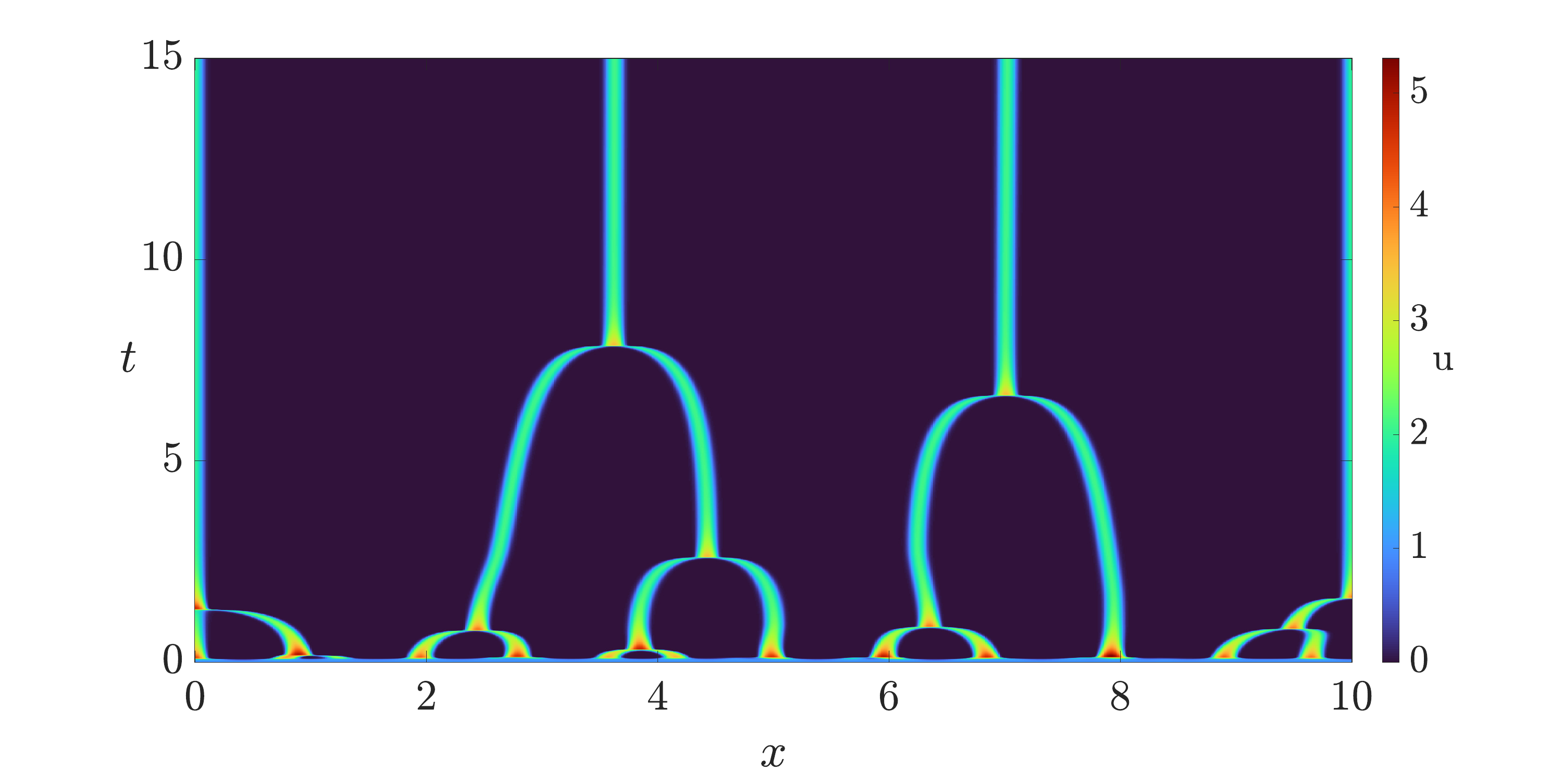}
            \caption{Solution $u$ of the Keller-Segel model, \cref{KS}, with \eqref{KS-terms}, when the parameters are fixed as in \cref{KS-parvaldl1} with $a = 3$ and $d = 0.9$.}
            \label{fig:KS-Turing-localized}
        \end{figure}
        
        \subsubsection*{The case $d > 1$:} In this case, once again following \cref{lemma:construction}, as there are two entries in $\Db$ with the lowest value, we can make the 1, 2-submatrix of $\m{Q}^{- 1} \jac \m{Q}$ unstable on a large scale to produce wave instabilities. In particular, if we consider $d = 1.1$ and
        \begin{align}
            (b, c, e, f, g, h, j) = (8, 33.75, - 512.5, - 500, - 0.41922, 512.577, 500), \label{KS-parval-dg1}
        \end{align}
        we get that this principal submatrix will be given by
        \begin{align*}
            \begin{pmatrix}
                500 & 500
                \\
                - 500 & 500
            \end{pmatrix},
        \end{align*}
        which is unstable with two complex conjugate eigenvalues. This gives rise to the dispersion relation shown in \cref{fig:KS-dg1}, which shows that we have generated a diffusion-driven instability for the parameter values chosen. Furthermore, when $k = 4$, we have that $\lambda_{\max} = 26.1765 \pm 75.7567 i \in \mathbb C$, which implies that we are seeing a wave instability.

        Once again, to check the outcome of the design process we have carried out, we integrated the system with homogeneous Neumann boundary conditions for these parameter values. In \cref{fig:KS-sw}, we show a solution leading to spatiotemporal oscillations, which is an outcome one expects from a wave instability \cite{Villar-sepulveda-wave}.
        \begin{figure}
            \centering
            \includegraphics[width = \textwidth]{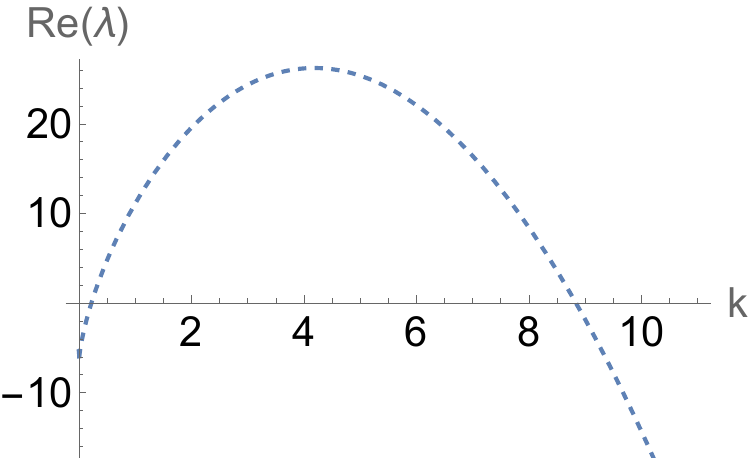}
            \caption{Dispersion relation for the Keller-Segel model, \cref{KS}, with \eqref{KS-terms}, for $a = 3$ and the parameter values as in \cref{KS-parval-dg1}, with $d = 1.1$.}
            \label{fig:KS-dg1}
        \end{figure}
        \begin{figure}
            \centering
            \includegraphics[width = \textwidth]{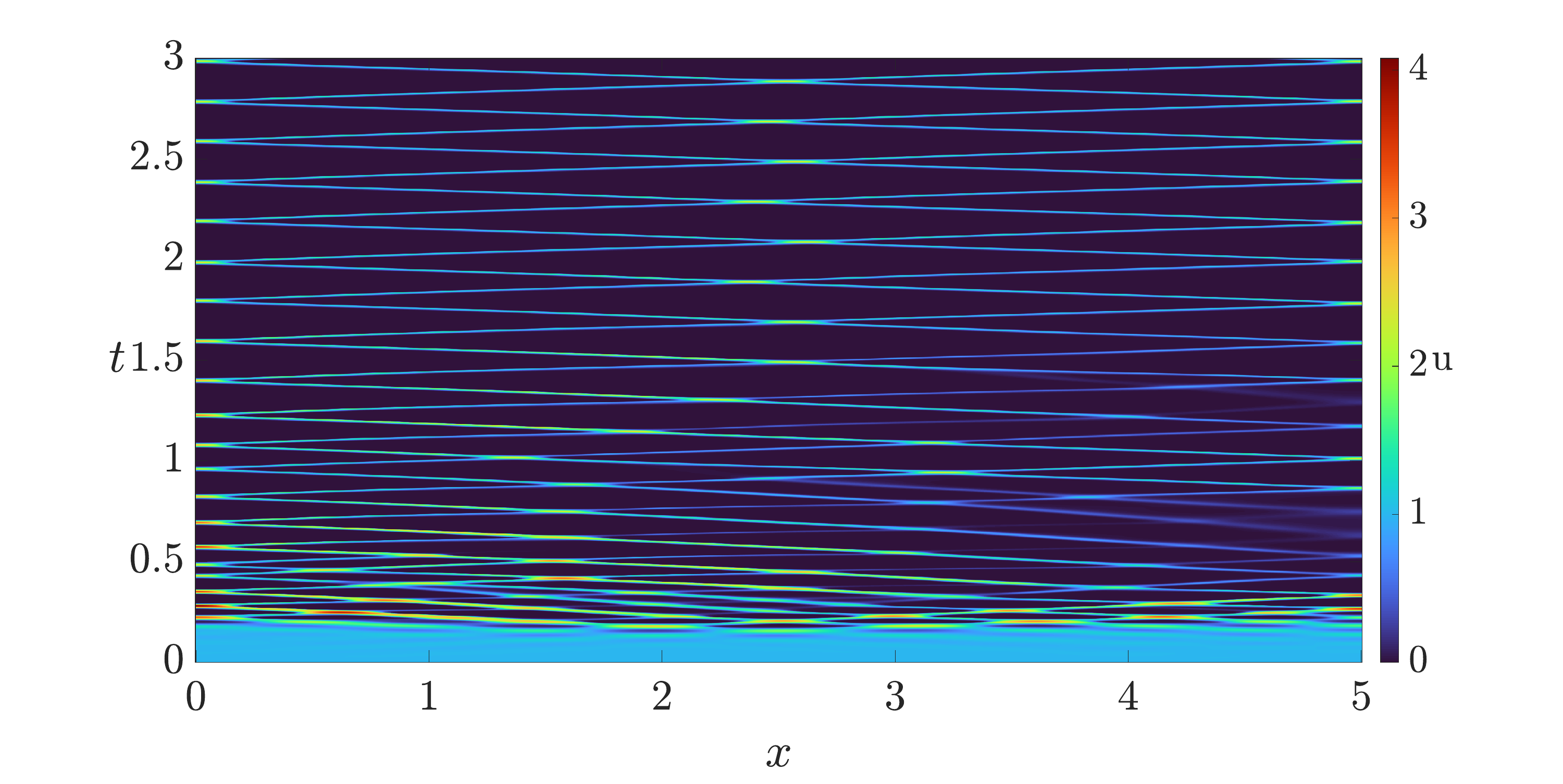}
            \caption{Solution $u$ of the Keller-Segel model, \cref{KS}, with \eqref{KS-terms} using homogeneous Neumann boundary conditions, when the parameters are fixed as in \cref{KS-parval-dg1} with $a = 3$ and $d = 1.1$.}
            \label{fig:KS-sw}
        \end{figure}
 
        In many instances of Keller-Segel-type models, the value of $d$ is equal to 1. That case is degenerate because the leading submatrix becomes non-diagonalizable for that value of $d$. In any case, just by continuity, one can continue the bifurcation using numerical continuation software and still get diffusion-driven instabilities when $d = 1$.


\section{Discussion}\label{sec:conclusions}
    We have developed an approach to designing reaction-cross-diffusion systems exhibiting Turing or wave instabilities for systems involving several components. We showed how to constructively build unstable principal submatrices which essentially encode these instabilities, and place these within the design constraints of a given system. The assumptions in \cref{th:main,th:GIORNO-GIOVANNA} are rather mild, suggesting that this process can be employed for large classes of systems, leaving flexibility in the constructions to allow for additional design constraints if they arise. This demonstrates some of the freedom gained in allowing non-diagonal diffusion tensors, or in having complete control over linearized kinetics. Finally, we showed these ideas in four distinct models with a range of linear and nonlinear transport, highlighting further issues of model construction that may arise, such as in preventing blowups or ensuring the positivity of solutions when this is required. We remark that all but the first example demonstrated employing these ideas in cases where we did not allow complete freedom over either matrix in the linearization, but instead had to work in a more constrained setting. \ak{An important thermodynamic consideration is to restrict to diffusion matrices $\D$ that are symmetric, as these can more easily be justified from the thermodynamic perspective \cite{klika2024onsager}. In such cases, Theorem \ref{th:GIORNO-GIOVANNA} can be applied directly, although the constructions used in proving Theorem \ref{th:main} would require significant changes that we leave for future work.}

    There are numerous future directions from the work carried out here. One could try to generalize \cref{th:GIORNO-GIOVANNA} by assuming that the diffusion matrix $\D$ admits complex eigenvalues. While such matrices are easier to work with in some sense (the set of diagonalizable matrices is dense in the set of invertible matrices), complex eigenvalues would require more care in carrying out the procedure given in \cref{lemma:positiveeig}, and hence may require entirely different constructions to arrive at something like \cref{th:GIORNO-GIOVANNA}. In a different direction, one could consider evolution equations with nonlocal spatial operators. Such integro-differential equations show stark differences with their classical local counterparts, allowing for spatiotemporal pattern formation in scalar models \cite{gourley2001spatio}, and showing dependence on the geometry not exhibited by local models even in the linear stability theory \cite{jewell2023patterning}. Such cases are more intricate as the symbol of the operator (i.e. the function of the wavenumber one obtains from the linearization in spectral space) is no longer a polynomial in general. Thus, developing an analogue of the procedure presented here would be substantially more involved. In contrast, the results we have presented essentially apply even to reaction-cross-diffusion systems posed on general Riemannian manifolds or even networks, modulo finite-size effects \cite[Section 3]{krause2021modern}.


    

    Although the linear theory developed is useful, Turing and wave instabilities do not ensure on their own that one will always be able to find specific kinds of solutions emerging from the instability of a spatially homogeneous equilibrium. We focused on the instability of a single homogeneous steady state, but realistic systems may admit more than one such state, leading to cases where the linear theory fails to capture the emergence of patterning due to other attracting states \cite{Andrewturing}. As we showed in the first example, calculations of the criticality of the bifurcations can be invaluable to ensure that a nearby stable pattern exists in the case of a supercritical bifurcation. In contrast, a subcritical bifurcation is a situation where the linear theory is not always a good indicator of subsequent emergent behaviour \cite{FahadWoods,villar2023degenerate,burke2007homoclinic,Villar-sepulveda-wave}. We note, in particular, that the two-species Keller-Segel model with logistic growth can exhibit spatiotemporal chaos after a subcritical Turing instability, despite such two-species systems not exhibiting wave instabilities or even Hopf bifurcations \ak{of homogeneous states} \cite{painter2011spatio} \ak{(though see \cite{ei2014spatio, kong2024existence} for cases where patterned states undergo Hopf bifurcations leading to these spatiotemporal behaviours)}. Hence, there is substantial work to be done in building design principles \ak{that} extend to the weakly nonlinear setting, and even developing a language of constraints for nonlinearities. \ak{Related to this limitation, the assumption of homogeneous Neumann or periodic conditions precludes the interplay of more complex boundary conditions which can modify patterns and even initiate patterning outside of the Turing space determined here via linear theory \cite{maini1997boundary, krause2021isolating}.}

    In a different direction, we note that, depending on the purpose of the mathematical model, having too much freedom in the design can lead to unsatisfying theories. Given full control over the nonlinearities, one can design even two-component reaction-diffusion systems that exhibit a wide variety of pre-defined behaviours \cite{woolley2021bespoke,woolley2025bespoke}, with essentially the freedom to match any kind of patterned state. Given such freedom, one must impose constraints that involve both linear and nonlinear aspects of the system under study to ensure that the choices made make sense in the modelling context. Similarly, while the results we have presented are general and powerful, one can argue that they provide too much freedom. In particular, there are few systems where one has complete control over either transport or reactions. Nevertheless, experimental protocols are increasingly providing more ways to create and tune a number of systems to exhibit patterning \cite{horvath2009experimental,karig2018stochastic, konow2021insights}, and the fundamental routes to instabilities provided here can provide some guidance on pursuing these in increasingly sophisticated systems even under constraints. Importantly, we note that the ideas used apply even with more constraints in the given system, as shown in the examples.


\bmhead{Acknowledgments}

E.~V-S. has received PhD funding from ANID, Beca Chile Doctorado en el extranjero, number 72210071.

The authors have no conflicts of interest to declare.

\bibliography{refs}


\begin{thebibliography}{76}
\ifx \bisbn   \undefined \def \bisbn  #1{ISBN #1}\fi
\ifx \binits  \undefined \def \binits#1{#1}\fi
\ifx \bauthor  \undefined \def \bauthor#1{#1}\fi
\ifx \batitle  \undefined \def \batitle#1{#1}\fi
\ifx \bjtitle  \undefined \def \bjtitle#1{#1}\fi
\ifx \bvolume  \undefined \def \bvolume#1{\textbf{#1}}\fi
\ifx \byear  \undefined \def \byear#1{#1}\fi
\ifx \bissue  \undefined \def \bissue#1{#1}\fi
\ifx \bfpage  \undefined \def \bfpage#1{#1}\fi
\ifx \blpage  \undefined \def \blpage #1{#1}\fi
\ifx \burl  \undefined \def \burl#1{\textsf{#1}}\fi
\ifx \doiurl  \undefined \def \doiurl#1{\url{https://doi.org/#1}}\fi
\ifx \betal  \undefined \def \betal{\textit{et al.}}\fi
\ifx \binstitute  \undefined \def \binstitute#1{#1}\fi
\ifx \binstitutionaled  \undefined \def \binstitutionaled#1{#1}\fi
\ifx \bctitle  \undefined \def \bctitle#1{#1}\fi
\ifx \beditor  \undefined \def \beditor#1{#1}\fi
\ifx \bpublisher  \undefined \def \bpublisher#1{#1}\fi
\ifx \bbtitle  \undefined \def \bbtitle#1{#1}\fi
\ifx \bedition  \undefined \def \bedition#1{#1}\fi
\ifx \bseriesno  \undefined \def \bseriesno#1{#1}\fi
\ifx \blocation  \undefined \def \blocation#1{#1}\fi
\ifx \bsertitle  \undefined \def \bsertitle#1{#1}\fi
\ifx \bsnm \undefined \def \bsnm#1{#1}\fi
\ifx \bsuffix \undefined \def \bsuffix#1{#1}\fi
\ifx \bparticle \undefined \def \bparticle#1{#1}\fi
\ifx \barticle \undefined \def \barticle#1{#1}\fi
\bibcommenthead
\ifx \bconfdate \undefined \def \bconfdate #1{#1}\fi
\ifx \botherref \undefined \def \botherref #1{#1}\fi
\ifx \url \undefined \def \url#1{\textsf{#1}}\fi
\ifx \bchapter \undefined \def \bchapter#1{#1}\fi
\ifx \bbook \undefined \def \bbook#1{#1}\fi
\ifx \bcomment \undefined \def \bcomment#1{#1}\fi
\ifx \oauthor \undefined \def \oauthor#1{#1}\fi
\ifx \citeauthoryear \undefined \def \citeauthoryear#1{#1}\fi
\ifx \endbibitem  \undefined \def \endbibitem {}\fi
\ifx \bconflocation  \undefined \def \bconflocation#1{#1}\fi
\ifx \arxivurl  \undefined \def \arxivurl#1{\textsf{#1}}\fi
\csname PreBibitemsHook\endcsname

\bibitem{Turing1952}
\begin{barticle}
\bauthor{\bsnm{Turing}, \binits{A.M.}}:
\batitle{The chemical basis of morphogenesis}.
\bjtitle{Philosophical Transactions of the Royal Society of London B:
  Biological Sciences}
\bvolume{237},
\bfpage{37}--\blpage{72}
(\byear{1952})
\end{barticle}
\endbibitem

\bibitem{cross1993pattern}
\begin{barticle}
\bauthor{\bsnm{Cross}, \binits{M.C.}},
\bauthor{\bsnm{Hohenberg}, \binits{P.C.}}:
\batitle{Pattern formation outside of equilibrium}.
\bjtitle{Reviews of modern physics}
\bvolume{65}(\bissue{3}),
\bfpage{851}
(\byear{1993})
\end{barticle}
\endbibitem

\bibitem{Anotida}
\begin{barticle}
\bauthor{\bsnm{Madzvamuse}, \binits{A.}},
\bauthor{\bsnm{Chung}, \binits{A.H.W.}},
\bauthor{\bsnm{Venkataraman}, \binits{C.}}:
\batitle{Stability analysis and simulations of coupled bulk-surface
  reaction–diffusion systems}.
\bjtitle{Proc. Roy Soc. Lond A}
\bvolume{471},
\bfpage{20140546}
(\byear{2015})
\end{barticle}
\endbibitem

\bibitem{FahadWoods}
\begin{barticle}
\bauthor{\bsnm{{Al Saadi}}, \binits{F.}},
\bauthor{\bsnm{Champneys}, \binits{A.R.}},
\bauthor{\bsnm{Verschueren}, \binits{N.}}:
\batitle{Localized patterns and semi-strong interaction, a unifying framework
  for reaction–diffusion systems.}
\bjtitle{IMA Journal of Applied Mathematics}
\bvolume{86},
\bfpage{1031}--\blpage{1065}
(\byear{2021})
\end{barticle}
\endbibitem

\bibitem{villar-sepulveda}
\begin{barticle}
\bauthor{\bsnm{Villar-Sep{\'u}lveda}, \binits{E.}},
\bauthor{\bsnm{Champneys}, \binits{A.R.}}:
\batitle{General conditions for {T}uring and wave instabilities in
  reaction-diffusion systems}.
\bjtitle{Journal of Mathematical Biology}
\bvolume{86}(\bissue{3}),
\bfpage{39}
(\byear{2023})
\end{barticle}
\endbibitem

\bibitem{Avitabile}
\begin{barticle}
\bauthor{\bsnm{G{\"o}k{\c{c}}e}, \binits{A.}},
\bauthor{\bsnm{Avitabile}, \binits{D.}},
\bauthor{\bsnm{Coombes}, \binits{S.}}:
\batitle{The dynamics of neural fields on bounded domains: an interface
  approach for dirichlet boundary conditions}.
\bjtitle{The Journal of Mathematical Neuroscience}
\bvolume{7}(\bissue{1}),
\bfpage{1}--\blpage{23}
(\byear{2017})
\end{barticle}
\endbibitem

\bibitem{paquin-lefebvre}
\begin{barticle}
\bauthor{\bsnm{Paquin-Lefebvre}, \binits{F.}},
\bauthor{\bsnm{Nagata}, \binits{W.}},
\bauthor{\bsnm{Ward}, \binits{M.J.}}:
\batitle{Pattern formation and oscillatory dynamics in a two-dimensional
  coupled bulk-surface reaction-diffusion system}.
\bjtitle{SIAM Journal on Applied Dynamical Systems}
\bvolume{18}(\bissue{3}),
\bfpage{1334}--\blpage{1390}
(\byear{2019})
\end{barticle}
\endbibitem

\bibitem{krause2021modern}
\begin{barticle}
\bauthor{\bsnm{Krause}, \binits{A.L.}},
\bauthor{\bsnm{Gaffney}, \binits{E.A.}},
\bauthor{\bsnm{Maini}, \binits{P.K.}},
\bauthor{\bsnm{Klika}, \binits{V.}}:
\batitle{Modern perspectives on near-equilibrium analysis of {T}uring systems}.
\bjtitle{Philosophical Transactions of the Royal Society A}
\bvolume{379}(\bissue{2213}),
\bfpage{20200268}
(\byear{2021})
\end{barticle}
\endbibitem

\bibitem{al2021unified}
\begin{barticle}
\bauthor{\bsnm{Al~Saadi}, \binits{F.}},
\bauthor{\bsnm{Champneys}, \binits{A.}}:
\batitle{Unified framework for localized patterns in reaction--diffusion
  systems; the {G}ray--{S}cott and {G}ierer--{M}einhardt cases}.
\bjtitle{Philosophical Transactions of the Royal Society A}
\bvolume{379}(\bissue{2213}),
\bfpage{20200277}
(\byear{2021})
\end{barticle}
\endbibitem

\bibitem{kondo2010reaction}
\begin{barticle}
\bauthor{\bsnm{Kondo}, \binits{S.}},
\bauthor{\bsnm{Miura}, \binits{T.}}:
\batitle{Reaction-diffusion model as a framework for understanding biological
  pattern formation}.
\bjtitle{science}
\bvolume{329}(\bissue{5999}),
\bfpage{1616}--\blpage{1620}
(\byear{2010})
\end{barticle}
\endbibitem

\bibitem{maini2012turing}
\begin{barticle}
\bauthor{\bsnm{Maini}, \binits{P.K.}},
\bauthor{\bsnm{Woolley}, \binits{T.E.}},
\bauthor{\bsnm{Baker}, \binits{R.E.}},
\bauthor{\bsnm{Gaffney}, \binits{E.A.}},
\bauthor{\bsnm{Lee}, \binits{S.S.}}:
\batitle{Turing's model for biological pattern formation and the robustness
  problem}.
\bjtitle{Interface focus}
\bvolume{2}(\bissue{4}),
\bfpage{487}--\blpage{496}
(\byear{2012})
\end{barticle}
\endbibitem

\bibitem{painter2021systems}
\begin{barticle}
\bauthor{\bsnm{Painter}, \binits{K.J.}},
\bauthor{\bsnm{Ptashnyk}, \binits{M.}},
\bauthor{\bsnm{Headon}, \binits{D.J.}}:
\batitle{Systems for intricate patterning of the vertebrate anatomy}.
\bjtitle{Philosophical Transactions of the Royal Society A}
\bvolume{379}(\bissue{2213}),
\bfpage{20200270}
(\byear{2021})
\end{barticle}
\endbibitem

\bibitem{ramos2024parsing}
\begin{barticle}
\bauthor{\bsnm{Ramos}, \binits{R.}},
\bauthor{\bsnm{Swedlund}, \binits{B.}},
\bauthor{\bsnm{Ganesan}, \binits{A.K.}},
\bauthor{\bsnm{Morsut}, \binits{L.}},
\bauthor{\bsnm{Maini}, \binits{P.K.}},
\bauthor{\bsnm{Monuki}, \binits{E.S.}},
\bauthor{\bsnm{Lander}, \binits{A.D.}},
\bauthor{\bsnm{Chuong}, \binits{C.-M.}},
\bauthor{\bsnm{Plikus}, \binits{M.V.}}:
\batitle{Parsing patterns: Emerging roles of tissue self-organization in health
  and disease}.
\bjtitle{Cell}
\bvolume{187}(\bissue{13}),
\bfpage{3165}--\blpage{3186}
(\byear{2024})
\end{barticle}
\endbibitem

\bibitem{Murray2001}
\begin{bbook}
\bauthor{\bsnm{Murray}, \binits{J.}}:
\bbtitle{Mathematical Biology II: Spatial Models and Biomedical Applications}
vol. \bseriesno{3}.
\bpublisher{Springer},
\blocation{New York}
(\byear{2001})
\end{bbook}
\endbibitem

\bibitem{Yang2}
\begin{barticle}
\bauthor{\bsnm{Yang}, \binits{L.}},
\bauthor{\bsnm{Epstein}, \binits{I.R.}}:
\batitle{Oscillatory {T}uring patterns in reaction-diffusion systems with two
  coupled layers}.
\bjtitle{Physical review letters}
\bvolume{90}(\bissue{17}),
\bfpage{178303}
(\byear{2003})
\end{barticle}
\endbibitem

\bibitem{yochelis-knobloch}
\begin{barticle}
\bauthor{\bsnm{Yochelis}, \binits{A.}},
\bauthor{\bsnm{Knobloch}, \binits{E.}},
\bauthor{\bsnm{Xie}, \binits{Y.}},
\bauthor{\bsnm{Qu}, \binits{Z.}},
\bauthor{\bsnm{Garfinkel}, \binits{A.}}:
\batitle{Generation of finite wave trains in excitable media}.
\bjtitle{Europhysics letters}
\bvolume{83}(\bissue{6}),
\bfpage{64005}
(\byear{2008})
\end{barticle}
\endbibitem

\bibitem{anma2012unstable}
\begin{barticle}
\bauthor{\bsnm{Anma}, \binits{A.}},
\bauthor{\bsnm{Sakamoto}, \binits{K.}},
\bauthor{\bsnm{Yoneda}, \binits{T.}}:
\batitle{Unstable subsystems cause {T}uring instability}.
\bjtitle{Kodai Mathematical Journal}
\bvolume{35}(\bissue{2}),
\bfpage{215}--\blpage{247}
(\byear{2012})
\end{barticle}
\endbibitem

\bibitem{zheng2016identifying}
\begin{barticle}
\bauthor{\bsnm{Zheng}, \binits{M.M.}},
\bauthor{\bsnm{Shao}, \binits{B.}},
\bauthor{\bsnm{Ouyang}, \binits{Q.}}:
\batitle{Identifying network topologies that can generate turing pattern}.
\bjtitle{Journal of {T}heoretical {B}iology}
\bvolume{408},
\bfpage{88}--\blpage{96}
(\byear{2016})
\end{barticle}
\endbibitem

\bibitem{kuznetsov2022robust}
\begin{barticle}
\bauthor{\bsnm{Kuznetsov}, \binits{M.}}:
\batitle{Robust controlled formation of {T}uring patterns in three-component
  systems}.
\bjtitle{Physical Review E}
\bvolume{105}(\bissue{1}),
\bfpage{014209}
(\byear{2022})
\end{barticle}
\endbibitem

\bibitem{johnson2009functional}
\begin{barticle}
\bauthor{\bsnm{Johnson}, \binits{M.B.}},
\bauthor{\bsnm{Kawasawa}, \binits{Y.I.}},
\bauthor{\bsnm{Mason}, \binits{C.E.}},
\bauthor{\bsnm{Krsnik}, \binits{{\v{Z}}.}},
\bauthor{\bsnm{Coppola}, \binits{G.}},
\bauthor{\bsnm{Bogdanovi{\'c}}, \binits{D.}},
\bauthor{\bsnm{Geschwind}, \binits{D.H.}},
\bauthor{\bsnm{Mane}, \binits{S.M.}},
\bauthor{\bsnm{{\v{S}}estan}, \binits{N.}}, \betal:
\batitle{Functional and evolutionary insights into human brain development
  through global transcriptome analysis}.
\bjtitle{Neuron}
\bvolume{62}(\bissue{4}),
\bfpage{494}--\blpage{509}
(\byear{2009})
\end{barticle}
\endbibitem

\bibitem{satnoianu}
\begin{barticle}
\bauthor{\bsnm{Satnoianu}, \binits{R.A.}},
\bauthor{\bsnm{Menzinger}, \binits{M.}},
\bauthor{\bsnm{Maini}, \binits{P.K.}}:
\batitle{{T}uring instabilities in general systems}.
\bjtitle{Journal of mathematical biology}
\bvolume{41},
\bfpage{493}--\blpage{512}
(\byear{2000})
\end{barticle}
\endbibitem

\bibitem{marcon2016high}
\begin{barticle}
\bauthor{\bsnm{Marcon}, \binits{L.}},
\bauthor{\bsnm{Diego}, \binits{X.}},
\bauthor{\bsnm{Sharpe}, \binits{J.}},
\bauthor{\bsnm{M{\"u}ller}, \binits{P.}}:
\batitle{High-throughput mathematical analysis identifies {T}uring networks for
  patterning with equally diffusing signals}.
\bjtitle{Elife}
\bvolume{5},
\bfpage{14022}
(\byear{2016})
\end{barticle}
\endbibitem

\bibitem{scholes2019comprehensive}
\begin{barticle}
\bauthor{\bsnm{Scholes}, \binits{N.S.}},
\bauthor{\bsnm{Schnoerr}, \binits{D.}},
\bauthor{\bsnm{Isalan}, \binits{M.}},
\bauthor{\bsnm{Stumpf}, \binits{M.P.}}:
\batitle{A comprehensive network atlas reveals that {T}uring patterns are
  common but not robust}.
\bjtitle{Cell systems}
\bvolume{9}(\bissue{3}),
\bfpage{243}--\blpage{257}
(\byear{2019})
\end{barticle}
\endbibitem

\bibitem{Keller-Segel}
\begin{barticle}
\bauthor{\bsnm{Arumugam}, \binits{G.}},
\bauthor{\bsnm{Tyagi}, \binits{J.}}:
\batitle{Keller-segel chemotaxis models: A review}.
\bjtitle{Acta Applicandae Mathematicae}
\bvolume{171},
\bfpage{1}--\blpage{82}
(\byear{2021})
\end{barticle}
\endbibitem

\bibitem{shigesada1979spatial}
\begin{barticle}
\bauthor{\bsnm{Shigesada}, \binits{N.}},
\bauthor{\bsnm{Kawasaki}, \binits{K.}},
\bauthor{\bsnm{Teramoto}, \binits{E.}}:
\batitle{Spatial segregation of interacting species}.
\bjtitle{Journal of {T}heoretical {B}iology}
\bvolume{79}(\bissue{1}),
\bfpage{83}--\blpage{99}
(\byear{1979})
\end{barticle}
\endbibitem

\bibitem{lou}
\begin{barticle}
\bauthor{\bsnm{Lou}, \binits{Y.}},
\bauthor{\bsnm{Ni}, \binits{W.-M.}}:
\batitle{Diffusion, self-diffusion and cross-diffusion}.
\bjtitle{Journal of Differential Equations}
\bvolume{131}(\bissue{1}),
\bfpage{79}--\blpage{131}
(\byear{1996})
\end{barticle}
\endbibitem

\bibitem{epstein-cross}
\begin{barticle}
\bauthor{\bsnm{Vanag}, \binits{V.K.}},
\bauthor{\bsnm{Epstein}, \binits{I.R.}}:
\batitle{Cross-diffusion and pattern formation in reaction--diffusion systems}.
\bjtitle{Physical Chemistry Chemical Physics}
\bvolume{11}(\bissue{6}),
\bfpage{897}--\blpage{912}
(\byear{2009})
\end{barticle}
\endbibitem

\bibitem{gambino2012turing}
\begin{barticle}
\bauthor{\bsnm{Gambino}, \binits{G.}},
\bauthor{\bsnm{Lombardo}, \binits{M.C.}},
\bauthor{\bsnm{Sammartino}, \binits{M.}}:
\batitle{{T}uring instability and traveling fronts for a nonlinear
  reaction--diffusion system with cross-diffusion}.
\bjtitle{Mathematics and Computers in Simulation}
\bvolume{82}(\bissue{6}),
\bfpage{1112}--\blpage{1132}
(\byear{2012})
\end{barticle}
\endbibitem

\bibitem{breden2019influence}
\begin{botherref}
\oauthor{\bsnm{Breden}, \binits{M.}},
\oauthor{\bsnm{Kuehn}, \binits{C.}},
\oauthor{\bsnm{Soresina}, \binits{C.}}:
On the influence of cross-diffusion in pattern formation.
arXiv preprint arXiv:1910.03436
(2019)
\end{botherref}
\endbibitem

\bibitem{taylor2020non}
\begin{barticle}
\bauthor{\bsnm{Taylor}, \binits{N.P.}},
\bauthor{\bsnm{Kim}, \binits{H.}},
\bauthor{\bsnm{Krause}, \binits{A.L.}},
\bauthor{\bsnm{Van~Gorder}, \binits{R.A.}}:
\batitle{A non-local cross-diffusion model of population dynamics i: emergent
  spatial and spatiotemporal patterns}.
\bjtitle{Bulletin of Mathematical Biology}
\bvolume{82},
\bfpage{1}--\blpage{40}
(\byear{2020})
\end{barticle}
\endbibitem

\bibitem{ritchie2022turing}
\begin{barticle}
\bauthor{\bsnm{Ritchie}, \binits{J.S.}},
\bauthor{\bsnm{Krause}, \binits{A.L.}},
\bauthor{\bsnm{Van~Gorder}, \binits{R.A.}}:
\batitle{{T}uring and wave instabilities in hyperbolic reaction--diffusion
  systems: The role of second-order time derivatives and cross-diffusion terms
  on pattern formation}.
\bjtitle{Annals of Physics}
\bvolume{444},
\bfpage{169033}
(\byear{2022})
\end{barticle}
\endbibitem

\bibitem{gaffney2023spatial}
\begin{botherref}
\oauthor{\bsnm{Gaffney}, \binits{E.}},
\oauthor{\bsnm{Krause}, \binits{A.}},
\oauthor{\bsnm{Maini}, \binits{P.}}, et al.:
Spatial heterogeneity localizes {T}uring patterns in reaction-cross-diffusion
  systems.
Discrete and Continuous Dynamical Systems Series B
\textbf{28}(12)
(2023)
\end{botherref}
\endbibitem

\bibitem{lankeit2020facing}
\begin{barticle}
\bauthor{\bsnm{Lankeit}, \binits{J.}},
\bauthor{\bsnm{Winkler}, \binits{M.}}:
\batitle{Facing low regularity in chemotaxis systems}.
\bjtitle{Jahresbericht der Deutschen Mathematiker-Vereinigung}
\bvolume{122}(\bissue{1}),
\bfpage{35}--\blpage{64}
(\byear{2020})
\end{barticle}
\endbibitem

\bibitem{mendez2010reaction}
\begin{bbook}
\bauthor{\bsnm{M{\'e}ndez}, \binits{V.}},
\bauthor{\bsnm{Fedotov}, \binits{S.}},
\bauthor{\bsnm{Horsthemke}, \binits{W.}}:
\bbtitle{Reaction-transport Systems: Mesoscopic Foundations, Fronts, and
  Spatial Instabilities}.
\bpublisher{Springer},
\blocation{New York}
(\byear{2010})
\end{bbook}
\endbibitem

\bibitem{klika2018beyond}
\begin{barticle}
\bauthor{\bsnm{Klika}, \binits{V.}},
\bauthor{\bsnm{Krause}, \binits{A.L.}}:
\batitle{Beyond onsager--casimir relations: shared dependence of
  phenomenological coefficients on state variables}.
\bjtitle{The journal of physical chemistry letters}
\bvolume{9}(\bissue{24}),
\bfpage{7021}--\blpage{7025}
(\byear{2018})
\end{barticle}
\endbibitem

\bibitem{diambra2015cooperativity}
\begin{barticle}
\bauthor{\bsnm{Diambra}, \binits{L.}},
\bauthor{\bsnm{Senthivel}, \binits{V.R.}},
\bauthor{\bsnm{Menendez}, \binits{D.B.}},
\bauthor{\bsnm{Isalan}, \binits{M.}}:
\batitle{Cooperativity to increase {T}uring pattern space for synthetic
  biology}.
\bjtitle{ACS synthetic biology}
\bvolume{4}(\bissue{2}),
\bfpage{177}--\blpage{186}
(\byear{2015})
\end{barticle}
\endbibitem

\bibitem{karig2018stochastic}
\begin{barticle}
\bauthor{\bsnm{Karig}, \binits{D.}},
\bauthor{\bsnm{Martini}, \binits{K.M.}},
\bauthor{\bsnm{Lu}, \binits{T.}},
\bauthor{\bsnm{DeLateur}, \binits{N.A.}},
\bauthor{\bsnm{Goldenfeld}, \binits{N.}},
\bauthor{\bsnm{Weiss}, \binits{R.}}:
\batitle{Stochastic {T}uring patterns in a synthetic bacterial population}.
\bjtitle{Proceedings of the National Academy of Sciences}
\bvolume{115}(\bissue{26}),
\bfpage{6572}--\blpage{6577}
(\byear{2018})
\end{barticle}
\endbibitem

\bibitem{santos2019using}
\begin{barticle}
\bauthor{\bsnm{Santos-Moreno}, \binits{J.}},
\bauthor{\bsnm{Schaerli}, \binits{Y.}}:
\batitle{Using synthetic biology to engineer spatial patterns}.
\bjtitle{Advanced Biosystems}
\bvolume{3}(\bissue{4}),
\bfpage{1800280}
(\byear{2019})
\end{barticle}
\endbibitem

\bibitem{tanaka2023turing}
\begin{barticle}
\bauthor{\bsnm{Tanaka}, \binits{M.}},
\bauthor{\bsnm{Montgomery}, \binits{S.M.}},
\bauthor{\bsnm{Yue}, \binits{L.}},
\bauthor{\bsnm{Wei}, \binits{Y.}},
\bauthor{\bsnm{Song}, \binits{Y.}},
\bauthor{\bsnm{Nomura}, \binits{T.}},
\bauthor{\bsnm{Qi}, \binits{H.J.}}:
\batitle{{T}uring pattern--based design and fabrication of inflatable
  shape-morphing structures}.
\bjtitle{Science Advances}
\bvolume{9}(\bissue{6}),
\bfpage{4381}
(\byear{2023})
\end{barticle}
\endbibitem

\bibitem{luo2024wrinkled}
\begin{barticle}
\bauthor{\bsnm{Luo}, \binits{X.}},
\bauthor{\bsnm{Zhang}, \binits{M.}},
\bauthor{\bsnm{Hu}, \binits{Y.}},
\bauthor{\bsnm{Xu}, \binits{Y.}},
\bauthor{\bsnm{Zhou}, \binits{H.}},
\bauthor{\bsnm{Xu}, \binits{Z.}},
\bauthor{\bsnm{Hao}, \binits{Y.}},
\bauthor{\bsnm{Chen}, \binits{S.}},
\bauthor{\bsnm{Chen}, \binits{S.}},
\bauthor{\bsnm{Luo}, \binits{Y.}}, \betal:
\batitle{Wrinkled metal-organic framework thin films with tunable {T}uring
  patterns for pliable integration}.
\bjtitle{Science}
\bvolume{385}(\bissue{6709}),
\bfpage{647}--\blpage{651}
(\byear{2024})
\end{barticle}
\endbibitem

\bibitem{vittadello2021turing}
\begin{barticle}
\bauthor{\bsnm{Vittadello}, \binits{S.T.}},
\bauthor{\bsnm{Leyshon}, \binits{T.}},
\bauthor{\bsnm{Schnoerr}, \binits{D.}},
\bauthor{\bsnm{Stumpf}, \binits{M.P.}}:
\batitle{{T}uring pattern design principles and their robustness}.
\bjtitle{Philosophical Transactions of the Royal Society A}
\bvolume{379}(\bissue{2213}),
\bfpage{20200272}
(\byear{2021})
\end{barticle}
\endbibitem

\bibitem{woolley2021bespoke}
\begin{barticle}
\bauthor{\bsnm{Woolley}, \binits{T.E.}},
\bauthor{\bsnm{Krause}, \binits{A.L.}},
\bauthor{\bsnm{Gaffney}, \binits{E.A.}}:
\batitle{Bespoke turing systems}.
\bjtitle{Bulletin of Mathematical Biology}
\bvolume{83},
\bfpage{1}--\blpage{32}
(\byear{2021})
\end{barticle}
\endbibitem

\bibitem{leyshon2021design}
\begin{barticle}
\bauthor{\bsnm{Leyshon}, \binits{T.}},
\bauthor{\bsnm{Tonello}, \binits{E.}},
\bauthor{\bsnm{Schnoerr}, \binits{D.}},
\bauthor{\bsnm{Siebert}, \binits{H.}},
\bauthor{\bsnm{Stumpf}, \binits{M.P.}}:
\batitle{The design principles of discrete {T}uring patterning systems}.
\bjtitle{Journal of Theoretical Biology}
\bvolume{531},
\bfpage{110901}
(\byear{2021})
\end{barticle}
\endbibitem

\bibitem{Villar-sepulveda-wave}
\begin{barticle}
\bauthor{\bsnm{Villar-Sepúlveda}, \binits{E.}},
\bauthor{\bsnm{Champneys}, \binits{A.}}:
\batitle{Amplitude equations for wave bifurcations in reaction–diffusion
  systems}.
\bjtitle{Nonlinearity}
\bvolume{37}(\bissue{8}),
\bfpage{085012}
(\byear{2024}).
\doiurl{10.1088/1361-6544/ad5e56}
\end{barticle}
\endbibitem

\bibitem{knobloch}
\begin{barticle}
\bauthor{\bsnm{Knobloch}, \binits{E.}}:
\batitle{Oscillatory convection in binary mixtures}.
\bjtitle{Physical Review A}
\bvolume{34}(\bissue{2}),
\bfpage{1538}
(\byear{1986})
\end{barticle}
\endbibitem

\bibitem{maini1997boundary}
\begin{barticle}
\bauthor{\bsnm{Maini}, \binits{P.}},
\bauthor{\bsnm{Myerscough}, \binits{M.}}:
\batitle{Boundary-driven instability}.
\bjtitle{Applied Mathematics Letters}
\bvolume{10}(\bissue{1}),
\bfpage{1}--\blpage{4}
(\byear{1997})
\end{barticle}
\endbibitem

\bibitem{krause2021isolating}
\begin{barticle}
\bauthor{\bsnm{Krause}, \binits{A.L.}},
\bauthor{\bsnm{Klika}, \binits{V.}},
\bauthor{\bsnm{Maini}, \binits{P.K.}},
\bauthor{\bsnm{Headon}, \binits{D.}},
\bauthor{\bsnm{Gaffney}, \binits{E.A.}}:
\batitle{Isolating patterns in open reaction--diffusion systems}.
\bjtitle{Bulletin of Mathematical Biology}
\bvolume{83}(\bissue{7}),
\bfpage{82}
(\byear{2021})
\end{barticle}
\endbibitem

\bibitem{choi2004existence}
\begin{barticle}
\bauthor{\bsnm{Choi}, \binits{Y.}},
\bauthor{\bsnm{Lui}, \binits{R.}},
\bauthor{\bsnm{Yamada}, \binits{Y.}}:
\batitle{Existence of global solutions for the shigesada-kawasaki-teramoto
  model with strongly coupled cross-diffusion}.
\bjtitle{Discrete \& Continuous Dynamical Systems}
\bvolume{10}(\bissue{3}),
\bfpage{719}
(\byear{2004})
\end{barticle}
\endbibitem

\bibitem{le2005regularity}
\begin{barticle}
\bauthor{\bsnm{Le}, \binits{D.}}:
\batitle{Regularity of solutions to a class of cross diffusion systems}.
\bjtitle{SIAM journal on mathematical analysis}
\bvolume{36}(\bissue{6}),
\bfpage{1929}--\blpage{1942}
(\byear{2005})
\end{barticle}
\endbibitem

\bibitem{seis-well}
\begin{barticle}
\bauthor{\bsnm{Seis}, \binits{C.}},
\bauthor{\bsnm{Winkler}, \binits{D.}}:
\batitle{A well-posedness result for a system of cross-diffusion equations}.
\bjtitle{Journal of Evolution Equations}
\bvolume{21},
\bfpage{2471}--\blpage{2489}
(\byear{2021})
\end{barticle}
\endbibitem

\bibitem{choquet-well}
\begin{barticle}
\bauthor{\bsnm{Choquet}, \binits{C.}},
\bauthor{\bsnm{Rosier}, \binits{C.}},
\bauthor{\bsnm{Rosier}, \binits{L.}}:
\batitle{Well posedness of general cross-diffusion systems}.
\bjtitle{Journal of Differential Equations}
\bvolume{300},
\bfpage{386}--\blpage{425}
(\byear{2021})
\end{barticle}
\endbibitem

\bibitem{TOMS}
\begin{barticle}
\bauthor{\bsnm{Villar-Sep{\'u}lveda}, \binits{E.}},
\bauthor{\bsnm{Champneys}, \binits{A.}}:
\batitle{Computation of {T}uring bifurcation normal form for n-component
  reaction-diffusion systems}.
\bjtitle{ACM Transactions on Mathematical Software}
\bvolume{49}(\bissue{4}),
\bfpage{1}--\blpage{24}
(\byear{2023})
\end{barticle}
\endbibitem

\bibitem{fraleigh}
\begin{bbook}
\bauthor{\bsnm{Beauregard}, \binits{R.A.}},
\bauthor{\bsnm{Fraleigh}, \binits{J.B.}}:
\bbtitle{A First Course in Linear Algebra: with Optional Introduction to
  Groups, Rings, and Fields}.
\bpublisher{Houghton Mifflin},
\blocation{Boston, USA}
(\byear{1973})
\end{bbook}
\endbibitem

\bibitem{irving2004integers}
\begin{bbook}
\bauthor{\bsnm{Irving}, \binits{R.S.}}:
\bbtitle{Integers, Polynomials, and Rings: a Course in Algebra}.
\bpublisher{Springer},
\blocation{New York}
(\byear{2004})
\end{bbook}
\endbibitem

\bibitem{walker2023visualpde}
\begin{barticle}
\bauthor{\bsnm{Walker}, \binits{B.J.}},
\bauthor{\bsnm{Townsend}, \binits{A.K.}},
\bauthor{\bsnm{Chudasama}, \binits{A.K.}},
\bauthor{\bsnm{Krause}, \binits{A.L.}}:
\batitle{Visualpde: rapid interactive simulations of partial differential
  equations}.
\bjtitle{Bulletin of Mathematical Biology}
\bvolume{85}(\bissue{11}),
\bfpage{113}
(\byear{2023})
\end{barticle}
\endbibitem

\bibitem{Schnakenberg1979}
\begin{barticle}
\bauthor{\bsnm{Schnakenberg}, \binits{J.}}:
\batitle{Simple chemical reaction systems with limit cycle behaviour}.
\bjtitle{Journal of {T}heoretical {B}iology}
\bvolume{81}(\bissue{3}),
\bfpage{389}--\blpage{400}
(\byear{1979})
\end{barticle}
\endbibitem

\bibitem{villar2023degenerate}
\begin{barticle}
\bauthor{\bsnm{Villar-Sep{\'u}lveda}, \binits{E.}},
\bauthor{\bsnm{Champneys}, \binits{A.}}:
\batitle{Degenerate {T}uring bifurcation and the birth of localized patterns in
  activator-inhibitor systems}.
\bjtitle{SIAM Journal on Applied Dynamical Systems}
\bvolume{22}(\bissue{3}),
\bfpage{1673}--\blpage{1709}
(\byear{2023})
\end{barticle}
\endbibitem

\bibitem{xie2021complex}
\begin{barticle}
\bauthor{\bsnm{Xie}, \binits{S.}},
\bauthor{\bsnm{Kolokolnikov}, \binits{T.}},
\bauthor{\bsnm{Nishiura}, \binits{Y.}}:
\batitle{Complex oscillatory motion of multiple spikes in a three-component
  schnakenberg system}.
\bjtitle{Nonlinearity}
\bvolume{34}(\bissue{8}),
\bfpage{5708}
(\byear{2021})
\end{barticle}
\endbibitem

\bibitem{sargood2022fixed}
\begin{barticle}
\bauthor{\bsnm{Sargood}, \binits{A.}},
\bauthor{\bsnm{Gaffney}, \binits{E.A.}},
\bauthor{\bsnm{Krause}, \binits{A.L.}}:
\batitle{Fixed and distributed gene expression time delays in
  reaction--diffusion systems}.
\bjtitle{Bulletin of Mathematical Biology}
\bvolume{84}(\bissue{9}),
\bfpage{98}
(\byear{2022})
\end{barticle}
\endbibitem

\bibitem{korvasova2015investigating}
\begin{barticle}
\bauthor{\bsnm{Korvasov{\'a}}, \binits{K.}},
\bauthor{\bsnm{Gaffney}, \binits{E.}},
\bauthor{\bsnm{Maini}, \binits{P.}},
\bauthor{\bsnm{Ferreira}, \binits{M.}},
\bauthor{\bsnm{Klika}, \binits{V.}}:
\batitle{Investigating the {T}uring conditions for diffusion-driven instability
  in the presence of a binding immobile substrate}.
\bjtitle{Journal of {T}heoretical {B}iology}
\bvolume{367},
\bfpage{286}--\blpage{295}
(\byear{2015})
\end{barticle}
\endbibitem

\bibitem{alonso}
\begin{barticle}
\bauthor{\bsnm{Alonso}, \binits{D.}},
\bauthor{\bsnm{Dobson}, \binits{A.}},
\bauthor{\bsnm{Pascual}, \binits{M.}}:
\batitle{Critical transitions in malaria transmission models are consistently
  generated by superinfection}.
\bjtitle{Philosophical Transactions of the Royal Society B}
\bvolume{374}(\bissue{1775}),
\bfpage{20180275}
(\byear{2019})
\end{barticle}
\endbibitem

\bibitem{hillen2009user}
\begin{barticle}
\bauthor{\bsnm{Hillen}, \binits{T.}},
\bauthor{\bsnm{Painter}, \binits{K.J.}}:
\batitle{A user’s guide to pde models for chemotaxis}.
\bjtitle{Journal of mathematical biology}
\bvolume{58}(\bissue{1}),
\bfpage{183}--\blpage{217}
(\byear{2009})
\end{barticle}
\endbibitem

\bibitem{bubba2020discrete}
\begin{barticle}
\bauthor{\bsnm{Bubba}, \binits{F.}},
\bauthor{\bsnm{Lorenzi}, \binits{T.}},
\bauthor{\bsnm{Macfarlane}, \binits{F.R.}}:
\batitle{From a discrete model of chemotaxis with volume-filling to a
  generalized patlak--keller--segel model}.
\bjtitle{Proceedings of the Royal Society A}
\bvolume{476}(\bissue{2237}),
\bfpage{20190871}
(\byear{2020})
\end{barticle}
\endbibitem

\bibitem{horstmann20031970}
\begin{botherref}
\oauthor{\bsnm{Horstmann}, \binits{D.}}:
From 1970 until present: the keller-segel model in chemotaxis and its
  consequences
(2003)
\end{botherref}
\endbibitem

\bibitem{painter2011spatio}
\begin{barticle}
\bauthor{\bsnm{Painter}, \binits{K.J.}},
\bauthor{\bsnm{Hillen}, \binits{T.}}:
\batitle{Spatio-temporal chaos in a chemotaxis model}.
\bjtitle{Physica D: Nonlinear Phenomena}
\bvolume{240}(\bissue{4-5}),
\bfpage{363}--\blpage{375}
(\byear{2011})
\end{barticle}
\endbibitem

\bibitem{klika2024onsager}
\begin{barticle}
\bauthor{\bsnm{Klika}, \binits{V.}},
\bauthor{\bsnm{Br{\'e}chet}, \binits{S.D.}}:
\batitle{Onsager-{C}asimir reciprocal relations as a consequence of the
  equivalence between irreversibility and dissipation}.
\bjtitle{Journal of Non-Equilibrium Thermodynamics}
\bvolume{49}(\bissue{1}),
\bfpage{1}--\blpage{10}
(\byear{2024})
\end{barticle}
\endbibitem

\bibitem{gourley2001spatio}
\begin{barticle}
\bauthor{\bsnm{Gourley}, \binits{S.}},
\bauthor{\bsnm{Chaplain}, \binits{M.}},
\bauthor{\bsnm{Davidson}, \binits{F.}}:
\batitle{Spatio-temporal pattern formation in a nonlocal reaction-diffusion
  equation}.
\bjtitle{Dynamical systems}
\bvolume{16}(\bissue{2}),
\bfpage{173}--\blpage{192}
(\byear{2001})
\end{barticle}
\endbibitem

\bibitem{jewell2023patterning}
\begin{barticle}
\bauthor{\bsnm{Jewell}, \binits{T.J.}},
\bauthor{\bsnm{Krause}, \binits{A.L.}},
\bauthor{\bsnm{Maini}, \binits{P.K.}},
\bauthor{\bsnm{Gaffney}, \binits{E.A.}}:
\batitle{Patterning of nonlocal transport models in biology: the impact of
  spatial dimension}.
\bjtitle{Mathematical Biosciences}
\bvolume{366},
\bfpage{109093}
(\byear{2023})
\end{barticle}
\endbibitem

\bibitem{Andrewturing}
\begin{botherref}
\oauthor{\bsnm{Krause}, \binits{A.L.}},
\oauthor{\bsnm{Gaffney}, \binits{E.A.}},
\oauthor{\bsnm{Jewell}, \binits{T.J.}},
\oauthor{\bsnm{Klika}, \binits{V.}},
\oauthor{\bsnm{Walker}, \binits{B.J.}}:
{T}uring instabilities are not enough to ensure pattern formation.
arXiv preprint arXiv:2308.15311
(2023)
\end{botherref}
\endbibitem

\bibitem{burke2007homoclinic}
\begin{botherref}
\oauthor{\bsnm{Burke}, \binits{J.}},
\oauthor{\bsnm{Knobloch}, \binits{E.}}:
Homoclinic snaking: structure and stability.
Chaos: An Interdisciplinary Journal of Nonlinear Science
\textbf{17}(3)
(2007)
\end{botherref}
\endbibitem

\bibitem{ei2014spatio}
\begin{barticle}
\bauthor{\bsnm{Ei}, \binits{S.-I.}},
\bauthor{\bsnm{Izuhara}, \binits{H.}},
\bauthor{\bsnm{Mimura}, \binits{M.}}:
\batitle{Spatio-temporal oscillations in the {K}eller--{S}egel system with
  logistic growth}.
\bjtitle{Physica D: Nonlinear Phenomena}
\bvolume{277},
\bfpage{1}--\blpage{21}
(\byear{2014})
\end{barticle}
\endbibitem

\bibitem{kong2024existence}
\begin{barticle}
\bauthor{\bsnm{Kong}, \binits{F.}},
\bauthor{\bsnm{Ward}, \binits{M.J.}},
\bauthor{\bsnm{Wei}, \binits{J.}}:
\batitle{Existence, stability and slow dynamics of spikes in a 1{D}minimal
  {K}eller--{S}egel model with logistic growth}.
\bjtitle{Journal of Nonlinear Science}
\bvolume{34}(\bissue{3}),
\bfpage{51}
(\byear{2024})
\end{barticle}
\endbibitem

\bibitem{woolley2025bespoke}
\begin{barticle}
\bauthor{\bsnm{Woolley}, \binits{T.E.}}:
\batitle{Bespoke {T}uring patterns with specific nonlinear properties}.
\bjtitle{Proceedings of the Royal Society A}
\bvolume{481}(\bissue{2312}),
\bfpage{20250030}
(\byear{2025})
\end{barticle}
\endbibitem

\bibitem{horvath2009experimental}
\begin{barticle}
\bauthor{\bsnm{Horv{\'a}th}, \binits{J.}},
\bauthor{\bsnm{Szalai}, \binits{I.}},
\bauthor{\bsnm{De~Kepper}, \binits{P.}}:
\batitle{An experimental design method leading to chemical {T}uring patterns}.
\bjtitle{Science}
\bvolume{324}(\bissue{5928}),
\bfpage{772}--\blpage{775}
(\byear{2009})
\end{barticle}
\endbibitem

\bibitem{konow2021insights}
\begin{barticle}
\bauthor{\bsnm{Konow}, \binits{C.}},
\bauthor{\bsnm{Dolnik}, \binits{M.}},
\bauthor{\bsnm{Epstein}, \binits{I.}}:
\batitle{Insights from chemical systems into turing-type morphogenesis}.
\bjtitle{Philosophical Transactions of the Royal Society A}
\bvolume{379}(\bissue{2213}),
\bfpage{20200269}
(\byear{2021})
\end{barticle}
\endbibitem

\bibitem{Marcus1990}
\begin{barticle}
\bauthor{\bsnm{Marcus}, \binits{M.}}:
\batitle{Determinants of sums}.
\bjtitle{The College Mathematics Journal}
\bvolume{21}(\bissue{2}),
\bfpage{130}--\blpage{135}
(\byear{1990})
\end{barticle}
\endbibitem

\end{thebibliography}

\newpage

\appendix

    \section{Useful results for designing unstable systems}\label{app:useful_results}
        Here, we collect a few additional lemmas that provide alternative ways of viewing the constructions illustrated above. In practice, particularly given additional constraints (i.e.~when one does not have complete freedom over $\D$ or $\jac$), these results can help one find parameters to exhibit Turing or wave instabilities. We begin by showing that \cref{lemma:positiveeig} has a converse, implying that, for a diagonalizable diffusion tensor, the existence of a principle unstable submatrix of $\m{Q}^{- 1} \jac \m{Q}$ is both necessary and sufficient for diffusion-driven instability.
        
        The authors in \cite[Theorem 1]{satnoianu} state that, when the diffusion matrix of a reaction-diffusion system is diagonal and the Jacobian matrix of the system in the absence of diffusion is S-stable, then the system does not admit Turing instabilities. In the proof of \cite[Theorem 2]{villar-sepulveda}, the authors noted that the original proof also applies to wave instabilities. We can directly use this result to obtain an equivalent statement in the case of reaction-cross-diffusion systems, as in the case of a non-diagonal $\D$.
        
        The growth rates (eigenvalues) of the original system given in \cref{generalsystem} are equivalent to those of $\m{Q}^{-1}\jac\m{Q}$. Hence by directly applying  \cite[Theorem 2]{villar-sepulveda} to this system, we have the following Lemma:
        \begin{lemma}\label{lemma:limitations}
            Assuming that $\D = \m{Q} \Db \m{Q}^{- 1}$ is diagonalizable with non-negative real eigenvalues, and $\jac$ is a stable matrix, then if $\m{Q}^{- 1} \jac \m{Q}$ is an S-stable matrix, the system \cref{generalsystem} does not admit Turing or wave instabilities.
        \end{lemma}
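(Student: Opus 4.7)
The plan is to reduce the statement to the diagonal-diffusion case already treated in \cite[Theorem 2]{villar-sepulveda} via a similarity transformation, exploiting the fact that the similarity factor $\m{Q}$ that diagonalizes $\D$ simultaneously conjugates $\jac$ into $\m{Q}^{-1}\jac\m{Q}$. The starting point is the solvability condition \cref{general_solvability_condition}, i.e., $\det(\jac - \mu\D - \lambda\m{I}) = 0$. Substituting $\D = \m{Q}\Db\m{Q}^{-1}$ and conjugating inside the determinant by $\m{Q}^{-1}$ on the left and $\m{Q}$ on the right (which leaves the determinant unchanged since $\det(\m{Q}^{-1})\det(\m{Q}) = 1$) yields
\begin{equation*}
\det\bigl(\m{Q}^{-1}\jac\m{Q} - \mu\Db - \lambda\m{I}\bigr) = 0.
\end{equation*}
Thus the growth rates $\lambda(\mu)$ of \cref{generalsystem} coincide, for every $\mu \geq 0$, with those of the auxiliary reaction-diffusion system whose Jacobian is $\m{Q}^{-1}\jac\m{Q}$ and whose diffusion tensor is the diagonal matrix $\Db$ with non-negative entries.

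Next, I would verify that the auxiliary system satisfies the hypotheses of \cite[Theorem 2]{villar-sepulveda}. Since similarity preserves eigenvalues, $\m{Q}^{-1}\jac\m{Q}$ inherits stability from $\jac$, ruling out instabilities at $\mu = 0$. The S-stability of $\m{Q}^{-1}\jac\m{Q}$ is assumed, and $\Db$ is a diagonal diffusion tensor with non-negative real entries, which is precisely the setting addressed there. Applying that theorem (and, as noted in the paragraph preceding the lemma, the same proof covers the wave case as well as the Turing case) gives that no $\mu > 0$ can produce a growth rate with $\Re(\lambda) > 0$, whether purely real or with non-trivial imaginary part.

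Because the dispersion relation of the auxiliary system coincides pointwise with that of the original system \cref{generalsystem}, the conclusion transfers back: \cref{generalsystem} admits neither Turing nor wave instabilities. There is essentially no technical obstacle in this argument; the only point worth checking is that the referenced theorem accommodates \emph{non-negative} (rather than strictly positive) diagonal entries of $\Db$, which is immediate from its statement and also consistent with the use of degenerate (zero-eigenvalue) diffusion matrices made in \cref{lemma:positiveeig}. The whole proof thus amounts to the observation that conjugation by $\m{Q}$ is the right change of variables to convert a reaction-cross-diffusion problem with diagonalizable $\D$ into a reaction-diffusion problem with diagonal diffusion, after which the existing theory applies verbatim.
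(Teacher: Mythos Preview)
Your proposal is correct and matches the paper's approach exactly: the paper simply observes that the growth rates of \cref{generalsystem} coincide with those of the conjugated system with Jacobian $\m{Q}^{-1}\jac\m{Q}$ and diagonal diffusion $\Db$, and then invokes \cite[Theorem 2]{villar-sepulveda} directly. You have written out the similarity-in-the-determinant step and the hypothesis check more explicitly than the paper does, but the underlying argument is identical.
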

        
        The proofs of \cref{sub:Turing-1} and \cref{sub:proof-wave} both relied on the use of \cref{lemma:positiveeig}, where some diffusion coefficients were taken to be zero in an asymptotic limit. We can instead use an alternative approach where some diffusion coefficients tend to infinity instead. We can obtain this as a corollary of \cite[Lemma 6]{villar-sepulveda} applied to the diagonalized system $\m{Q}^{-1}\jac\m{Q}$.
        \begin{lemma} \label{lemma:nonpositiveeig}
            Let $\D$ be a diagonalizable matrix with non-negative real eigenvalues. Assume that $\D$ has $\ell$ eigenvalues $\sigma_{i_1}, \ldots, \sigma_{i_\ell}$ that tend to $\infty$, while all the others remain positive but finite. Then, there exist $\ell$ eigenvalues of $\linop$ having real parts that tend to $- \infty$ for every $\mu > 0$, while $n - \ell$ eigenvalues of said matrix tend to the eigenvalues of the principal submatrix of $\m{Q}^{- 1} \jac \m{Q}$ formed out of the rows and columns that are complementary to $i_1, \ldots, i_\ell$, as $\mu \to 0^+$.
        \end{lemma}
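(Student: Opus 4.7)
The plan is to reduce the statement to the already-available diagonal-diffusion result of the companion paper, in precisely the same way that Lemma \ref{lemma:positiveeig} was reduced to \cite[Lemma 5]{villar-sepulveda}. The hypothesis that $\D$ is diagonalizable is exactly what makes this reduction work: write $\D = \m{Q}\Db\m{Q}^{-1}$ with $\Db = \diag(\sigma_1,\ldots,\sigma_n)$, and note that the non-negativity of the eigenvalues is inherited by $\Db$.

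For an eigenpair $(\lambda_m,\mbf{v}_m)$ of $\linop$, left-multiplication by $\m{Q}^{-1}$ and the substitution $\mbf{w}_m = \m{Q}^{-1}\mbf{v}_m$ yield
\begin{equation*}
\left(\m{Q}^{-1}\jac\m{Q}\right)\mbf{w}_m - \mu\,\Db\,\mbf{w}_m = \lambda_m\,\mbf{w}_m,
\end{equation*}
exactly mirroring the derivation of \cref{eigenvalue_problem} inside the proof of \cref{lemma:positiveeig}. Consequently, for every $\mu$ the spectrum of $\linop$ coincides with that of $\m{Q}^{-1}\jac\m{Q} - \mu\Db$, and the problem is reduced to a linearized reaction-diffusion system with \emph{diagonal} diffusion tensor, namely the setting of \cite[Lemma 6]{villar-sepulveda}.

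Now I invoke \cite[Lemma 6]{villar-sepulveda} on the transformed system with Jacobian $\m{Q}^{-1}\jac\m{Q}$ and diagonal diffusion $\Db$. That lemma asserts that, when $\ell$ of the diagonal entries $\sigma_{i_1},\ldots,\sigma_{i_\ell}$ are sent to infinity while the remaining $n-\ell$ entries stay bounded and positive, exactly $\ell$ of the growth rates have real parts going to $-\infty$ (driven by the dominant $-\mu\sigma_{i_k}$ contributions on the diagonal), while the other $n-\ell$ growth rates converge to the eigenvalues of the principal submatrix of $\m{Q}^{-1}\jac\m{Q}$ obtained by deleting the rows and columns labelled $i_1,\ldots,i_\ell$. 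Translating back through the similarity $\m{Q}$ gives exactly the conclusion of the lemma.

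The only genuine point of care is the index bookkeeping: it must be checked that the complementary principal submatrix picked out in the limit is the one indexed by the \emph{finite} diffusion eigenvalues, which is the natural dual of the role played by the $\ell$ vanishing diffusion eigenvalues in \cref{lemma:positiveeig}. Beyond this, the proof is a direct transcription: the diagonalizability assumption does all the work of converting a cross-diffusion problem into a standard diagonal one, after which the cited diagonal-case lemma closes the argument with no further analysis required.
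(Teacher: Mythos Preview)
Your proposal is correct and matches the paper's approach exactly: the paper states this lemma as a direct corollary of \cite[Lemma 6]{villar-sepulveda} applied to the diagonalized system $\m{Q}^{-1}\jac\m{Q}$, which is precisely the reduction you carry out via the conjugation $\mbf{w}_m = \m{Q}^{-1}\mbf{v}_m$.
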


        Next, we develop two lemmas useful in dealing with systems that have constraints (i.e.~ones where we cannot control every element of $\m{D}$ or $\jac$). By an elementary formula (see, e.g.~\cite[Equation (1)]{Marcus1990}), we have the following result.
        \begin{lemma} \label{lemma:charpol}
            Let $\lambda\in \mathbb R$. Then
            \begin{align}
                \det\left(\jac - \lambda \m{I}_n\right) &= b_n \lambda^n + b_{n - 1} \lambda^{n - 1} + b_{n - 2} \lambda^{n - 2} + \ldots b_1 \lambda + b_0, \label{eq:charpol}
            \end{align}
            where
            \begin{align*}
                b_m &= (-1)^m \left(\sum_{1\leq i_1 < \ldots < i_{n-m} \leq n} M_{i_1 \ldots i_{n - m}}\right), \quad \text{for } 1 \leq m < n,
            \end{align*}
            $b_0 = \det(\jac)$, and $b_n = (- 1)^n$, where $M_{i_1 \ldots i_{n - m}}$ represents the determinant of the principal submatrix of $\jac$ composed of its rows and columns $i_1, \ldots, i_{n - m}$.
        \end{lemma}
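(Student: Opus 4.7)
The plan is to derive the expression by exploiting the multilinearity of the determinant as a function of its columns. Writing the columns of $\jac$ as $\m{a}_1, \ldots, \m{a}_n$ and the standard basis vectors as $\m{e}_1, \ldots, \m{e}_n$, the columns of $\jac - \lambda \m{I}_n$ are $\m{a}_j - \lambda \m{e}_j$. Multilinearity then lets me expand $\det(\jac - \lambda \m{I}_n)$ as a sum of $2^n$ determinants, indexed by subsets $S \subseteq \{1, \ldots, n\}$, where in the $S$-term the $j$th column is $-\lambda \m{e}_j$ for $j \in S$ and $\m{a}_j$ for $j \notin S$.

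Next, I would evaluate each term indexed by $S$. Cofactor expansion along the $|S|$ columns of the form $-\lambda \m{e}_j$ collapses these columns and also forces the corresponding rows to be removed (because $\m{e}_j$ only has one non-zero entry in the $j$th row); a careful sign count confirms that the row/column permutation signs cancel. The result is that the $S$-term equals $(-\lambda)^{|S|}$ times $\det\!\bigl(\jac_{S^c}\bigr)$, where $\jac_{S^c}$ is the principal submatrix of $\jac$ obtained by deleting the rows and columns indexed by $S$. In other words, it equals $(-\lambda)^{|S|} M_{i_1 \ldots i_{n-|S|}}$, where $\{i_1 < \ldots < i_{n-|S|}\} = S^c$.

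Collecting by $m = |S|$, the coefficient of $\lambda^m$ becomes
\begin{equation*}
b_m = (-1)^m \sum_{|S|=m} M_{S^c} = (-1)^m \!\!\sum_{1 \leq i_1 < \ldots < i_{n-m} \leq n}\!\! M_{i_1 \ldots i_{n-m}},
\end{equation*}
which matches the stated formula for $1 \leq m < n$. The boundary cases are then routine: for $m = 0$ (i.e.\ $S = \emptyset$) we pick up $\det(\jac)$, giving $b_0 = \det(\jac)$; for $m = n$ (i.e.\ $S = \{1, \ldots, n\}$) the complementary principal submatrix is empty, with determinant $1$ by convention, so $b_n = (-1)^n$.

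There is no real obstacle here: the result is classical and the argument is combinatorial. The only point requiring care is the sign bookkeeping when performing the cofactor expansion along the $|S|$ columns containing the $-\lambda \m{e}_j$ entries, to confirm that removing the matching rows and columns of $\jac$ produces the principal minor without extraneous signs; one can verify this cleanly by using the Leibniz formula directly and observing that the only permutations contributing to the $S$-term are those fixing $S$ pointwise, whose restrictions to $S^c$ give exactly $\det(\jac_{S^c})$.
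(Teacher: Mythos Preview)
Your argument is correct and is one of the standard proofs of this classical identity: expand $\det(\jac - \lambda \m{I}_n)$ by multilinearity in the columns, observe that each term indexed by a subset $S$ reduces to $(-\lambda)^{|S|}$ times the principal minor on $S^c$, and group by $|S|$. The sign bookkeeping via the Leibniz formula (only permutations fixing $S$ pointwise contribute) is the clean way to handle it, exactly as you suggest.

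The paper itself does not actually prove this lemma; it simply states the result and cites it as an elementary formula from the literature (specifically \cite[Equation (1)]{Marcus1990}). So there is no ``paper's own proof'' to compare against --- your write-up supplies what the paper omits, and does so correctly.
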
        
        \cref{lemma:charpol} enables us to prove a general result that can be used to generate Turing and wave instabilities in systems with more constraints. To motivate the coming result, we note that although \cref{remark_JNF} helped us prove the general results stated in \cref{th:main}, we can provide a generalization of this idea that let us embed larger unstable principal submatrices within the Jacobian matrix of the system via similarity. Furthermore, as we have already seen, there are cases in which different interactions between Turing and wave instabilities can occur, and being able to directly design such instances can be done using this generalization. For example, consider the Jacobian matrix given by
        \begin{align}
            \jac = \left(\begin{array}{ccc}
                 - 21.3 & - 78.2 & 87
                 \\
                 47.4 & - 1 & 1
                 \\
                 - 35 & - 3.1 & 3
            \end{array}\right), \label{linearizationextra}
        \end{align}
        which has eigenvalues $- 9.50278 \pm 81.3221 i, - 0.29445$, meaning it is stable. We note that it has a $2 \times 2$ unstable principal submatrix at the bottom-right corner with eigenvalues $0.0513167$ and $1.94868$, with an element within this submatrix being 3. In particular, if we consider a diffusion matrix given by
        \begin{align}
            \D = \begin{pmatrix}
                100 & 0 & 0
                \\
                0 & 0.0001 & 0
                \\
                0 & 0 & \delta
            \end{pmatrix}, \label{diffmatrixextra}
        \end{align}
        the dispersion relation of a potential system we can generate with these matrices for different values of $\delta$ that range from 0 to 0.002 in uniform steps of size 0.0004 is shown in \cref{fig:disprelextra}. Here, we note that there is a strong interaction between Turing and wave instabilities. In fact, when $\delta = 0$, the largest eigenvalue converges to a real number as $k \to \infty$, as it is expected. However, for larger values of $\delta$, there is an interaction between real and complex-conjugate eigenvalues as there is a range of values of said parameter under which two complex conjugate eigenvalues split into two real eigenvalues for a range of values of $k$. This implies that several different scenarios can be generated through different unstable submatrices in the Jacobian matrix of the system. The question is then how can we generate a stable matrix from an unstable one? Part of the answer is provided in the following lemma, which once again, allows for several degrees of freedom.
        \begin{figure}
            \centering
            \includegraphics[width = 0.8 \linewidth]{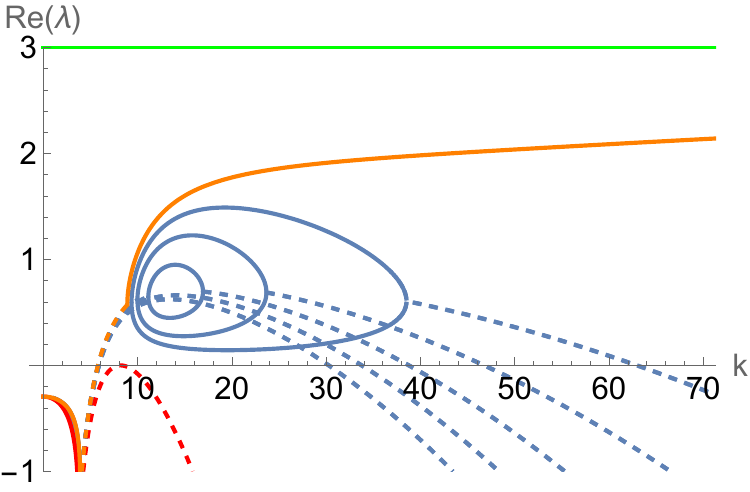}
            \caption{Dispersion relation of a system whose linearization is given by \cref{linearizationextra} and the diffusion matrix is set as \cref{diffmatrixextra}.}
            \label{fig:disprelextra}
        \end{figure}
        
        To introduce the following result, let $\m{K}$ be a square matrix of size $\ell \times \ell$. Furthermore, let $\m{S_{i_1, \ldots, i_m}}$ be its principal submatrix comprising only its $i_1, \ldots, i_m$ rows and columns and $M_{i_1, \ldots, i_m}$ its determinant. Furthermore, for $p, q \in \{1, \ldots , m\}$, we denote by $\m{S_{i_1, \ldots, i_m}^{i_p, i_q}}$ the submatrix of $\m{S_{i_1, \ldots, i_m}}$ after extracting its $i_p$-th row and $i_q$-th column, and $M_{i_1, \ldots, i_m}^{i_p, i_q}$ again denotes the corresponding determinant. We then have the following result.
        \begin{lemma} \label{lemma:construction}
            Let $\ell \geq 1$ be an integer and let $\m{A}$ be a matrix of size $(\ell + 1) \times (\ell + 1)$ with $\ell + 1$ different eigenvalues. Let $\m{K}$ be any matrix of size $\ell \times \ell$ such that there exists a set of scalars $\left\{c_{s, 1}\right\}_{s = 2}^{\ell + 1}$ such that the set of vectors $\left\{\mbf v_1, \ldots , \mbf v_{\ell + 1}\right\}$ is linearly independent, where these vectors are defined by coordinates as
            \begin{align*}
                v_{m, 1} &= \sum_{1 \leq i_1 < \ldots < i_{\ell - m - 1} \leq \ell} M_{i_1 \ldots i_{\ell - m - 1}},
                \\
                v_{m, r + 1} &= \sum_{\substack{1 \leq i_1 < \ldots < i_{\ell - m - 1} \leq \ell \\  s \in \left\{i_1, \ldots, i_{\ell - m - 1}\right\}}} c_{s, 1} M_{i_1 \ldots i_{\ell - m - 1}}^{s, i_r}, \quad \text{for } 1 \leq r \leq \ell, \quad i_r \in \left\{i_1, \ldots, i_{\ell - m - 1}\right\},
            \end{align*}
            with $\mbf v_m = \left(v_{m, 1}, \ldots, v_{m, \ell + 1}\right)^\intercal$ for each $1 \leq m \leq \ell + 1$, then there exists a set of real numbers $\left\{c_{1, s}\right\}_{s = 1}^{\ell + 1}$ such that the matrix
            \begin{align*}
                \m{B} := \left(\begin{array}{c|ccc}
                    c_{1, 1} & c_{1, 2} & \cdots & c_{1, \ell + 1}
                    \\
                    \hline
                    c_{2, 1} & & & 
                    \\
                    \vdots & & \m{K}
                    \\
                    c_{\ell + 1, 1}
                \end{array}\right)
            \end{align*}
            is similar to $\m{A}$.
        \end{lemma}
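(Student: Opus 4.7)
The strategy is to reduce the similarity problem to a linear system in the unknown entries $c_{1,1}, \ldots, c_{1,\ell+1}$ of the first row of $\m{B}$. Since $\m{A}$ has $\ell+1$ distinct eigenvalues, any other $(\ell+1) \times (\ell+1)$ matrix sharing its characteristic polynomial is also diagonalizable with the same spectrum, hence similar to $\m{A}$. Thus it suffices to choose the $c_{1,s}$ so that $p_{\m{B}}(\lambda) := \det(\m{B} - \lambda \m{I_{\ell+1}})$ equals $p_{\m{A}}(\lambda) := \det(\m{A} - \lambda \m{I_{\ell+1}})$; since the leading coefficients agree automatically, this leaves $\ell+1$ constraints corresponding to the coefficients of $\lambda^0, \lambda^1, \ldots, \lambda^\ell$.

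To see that these $\ell+1$ constraints are affine in the unknowns, I would perform a double cofactor expansion: first along the first column of $\m{B} - \lambda \m{I_{\ell+1}}$ (which carries $c_{1,1} - \lambda$ followed by the given scalars $c_{s,1}$), and then along the first row of each resulting $\ell \times \ell$ minor (which carries the remaining unknowns $c_{1,2}, \ldots, c_{1,\ell+1}$). Since the $(s-1)$-th row of $\m{K}$ is removed from the minor when $s \geq 2$, this yields the identity
\begin{equation*}
    p_{\m{B}}(\lambda) = (c_{1,1} - \lambda) \det(\m{K} - \lambda \m{I_\ell}) + \sum_{s=2}^{\ell+1} \sum_{r=1}^{\ell} (-1)^{s+r} c_{s,1} c_{1,r+1} \det\bigl((\m{K} - \lambda \m{I_\ell})_{s-1,r}\bigr),
\end{equation*}
which is manifestly affine in the $c_{1,j}$. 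Expanding each determinant on the right as a polynomial in $\lambda$ via \cref{lemma:charpol} (and its analogue for the non-principal minors appearing in the double sum), I can read off that the coefficient of $c_{1,1}$ in the coefficient of $\lambda^m$ is a fixed-size sum of principal minors $M_{i_1 \ldots}$ of $\m{K}$, while the coefficient of $c_{1,r+1}$ is a $c_{s,1}$-weighted sum of determinants $M^{s, i_r}_{i_1 \ldots}$ obtained by deleting row $s$ and column $i_r$ from those same principal submatrices. These are precisely the expressions $v_{m,1}$ and $v_{m,r+1}$ appearing in the statement.

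Matching the $\lambda^m$-coefficients of $p_{\m{B}}$ with those of $p_{\m{A}}$ therefore produces a linear system $\m{V}\mbf{x} = \mbf{b}$, where the columns of $\m{V}$ are $\mbf v_1, \ldots, \mbf v_{\ell+1}$, the unknown vector is $\mbf x = (c_{1,1}, \ldots, c_{1,\ell+1})^\intercal$, and $\mbf b$ absorbs the coefficients of $p_{\m{A}}$ together with the purely $\m{K}$-dependent contributions coming from principal submatrices of $\m{B}$ whose index set excludes $1$. By hypothesis, $\mbf v_1, \ldots, \mbf v_{\ell+1}$ are linearly independent, so $\m{V}$ is invertible and a unique $\mbf x$ exists. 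With this choice $p_{\m{B}} = p_{\m{A}}$, so $\m{B}$ shares the $\ell+1$ distinct eigenvalues of $\m{A}$, is therefore diagonalizable, and is similar to $\m{A}$. The main obstacle is the careful index bookkeeping in the double cofactor expansion: one must track the signs $(-1)^{s+r}$ and the shift between the indexing of $\m{K}$ (from $1$ to $\ell$) and that of $\m{B}$ (from $1$ to $\ell+1$) in order to match the columns of $\m{V}$ to the $\mbf v_m$ exactly as defined; once this identification is made, the similarity conclusion follows immediately from the distinct-spectrum hypothesis on $\m{A}$.
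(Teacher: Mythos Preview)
Your proposal is correct and follows essentially the same approach as the paper: both reduce similarity to equality of characteristic polynomials via the distinct-eigenvalue hypothesis, then identify matching the coefficients of $\lambda^0,\ldots,\lambda^\ell$ with an $(\ell+1)\times(\ell+1)$ linear system in the unknown first-row entries $c_{1,1},\ldots,c_{1,\ell+1}$, whose coefficient matrix is invertible precisely by the assumed linear independence of the $\mbf v_m$. The only cosmetic difference is that the paper applies \cref{lemma:charpol} directly to $\m{B}-\lambda\m{I}$ (summing over all principal minors and separating those containing index $1$ from those that do not), whereas you reach the same affine dependence via a double cofactor expansion along the first column and then the first row; the resulting linear system and the identification with the $\mbf v_m$ are the same.
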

    
        \begin{proof}
            As the eigenvalues of $\m{A}$ are all different, then a matrix is similar to $\m{A}$ if and only if it has the same characteristic polynomial as $\m{A}$.
            By \cref{lemma:charpol}, the characteristic polynomial of $B$ is given by
            \begin{align*}
                b_{\ell + 1}  \lambda^{\ell + 1} + b_\ell \lambda^\ell + \ldots + b_1 \lambda + b_0,
            \end{align*}
            where
            \begin{align}
                b_m &=
                (- 1)^m \left(c_{1, 1} \sum_{1 \leq i_1 < \ldots < i_{\ell - m - 1} \leq \ell} M_{i_1 \ldots i_{\ell - m - 1}}\right.
                \\
                & \quad + \sum_{\substack{1 \leq i_1 < \ldots < i_{\ell - m - 1} \leq \ell \\ r \in \left\{i_1, \ldots, i_{\ell - m - 1}\right\}}} c_{1, r} \left(\sum_{s \in \left\{i_1, \ldots, i_{\ell - m - 1}\right\}} c_{s, 1} M_{i_1 \ldots i_{\ell - m - 1}}^{s, r}\right) \notag
                \\
                & \qquad \left. + \sum_{1 \leq i_1 < \ldots < i_{\ell - m} \leq \ell} M_{i_1 \ldots i_{\ell - m}}\right), \quad \text{for } 1 \leq m \leq \ell, \label{b_m}
            \end{align}
            $b_0 = \det\left(\m{B}\right)$, and $b_{\ell + 1} = (- 1)^{\ell + 1}$. Observe that the first two sums in \cref{b_m} are linear combinations of the coefficients $\left\{c_{1, s}\right\}_{s = 1}^{\ell + 1}$ whilst the third one is a constant term depending only on $\m{K}$.
            
            On the other hand, the characteristic polynomial of $\m{A}$ has the same form with different coefficients:
            \begin{align*}
                a_{\ell + 1} \lambda^{\ell + 1} + a_\ell \lambda^\ell + \ldots + a_1 \lambda + a_0,
            \end{align*}
            where $a_p \in \mathbb R$ for $p = 0, 1, \ldots, \ell + 1$.
            
            We need to solve the system of equations given by $a_m = b_m$, for each $m = 0, 1, \ldots, \ell$ (clearly, $a_{\ell + 1} = b_{\ell + 1}$).
            
            Thus, we need to solve a system of equations with the following form:
            \begin{align*}
                \begin{pmatrix}
                    p_{0, 1} & p_{0, 2} & \cdots & p_{0, \ell + 1}
                    \\
                    p_{1, 1} & p_{1, 2} & \cdots & p_{1, \ell + 1}
                    \\
                    \vdots & \vdots & \ddots & \vdots
                    \\
                    p_{\ell, 1} & p_{\ell, 2} & \cdots & p_{\ell, \ell + 1}
                \end{pmatrix} \begin{pmatrix}
                    c_{1, 1}
                    \\
                    c_{1, 2}
                    \\
                    \vdots
                    \\
                    c_{1, \ell + 1}
                \end{pmatrix} = \mbf b,
            \end{align*}
            where, for $0 \leq m \leq \ell$,
            \begin{align*}
                p_{m, 1} &=
                \sum_{1 \leq i_1 < \ldots < i_{\ell - m - 1} \leq \ell} M_{i_1 \ldots i_{\ell - m - 1}},
                \\
                p_{m, r} &=
                \sum_{\substack{1 \leq i_1 < \ldots < i_{\ell - m - 1} \leq \ell \\  s \in \left\{i_1, \ldots, i_{\ell - m - 1}\right\}}} c_{s, 1} M_{i_1 \ldots i_{\ell - m - 1}}^{s, i_r}, \quad \text{for } 2 \leq r \leq \ell + 1, i_r \in \left\{i_1, \ldots, i_{\ell - m - 1}\right\}.
            \end{align*}        
            Note that, by assumption, there exist coefficients $\left\{c_{s, 1}\right\}_{s = 2}^{\ell + 1}$ so that the matrix of coefficients of this system of equations is invertible, which implies the existence of a solution $\left\{c_{1, s}\right\}_{s = 1}^{\ell + 1}$ so that the conclusion holds.
        \end{proof}

    \section{Forms of similarity matrices}\label{app:R_similarity}
        Here, we record the form of the matrices $\m{R}$ in the proof of \cref{lemma:wave}. These all arise by solving the 9 equations given by $\m{R} \m{K} = \m{C_1} \m{R}$, $\m{R} \m{K} = \m{C_2} \m{R}$, or the 16 equations $\m{R} \m{K} = \m{C_3} \m{R}$, leaving the leftmost column as free parameters.
            \subsubsection*{Similarity matrix for $\m{K}$ and $\m{C_1}$:}
            \begin{align*}
                R = \begin{pmatrix}
                    r_{1, 1} & r_{1, 2} & r_{1, 3}
                    \\
                    r_{2, 1} & r_{2, 2} & r_{2, 3}
                    \\
                    r_{3, 1} & r_{3, 2} & r_{3, 3}
                \end{pmatrix},
            \end{align*}
            where
            \begin{align*}
                r_{1, 2} &= - \frac{1}{\alpha^2 + \gamma^2} \left(- 2 \alpha r_{1, 1} + \gamma z^2 r_{1, 1} - 2 \alpha z r_{1, 1} + 2 \gamma z r_{1, 1}\right),
                \\
                r_{1, 3} &= - \frac{1}{\alpha^2 + \gamma^2} \left(- 2 \gamma  r_{1, 1} - \alpha z^2 r_{1, 1} - 2 \alpha z r_{1, 1} - 2 \gamma z r_{1, 1}\right),
                \\
                r_{2, 2} &= - \frac{1}{\alpha^2 + \gamma^2} \left(- \alpha \ell r_{3, 1} + \gamma \ell r_{3, 1} + \gamma \ell y r_{3, 1} - 2 \alpha  r_{2, 1} - \alpha y r_{2, 1}\right.
                \\
                & \quad \left. + \gamma y r_{2, 1} + \gamma y z r_{2, 1} - \alpha z r_{2, 1} + \gamma z r_{2, 1}\right),
                \\
                r_{2, 3} &= - \frac{1}{\alpha^2 + \gamma^2} \left(- \alpha \ell r_{3, 1} - \gamma \ell r_{3, 1} - \alpha \ell y r_{3, 1} - 2 \gamma  r_{2, 1} - \alpha y r_{2, 1}\right.
                \\
                & \quad \left. - \gamma y r_{2, 1} - \alpha y z r_{2, 1} - \alpha z r_{2, 1} - \gamma z r_{2, 1}\right),
                \\
                r_{3, 2} &= - \frac{1}{\alpha^2 + \gamma^2} \left(- 2 \alpha r_{3, 1} - \alpha y r_{3, 1} + \gamma y r_{3, 1} + \gamma y z r_{3, 1} -\alpha z r_{3, 1} + \gamma z r_{3, 1}\right),
                \\
                r_{3, 3} &= - \frac{1}{\alpha^2 + \gamma^2} \left(- 2 \gamma r_{3, 1} - \alpha y r_{3, 1} - \gamma y r_{3, 1} - \alpha y z r_{3, 1} -\alpha z r_{3, 1} - \gamma z r_{3, 1}\right).
            \end{align*}
            \subsubsection*{Similarity matrix for $\m{K}$ and $\m{C_2}$:}
            \begin{align*}
                R = \begin{pmatrix}
                    r_{1, 1} & r_{1, 2} & r_{1, 3}
                    \\
                    r_{2, 1} & r_{2, 2} & r_{2, 3}
                    \\
                    r_{3, 1} & r_{3, 2} & r_{3, 3}
                \end{pmatrix},
            \end{align*}
            where
            \begin{align*}
                r_{1, 2} &= - \frac{1}{\alpha^2 + \gamma^2} \left(- \alpha \ell_2 r_{2, 1} + \gamma \ell_2 r_{2, 1} + \gamma \ell_1 \ell_2 r_{3, 1} + \gamma \ell_2 z r_{2, 1}\right.
                \\
                & \quad \left. - 2 \alpha  r_{1, 1} + \gamma z^2 r_{1, 1} - 2 \alpha z r_{1, 1} + 2 \gamma z r_{1, 1}\right),
                \\
                r_{1, 3} &= - \frac{1}{\alpha^2 + \gamma^2} \left(- \alpha \ell_2 r_{2, 1} -\alpha \ell_1 \ell_2 r_{3, 1} - \gamma \ell_2 r_{2, 1} - \alpha \ell_2 z r_{2, 1}\right.
                \\
                & \quad \left. - 2 \gamma r_{1, 1} - \alpha z^2 r_{1, 1} - 2 \alpha z r_{1, 1} - 2 \gamma z r_{1, 1}\right),
                \\
                r_{2, 2} &= - \frac{1}{\alpha^2 + \gamma^2} \left(- \alpha \ell_1 r_{3, 1} + \gamma \ell_1 r_{3, 1} + \gamma \ell_1 z r_{3, 1}\right.
                \\
                & \quad \left. - 2 \alpha r_{2, 1} + \gamma z^2 r_{2, 1} - 2 \alpha  z r_{2, 1} + 2 \gamma z r_{2, 1}\right),
                \\
                r_{2, 3} &= - \frac{1}{\alpha^2 + \gamma^2} \left(- \alpha \ell_1 r_{3, 1} - \gamma \ell_1 r_{3, 1} - \alpha \ell_1 z r_{3, 1}\right.
                \\
                & \quad \left. - 2 \gamma r_{2, 1} - \alpha z^2 r_{2, 1} - 2 \alpha z r_{2, 1} - 2 \gamma z r_{2, 1}\right),
                \\
                r_{3, 2} &= - \frac{1}{\alpha^2 + \gamma^2} \left(- 2 \alpha r_{3, 1} + \gamma z^2 r_{3, 1} - 2 \alpha z r_{3, 1} + 2 \gamma z r_{3, 1}\right),
                \\
                r_{3, 3} &= - \frac{1}{\alpha^2 + \gamma^2} \left(- 2 \gamma r_{3, 1} - \alpha z^2 r_{3, 1} - 2 \alpha z r_{3, 1} - 2 \gamma z r_{3, 1}\right).
            \end{align*}

        We omit the form of $\m{R}$ in the case of similarity between $\m{K}$ and $\m{C_3}$, but remark that each term in it has a denominator of the form $5 \alpha^2 \left(\delta^2 + \varepsilon^2\right)$, and hence $\m{R}$ is well-defined for $\alpha \neq 0$ and $\delta^2 + \varepsilon^2 \neq 0$.
        
        While the precise forms of these similarity matrices arise in applying the result given in \cref{th:main}, their constructions are a bit tedious and substantially easier to obtain using a computer algebra system. Importantly, we only need such constructions to arrive at the result given in \cref{lemma:wave}, and note that this can be satisfied using other constructions as well.

\end{document}